\definecolor{shadecolor}{rgb}{0.9, 0.9, 0.86}
\definecolor{darkgreen}{rgb}{0.2, 0.5,  0}
\definecolor{darkblue}{rgb}{0.1,0.1,0.45}
\def\&{\vspace{-5pt}&}
\def\Re{\mathrm {Re}\,}
\def\Im{\mathrm {Im}\,}
\tikzset{->-/.style={decoration={
 markings,
 mark=at position #1 with {\arrow{>}}},postaction={decorate}}}
\def \eqref#1{(\ref{#1})}
\def \& {&\hspace{-10pt}}
\renewcommand{\d}{\mathrm d}
\newtheorem{theorem}{Theorem}[section]
\newtheorem{example}[theorem]{Example}
\newtheorem{exercise}[theorem]{Exercise}
\newtheorem{lemma}[theorem]{Lemma}
\newtheorem{remark}[theorem]{Remark}
\newtheorem{proposition}[theorem]{Proposition} 
\newtheorem{corollary}[theorem]{Corollary} 
\newtheorem{definition}[theorem]{Definition}
\newtheorem{assumption}[theorem]{Assumption}
\def\d{{\rm d}}
\def\e{{\rm e}}
\def\bt{\begin{theorem}}
\def\et{\end{theorem}}
\def\bc{\begin{corollary}}
\def\ec{\end{corollary}}
\def\bx{\begin{example}}
\def\ex{\end{example}}
\def\bxr{\begin{exercise}\small}
\def\exr{\end{exercise}}
\def\bl{\begin{lemma}}
\def\el{\end{lemma}}
\def\bd{\begin{definition}}
\def\ed{\end{definition}}
\def\bp{\begin{proposition}}
\def\ep{\end{proposition}}
\def\br{\begin{remark}}
\def\er{\end{remark}}
\def\be{\begin{equation}}
\def\ee{\end{equation}}
\def \Tr {\mathrm{Tr}\,}
\def\&{\hspace{-15pt}&}
\def\bea{\begin{eqnarray}}
\def\eea{\end{eqnarray}}
\def\beas{\begin{eqnarray*}}
\def\eeas{\end{eqnarray*}}
\def\R{{\mathbb R}}
\def\1{{\bf 1}}
\def\i{\mathrm{i}}
\def\valpha{{\vec \alpha}}
\def\va{{\vec a}}
\DeclarePairedDelimiterX\MeijerM[3]{\lparen}{\rparen}%
{\begin{smallmatrix}#1 \\ #2\end{smallmatrix}\delimsize\vert\,#3}
\newcommand\MeijerG[8][]{%
  G^{\,#2,#3}_{#4,#5}\MeijerM[#1]{#6}{#7}{#8}}
\newcommand\MeijerG*[7]{%
  G^{\,#1,#2}_{#3,#4}\MeijerM*{#5}{#6}{#7}}
\begin{document}

\date{}                     

\title{Biorthogonal measures, polymer partition functions, and random matrices}
\author[1]{Mattia Cafasso}
\author[2]{Tom Claeys}
\renewcommand\Affilfont{\small}
\affil[1]{\textit{Univ Angers, SFR MATHSTIC, F-49000 Angers, France;} \texttt{mattia.cafasso@univ-angers.fr}}
\affil[2]{\textit{Institut de Recherche en Math\'ematique et Physique, UCLouvain, Chemin du Cyclotron 2, 1348 Louvain-la-Neuve, Belgium;} \texttt{tom.claeys@uclouvain.be}}
                                  
\maketitle
\begin{abstract}
We develop the study of a particular class of biorthogonal measures, encompassing at the same time several random matrix models and partition functions of polymers. This general framework allows us to characterize the partition functions of the Log Gamma polymer and the mixed polymer in terms of explicit biorthogonal measures, as it was previously done for the homogeneous O'Connell-Yor polymer by Imamura and Sasamoto. In addition, we show that the biorthogonal measures associated to these three polymer models (Log Gamma, O'Connell-Yor, and mixed polymer) converge to random matrix eigenvalue distributions in small temperature limits. We also clarify the connection between different Fredholm determinant representations and explain how our results might be useful for asymptotic analysis and large deviation estimates.
\end{abstract}


\section{Introduction}\label{sec:Intro}

 We study a class of biorthogonal measures on $\mathbb R^N$, characterized by kernels admitting double contour integral representations. Specific measures of this form are known to be eigenvalue distributions of random matrices, for instance in the Gaussian, Laguerre, and Jacobi Unitary Ensembles (GUE, LUE, and JUE) \cite{Forrester, Mehta}, as well as in generalizations of these like the GUE with external source \cite{BrezinHikami, BrezinHikami2}, the LUE with external source \cite{BBP, ElKaroui}, sums of LUE and GUE matrices \cite{ClaeysKuijlaarsWang}, product random matrix ensembles \cite{AkemannIpsenKieburg, AkemannKieburgWei, Forrester2, ForresterLiu, KieburgKuijlaarsStivigny, KuijlaarsStivigny, KuijlaarsZhang}, and Muttalib-Borodin ensembles \cite{BorBiOE}. Another measure of this form characterizes the partition function of the O'Connell-Yor polymer partition function \cite{ImamuraSasamoto1}. The first purpose of this paper is to study general properties of this class of biorthogonal measures. This will allow us to interpret all of the above models as instances of a unified framework, and to identify other models which fit in our scheme. More precisely, we will prove that the Log Gamma polymer \cite{Seppalainen} and the mixed polymer \cite{BCFV} partition functions can also be characterized in terms of explicit biorthogonal measures. 
These characterizations are direct analogues of a result by Imamura and Sasamoto \cite{ImamuraSasamoto1} for the { homogeneous} O'Connell-Yor polymer. Moreover, they can be seen as finite temperature generalizations of the connection between last passage percolation with exponential weights and the LUE \cite{Johansson2, BorodinPeche, DiekerWarren}, and of the connection between Brownian queues and the GUE \cite{Baryshnikov, GravnerTracyWidom} (see also \cite{BaikRains} for a different last passage percolation model). From another perspective, they are  finite $N$ variants of the connection between the Kardar-Parisi-Zhang (KPZ) equation and the Airy point process \cite{ACQ}.
In addition, the biorthogonal measures enable us to clarify the connection between various Fredholm determinant expressions which appeared in the literature before, and to derive a novel representation of these Fredholm determinants, which is possibly more convenient in view of asymptotic analysis and large deviation estimates.

\medskip

A probability measure on $\mathbb R^N$ of the form
\be\label{def:biO}\frac{1}{Z_N}\det\left(f_m(x_k)\right)_{k,m=1}^N
\det\left(g_m({x_k})\right)_{m,k=1}^N
\prod_{k=1}^N\d  x_k,
\ee
for certain sets of functions $f_1,\ldots, f_N$ and $g_1,\ldots, g_N$,
is called a biorthogonal ensemble \cite{BorBiOE}. Observe that there is a lot of freedom in the choices of $f_1,\ldots, f_N$ and $g_1,\ldots, g_N$ for a given biorthogonal ensemble, since one can take linear combinations of columns and rows of the matrices $\left(f_m(x_k)\right)_{k,m=1}^N$ and $\left(g_m(x_k)\right)_{m,k=1}^N$ without affecting the distribution.
Biorthogonal ensembles are special cases of determinantal point processes: for $n=1,\ldots, N$, the $n$-point correlation function 
\be\label{def:rhok}\rho_{n,N}(x_1,\ldots, x_n):=\frac{N!}{(N-n)! Z_N}\int_{\mathbb R^{N-n}}
\det\left(f_m(x_k)\right)_{k,m=1}^N
\det\left(g_m({x_k})\right)_{m,k=1}^N
\prod_{k=n+1}^N\d  x_k
\ee can be expressed as a determinant:
\be\label{def:corrfunction}\rho_{n,N}(x_1,\ldots, x_n)=\det\left(L_N(x_m,x_k)\right)_{m,k=1}^n.\ee
The correlation kernel $L_N$ can be taken of the form
\be\label{def:corrkernel}L_N(x,x')=\sum_{k=1}^NF_k(x)G_k(x'),\ee
where $F_1,\ldots, F_N$ have the same linear span as $f_1,\ldots, f_N$, $G_1,\ldots, G_N$ have the same linear span as $g_1,\ldots, g_N$, and they satisfy the biorthogonality relations
\be\label{def:bio}
\int_{\mathbb R}F_k(x)G_m(x)\d x=\delta_{km},\qquad k,m=1,\ldots, N.
\ee
Average multiplicative statistics in biorthogonal ensembles can be expressed as Fredholm series: for any bounded measurable $\sigma:\mathbb R\to\mathbb C$, we have
\be
\label{eq:avgmult}
\mathbb E \left[ \prod_{k=1}^N(1-\sigma(x_k)) \right]=\sum_{k=0}^N \frac{(-1)^k}{k!}\int_{\mathbb R^k}\det\left(L_N(x_j,x_m)\right)_{j,m=1}^k\prod_{j=1}^k\sigma(x_j)\d x_j.
\ee
The right hand side is the (truncated) Fredholm series associated to the kernel $\sigma(x)L_N(x,x')$, which we will denote henceforth as $\det(1-\sigma L_N)_{L^2(\mathbb R)}$. The subscript $L^2(\mathbb R)$ indicates that we are integrating over $\mathbb R$, however at this point, we do not interpret $\sigma L_N$ yet as a trace-class operator acting on $L^2(\mathbb R)$, and
we use $\det(1-\sigma L_N)_{L^2(\mathbb R)}$ simply as a short-hand notation for the right hand side of \eqref{eq:avgmult}.

\medskip

{\em Orthogonal polynomial ensembles} \cite{Konig} form an important subclass of biorthogonal ensembles, corresponding to
$f_m(x)=x^{m-1}$ and $g_m(x)=x^{m-1}{w(x)}$ for some weight function $w$. Then we can take $F_k=P_{k-1}$ to be the degree $k-1$ normalized orthogonal polynomial on the real line with respect to the weight $w$, and $G_k=P_{k-1}w$. Orthogonal polynomial ensembles arise for instance as eigenvalue distributions of unitarily invariant Hermitian random matrices, see e.g.\ \cite{Deift}.
{\em Polynomial ensembles} \cite{KuijlaarsPol} are more general and correspond to $f_m(x)=x^{m-1}$ with arbitrary $g_m$. Examples of them arise as distributions of singular values of products of random matrices and in Muttalib-Borodin ensembles \cite{BorBiOE}.
{\em P\'olya ensembles} \cite{ForsterKieburgKosters, KieburgZhang} are polynomial ensembles with a specific derivative structure, corresponding to $f_m(x)=x^{m-1}$ and $g_m(x)=\displaystyle\frac{\d^{m-1}}{\d x^{m-1}}g(x)$ for some function $g$.

\medskip

In this paper, we  consider a class of complex-valued measures on $\mathbb R^N$, $N\in \mathbb N$, of the following form
\be\label{def:BiOE-F}
\d\mu_N(\vec{x}) \equiv \d\mu_N(x_1,\ldots, x_N):=\frac{1}{N!}
\det\left(L_N(x_m,x_k)\right)_{m,k=1}^N
\prod_{k=1}^N\d  x_k,
\ee
where the kernel $L_N$ can be represented as a double contour integral of the form
\begin{equation}\label{def:L}
		L_N(x,x') = \frac{1}{(2 \pi \i)^2} \int_{\Sigma_N} \d u \int_{\ell_N} \d v \frac{W_N(v)}{W_N(u)} \frac{{\rm e}^{-vx + ux'}}{v - u}.
	\end{equation}
Here, $W_N$, $\ell_N$, and ${\Sigma_N}$ need to satisfy the following assumptions.
\begin{assumption}\label{assumptions}There exists a vertical strip $\mathcal S_N:=\{z\in\mathbb C:\alpha_N<\Re z<\beta_N\}$ in the complex plane such that the following holds:
\begin{enumerate}\item $W_N:\mathcal S_N\to\mathbb C$ is analytic, not identically zero and such that $W_N(z)=O(|z|^{-\epsilon})$ as $z\to\infty$ in $\mathcal S_N$, for some $\epsilon>0$;
\item $\ell_N$ is a vertical line in $\mathcal S_N$, oriented upwards, i.e.\ $\ell_N=c_N+\i\mathbb R$ with $\alpha_N<c_N<\beta_N$;
\item ${\Sigma_N}$ is a closed positively oriented curve in $\mathcal S_N$ without self-intersections, lying at the left of $\ell_N$, which has $N$ (not necessarily distinct) zeros of $W_N$ in its interior, and none on its image; we denote these zeros as $a_1,\ldots, a_N$.
\end{enumerate}
\end{assumption}
Note that we allow $W_N$ to have other zeros in $\mathcal S_N$ than $a_1,\ldots, a_N$, and that the zeros $a_1,\ldots, a_N$ do not need to be distinct. The expression \eqref{def:L} is independent of the choice of ${\Sigma_N}$ and $\ell_N$, as long as the above assumptions are satisfied.
Under these assumptions, we will show that the complex-valued measures \eqref{def:BiOE-F} are biorthogonal: they can be written in the form \eqref{def:biO}. We will use the term {\em biorthogonal measure} instead of {\em biorthogonal ensemble} to indicate that the measures are in general not probability measures. {All the examples we study, nevertheless, are real-valued, even if not necessarily positive.}
We also define, in analogy with \eqref{def:rhok}, marginals for such measures,
\begin{equation}\label{def:complexcorrelators}
	\rho_{n,N}(x_1,\ldots,x_n) := \frac{1}{(N - n)!}\int_{\mathbb R^{N - n}} \det\Big(L_N(x_m,x_k)\Big)_{m,k=1}^N \prod_{k = n +1}^N \mathrm d x_{k}.
\end{equation} 
Note that we use the same normalization as for the correlation functions given in \eqref{def:rhok}, however we avoid the term correlation functions because the measure $\mu_N$ is not necessarily positive.

\medskip

Our first result contains the essence of the biorthogonal structure of the measures \eqref{def:BiOE-F}.

\begin{theorem}\label{thm:biorth}
Let $W_N, {\Sigma_N}, \ell_N$ satisfy Assumption \ref{assumptions}, and let $\mu_N$ be defined by \eqref{def:BiOE-F}.
\begin{enumerate}
\item The kernel $L_N$ given by \eqref{def:L} 
satisfies the properties
\be\label{eq:reproducing0}\int_{\mathbb R}L_N(x,t)L_N(t,x')\d t=L_N(x,x'),\qquad \int_{\mathbb R}L_N(x,x)\d x=N,\ee
and, with $\rho_{n,N}$ as in \eqref{def:complexcorrelators}, we have
\be\rho_{n,N}(x_1,\ldots, x_n)=\det\Big(L(x_m,x_k)\Big)_{m,k=1}^n,\qquad n=1,\ldots, N.\ee
\item If $a_1, \ldots, a_N$ are distinct, we have
\be\label{def:L1}
L_N(x,x')=\sum_{m=1}^N \e^{a_m x'}\psi_m(\e^{x}),\quad \text{where} \quad
\psi_m(y):=\frac{1}{2\pi\i W_N'(a_m)}\int_{\ell_N}\frac{W_N(v)y^{-v}}{v-a_m}\d v,
\ee
and we have the biorthogonality relations
\be\label{eq:biorth}\int_{\mathbb R}\e^{a_mx}\psi_k(\e^x)\d x=\delta_{mk},\qquad k,m=1,\ldots, N.\ee
{The measure \eqref{def:BiOE-F} is then a biorthogonal measure, given explicitly by}
\be\label{BiOE-distinct}
\d\mu_N(\vec x)=\frac{1}{N!}\det\left(\e^{a_m x_k}\right)_{k,m=1}^N
\det\left(\psi_m(\e^{x_k})\right)_{m,k=1}^N
\prod_{k=1}^N\d  x_k.\ee
\item In the confluent case $a_1=\cdots= a_N=a$, {the measure \eqref{def:BiOE-F} is also biorthogonal and given explicitly by}
\be\label{BiOE-confluent}
\d\mu_N(\vec x)=\frac{1}{Z_N}\Delta(\vec x)
\det\left(\phi_m(\e^{x_k})\right)_{m,k=1}^N
\prod_{k=1}^N\d  x_k,\quad \Delta(\vec x)=\prod_{1\leq j < k\leq N}(x_k-x_j),\ee
where
\be
 Z_N:=\int_{\mathbb R^N}\Delta(\vec x)
\det\left(\phi_m(\e^{x_k})\right)_{m,k=1}^N
\prod_{k=1}^N\d  x_k,
\ee
and 
\be\label{def:phiF}
\phi_m(y):=\frac{y^a}{2\pi\i}\int_{\ell_N}\frac{W_N(v)y^{-v}}{(v-a)^{N-m+1}}\d v.
\ee
\end{enumerate}
\end{theorem}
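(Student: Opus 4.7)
The natural order is to establish Parts 2 and 3 first (the explicit biorthogonal representations) and then deduce Part 1 from them.

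For Part 2, I would first collapse the inner $u$-integral in \eqref{def:L} by residues. Since $v\in\ell_N$ lies outside $\Sigma_N$ and, in the distinct case, $W_N$ has only simple zeros at $a_1,\ldots,a_N$ inside $\Sigma_N$, we obtain
\[
\frac{1}{2\pi\i}\int_{\Sigma_N}\frac{\e^{ux'}\,\d u}{W_N(u)(v-u)}=\sum_{m=1}^{N}\frac{\e^{a_mx'}}{W_N'(a_m)(v-a_m)},
\]
and substituting back into \eqref{def:L} yields $L_N(x,x')=\sum_m \e^{a_mx'}\psi_m(\e^x)$. The biorthogonality \eqref{eq:biorth} needs more care: under $y=\e^x$ the Mellin-type integral $\int_0^\infty y^{a_m-1}\psi_k(y)\,\d y$ diverges at one endpoint, so I would split it as $\int_0^1+\int_1^\infty$ and deform the inner contour $\ell_N$ in $\psi_k$ to a vertical line $\ell^{<}$ (resp. $\ell^{>}$) with $\Re\ell^{<}<\Re a_m$ (resp. $\Re\ell^{>}>\Re a_m$). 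The $y$-integrals evaluate to $1/(a_m-v)$ and $1/(v-a_m)$, and the sum combines into a single difference $\int_{\ell^{>}}-\int_{\ell^{<}}$ of contour integrals of $-W_N(v)/((v-a_k)(v-a_m))$. By residues this equals $2\pi\i$ times the residue at $v=a_m$: for $k\neq m$ the integrand is actually analytic across $v=a_m$ (because $W_N(a_m)=0$), so the residue vanishes; for $k=m$ it equals $-W_N'(a_m)$, producing $\delta_{km}$ after the prefactor $-1/W_N'(a_k)$. The explicit form \eqref{BiOE-distinct} then follows from the matrix factorization $(L_N(x_m,x_k))=(\psi_l(\e^{x_m}))_{m,l}\cdot(\e^{a_lx_k})_{l,k}$.

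For Part 3, the $u$-integral instead produces a single $N$-th order residue at $u=a$. I would write $W_N(u)=(u-a)^Nh(u)$ with $h$ analytic and nonvanishing at $a$, and use the expansion $\frac{1}{v-u}=\sum_{n\geq 0}(u-a)^n/(v-a)^{n+1}$, valid for $\Sigma_N$ chosen sufficiently close to $a$. Only the first $N$ terms contribute to the residue, yielding
\[
\frac{1}{2\pi\i}\int_{\Sigma_N}\frac{\e^{ux'}\,\d u}{W_N(u)(v-u)}=\sum_{n=0}^{N-1}\frac{c_n(x')}{(v-a)^{n+1}},
\]
where each $c_n(x')=\e^{ax'}P_n(x')$ with $P_n$ a polynomial of degree exactly $n$ and explicit leading coefficient. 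Substitution gives $L_N(x,x')=\e^{-ax}\sum_{m=1}^N c_{N-m}(x')\phi_m(\e^x)$, so the matrix $(L_N(x_m,x_k))_{m,k}$ factors as a triple matrix product (diagonal $\times$ $\phi$-matrix $\times$ $c$-matrix) and its determinant factors accordingly. The $\det(c_{N-l}(x_k))_{l,k}$ factor reduces by standard row operations to a nonzero constant times $\prod_k \e^{ax_k}\cdot\Delta(\vec x)$; the $\e^{\pm ax_k}$ factors then cancel across the three determinants, and the remaining overall constant is absorbed into $Z_N$, producing \eqref{BiOE-confluent}.

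For Part 1, the representations above exhibit $L_N$ as a rank-$N$ kernel $L_N(x,y)=\sum_{m=1}^N G_m(x)F_m(y)$ with $\int_{\mathbb R}F_m(t)G_n(t)\,\d t=\delta_{mn}$. The reproducing identity and trace in \eqref{eq:reproducing0} are then immediate:
\[
\int_{\mathbb R} L_N(x,t)L_N(t,x')\,\d t=\sum_{m,k}G_m(x)F_k(x')\delta_{mk}=L_N(x,x'),\qquad \int_{\mathbb R} L_N(x,x)\,\d x=N,
\]
and the determinantal formula for $\rho_{n,N}$ follows from the standard iterative integration lemma for kernels satisfying $K^2=K$ and $\Tr K=N$. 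For the fully general setting of Assumption~\ref{assumptions}, where the $a_m$ may cluster into groups of arbitrary multiplicities, one combines the simple-pole and higher-order residue computations cluster by cluster, or equivalently perturbs the zeros of $W_N$ so they become distinct and passes to the limit by continuity of both sides. The principal obstacle throughout is the biorthogonality step in Part 2: the Mellin integral does not converge, and the splitting-plus-contour-deformation device is essential to produce $\delta_{km}$ from a single residue cancellation.
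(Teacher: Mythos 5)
Your proposal is correct and follows essentially the same route as the paper: residue evaluation of the $u$-integral (simple poles giving \eqref{def:L1} in the distinct case, an order-$N$ pole producing $\e^{ax'}$ times polynomials and hence the Vandermonde in the confluent case), biorthogonality via a Mellin-type contour computation (your splitting-plus-contour-shift argument is just a hands-on proof of the $\mathcal M\circ\mathcal M^{-1}$ identity the paper invokes), and then the reproducing property, trace $N$, and the standard integrate-out lemma, with general configurations of $a_1,\ldots,a_N$ handled by continuity exactly as in the paper. Two harmless slips: the integral $\int_0^\infty y^{a_m-1}\psi_k(y)\,\d y$ actually converges absolutely at both endpoints (by the decay estimates \eqref{eq:aspsi}, obtained by moving $\ell_N$ within $\mathcal S_N$), so the splitting device, while valid, is not forced on you; and in the confluent expansion the polynomial multiplying $1/(v-a)^{n+1}$ has degree $N-1-n$ rather than $n$ (and your signs in the biorthogonality computation need a consistency check), neither of which affects the factorization into a constant times $\prod_k\e^{ax_k}\,\Delta(\vec x)$ or the final conclusion.
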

\begin{remark}\label{rem:smalltmm}
As we will explain in detail in Sections \ref{section:LogGamma}, \ref{section:OY} and \ref{section:mixed}, positive biorthogonal measures of the form \eqref{def:BiOE-F}--\eqref{def:L} occur in a variety of random matrix models. Three choices of $W$ are particularly relevant in this paper: they are given by
\begin{equation}
\begin{array}{rcl}
	W_N^{{\rm LUE+}}(z;{\vec b}, \nu) &=& \displaystyle \frac{\prod_{j = 1}^N (z - {b_j})}{(z - 1)^{N + \nu}},\\
	\\
	W_N^{{\rm GUE+}}(z;\vec a, \tau) &=& \displaystyle {\rm e}^{\tau z^2/2}\prod_{j = 1}^N (z - a_j)\\
	\\
	W_N^{{\rm GLUE+}}(z;{\vec b}, \tau) &=& \displaystyle {\frac{z^N{\rm e}^{\tau z^2/2}}{\prod_{j = 1}^N (z-1+b_j)}}.
\end{array}
\end{equation}
They are associated, respectively, to the biorthogonal ensembles given by the eigenvalues of a LUE (Laguerre Unitary Ensemble) matrix with external source, a GUE (Gaussian Unitary Ensemble) with external source, and a sum of a LUE matrix with external source and a GUE matrix. {As such, the choices of $W$ given above define positive-valued measures}. We anticipate that these three models describe the small temperature limits of three different polymer models: Log Gamma, O'Connell-Yor and O'Connell-Yor with boundary sources. More details about these small temperature limits are given at the end of the introduction.
\end{remark}

\begin{remark}
Observe that the integral expressions for $\psi_m$ and $\phi_m$
are independent of the choice of the vertical line $\ell_N$ in the strip $\mathcal S_N$. 
Recalling the definition of the direct and inverse Mellin transforms,
\be\label{def:Mellin}
\mathcal M[g](v):=\int_0^{\infty}y^{v-1}g(y)\d y,
\qquad\mathcal M^{-1}[G](y):=\frac{1}{2\pi\i}\int_{\ell_N}{G(v)y^{-v}}\d v,\ee
we recognize $\psi_m$ as the inverse Mellin transforms of the function $\frac{W_N(v)}{W_N'(a_m)(v-a_m)}$ and $\phi_m$ as the inverse Mellin transform of $W_N(v)/(v-a)^{N-m+1}$ multiplied by $y^a$.
In exponential variables, 
$\psi_1,\ldots, \psi_N$ and $\phi_1,\ldots, \phi_m$ are Fourier integrals:
\be\label{def:psiF1}
\psi_m(\e^x)=\frac{1}{2\pi\i W_N'(a_m)}\int_{\ell_N}\frac{W_N(v)}{v-a_m}\e^{-v x}\d v=\frac{\e^{-c x}}{2\pi W_N'(a_m)}\int_{\mathbb R}\frac{W_N(c+\i t)}{c+\i t-a_m}\e^{-\i t x}\d t\ee
for $\alpha_N<c<\beta_N$ and $x\in\mathbb R$, and
similarly for $\phi_1,\ldots, \phi_m$.
\end{remark}
\begin{remark}
The confluent form of the biorthogonal measure has the structure of a complex-valued (not necessarily positive) polynomial ensemble, and more specifically of a P\'olya ensemble, since $-\frac{\partial}{\partial x} \phi_m(\e^x) = \phi_{m+1}(\e^x)$. It will become clear from the proof that explicit {biorthogonal} expressions for the measure can be given not only in the completely confluent case where all of the values $a_1,\ldots, a_N$ coincide, but also in cases where some of them occur with multiplicity bigger than one.
\end{remark}
\begin{remark}\label{remark:exp}
In exponential variables $s_j=\e^{x_j}$, $j=1,\ldots, N$, the biorthogonal measure \eqref{BiOE-distinct} becomes
\be\label{BiOE-distinct-exp}
\d\widehat\mu_N(\vec s):=\frac{1}{N!}\det\left(s_k^{a_m-1}\right)_{k,m=1}^N
\det\left(\psi_m(s_k)\right)_{m,k=1}^N
\prod_{k=1}^N\d  s_k,\qquad \vec s\in(0,\infty)^N, 
\ee
with associated kernel 
\begin{equation}\label{def:Lhat}
		\widehat L_N(s,s') = \frac{1}{(2 \pi \i)^2} \int_{\Sigma_N} \d u \int_{\ell_N} \d v \frac{W_N(v)}{W_N(u)} \frac{{s}^{-v-1}s'^{u}}{v - u}.
	\end{equation}
In the special case where $a_m=m$, $m=1,\ldots, N$, the first determinant becomes a Vandermonde determinant, and then we have the polynomial ensemble
\be\label{BiOE-distinct-exp-VdM}
\d\widehat\mu_N(\vec s):=\frac{1}{N!}\Delta(\vec s)\ 
\det\left(\psi_m(s_k)\right)_{m,k=1}^N
\prod_{k=1}^N\d  s_k,\qquad \vec s\in(0,\infty)^N.
\ee
Note that the mechanism to obtain a polynomial ensemble is different here than in the confluent case, as it takes place for equi-spaced $a_1,\ldots, a_N$, and in exponential variables. For specific choices of $W_N$ which are ratios of products of Gamma functions, this measure is positive and describes the distribution of singular values for products of random matrices and Muttalib-Borodin ensembles, as we will explain at the end of Section \ref{section:mixed}.
\end{remark}

The above biorthogonal structure of the measures \eqref{def:BiOE-F} gives us access to determinantal representations for averages of multiplicative statistics of the form $\prod_{k=1}^N(1-\sigma(x_k))$. 
The simplest is an expression in terms of an $N\times N$ determinant. In addition, we have several Fredholm determinant expressions.
Specific Fredholm determinants of these types appear in random matrix theory and in the literature of polymer models and Whittaker processes, see for instance \cite{BCF,BCFV, BorodinGorin}.
In our next result, we will gather all these different determinantal representations for average multiplicative statistics. This will allow us in particular to identify models of a different nature, random matrix ensembles and polymer models, as instances of our class of biorthogonal measures.

We will denote henceforth $\mu_N[\sigma]$ for the average multiplicative statistic associated to a function $\sigma$:
\be\mu_N[\sigma]:=\int_{\mathbb R^N}\prod_{k=1}^N(1-\sigma(x_k))\d\mu_N(x_1,\ldots, x_N).\label{def:avgmultstat}\ee

\medskip

Given a curve $\gamma$ in the complex plane, we recall the definition of the Fredholm determinant of an operator via the Fredholm series associated to its kernel:
\be\label{def:Fredholm}
\det(1-K)_{L^2(\gamma)}=\sum_{k=0}^\infty \frac{(-1)^k}{k!}\int_{\gamma^k}\det\left(K(z_j,z_m)\right)_{j,m=1}^k\prod_{j=1}^k\d z_j.
\ee
The relevant curves for us will be $\gamma=\mathbb R$, $\gamma=(0,+\infty)$, and $\gamma={\Sigma_N}$.
As already mentioned, for now, we write
$\det(1-K)_{L^2(\gamma)}$ whenever \eqref{def:Fredholm} is convergent, and we do not require the associated operator to be trace-class. Later on, we will see that suitable conjugations $h(x)K(x,x')h(x')^{-1}$ of the relevant kernels $K$ give rise to rank $N$ and thus trace-class operators on $L^2(\gamma)$ (in that case the series truncates after the term $k=N$), which will allow us to use classical properties for Fredholm determinants.

\medskip

A first  identity expresses $\mu_N[\sigma]$ in terms of the Fredholm determinant with kernel $L_N$, and holds in general under Assumptions \ref{assumptions}, for bounded and measurable $\sigma$. A second identity requires stronger decay of $W_N$ and sufficiently fast decay of $\sigma$ at $-\infty$, and involves 
the Fredholm determinant with kernel
\be
\label{def:H}
H_N^\sigma(y,y') := \int_{\mathbb R} \sigma(x)\Psi_1(\e^{y+x})\Psi_2(\e^{y'+x})  \d x,\qquad y,y'>0,
\ee
where
\be\label{def:psi12}\Psi_1(s) := \frac{1}{2\pi\i}\int_{{\Sigma_N}} \frac{s^{u}}{W_N(u)}\d u,\qquad \Psi_2(s) := \frac{1}{2 \pi \i}\int_{\ell_N}  s^{-v} W_N(v)\d v=\mathcal M^{-1}[W_N](s).\ee
A third identity holds only for the specific choice $\sigma_t(x)=\frac{1}{1+\e^{-x-t}}$ for $t\in\mathbb R$, requires 
that all zeros $a_1,\ldots, a_N$ are contained in a vertical strip of width $<1$, i.e.\ $a_{\max}-a_{\min}<1$, with
\be\label{def:aminmax}a_{\min}:=\min\{\Re a_1,\ldots, \Re a_N\},\qquad a_{\max}:=\max\{\Re a_1,\ldots, \Re a_N\},\ee
and involves the Fredholm determinant with kernel 
\be\label{def:kernel}
K_{N,t}(u,u') := \frac{1}{(2 \pi \i)^2} \int_{\ell_N} \d v \frac{\pi \e^{t(v-u)}}{\sin \pi(u - v)} \frac{W_N(v)}{W_N(u)}\frac{1}{v - u'},\qquad u,u'\in{\Sigma_N}.
\ee

\begin{theorem}\label{thm:Fredholm}
Let $W_N, {\Sigma_N}, \ell_N$ satisfy Assumption \ref{assumptions}, let $\sigma:\mathbb R\to \mathbb C$ be bounded and measurable, and let $\mu_N$ be defined by \eqref{def:BiOE-F}-\eqref{def:L}.
\begin{enumerate}
\item 
In the case where $a_1,\ldots, a_N$ are distinct, we have, with $\psi_k$ as in \eqref{def:L1},
\[\mu_N[\sigma]=\det\left(\int_{\mathbb R}\left(1-\sigma(x)\right)\e^{a_m x}\psi_k(\e^x)\d x\right)_{m,k=1}^N.\]
If $a_1=\cdots = a_N=a$, then with $\phi_k$ as in \eqref{def:phiF},
\[\mu_N[\sigma]=\frac{\det\left(\displaystyle\int_{\mathbb R}\left(1-\sigma(x)\right)x^{m-1}\phi_k(\e^x)\d x\right)_{m,k=1}^N}{\det\left(\displaystyle\int_{\mathbb R}x^{m-1}\phi_k(\e^x)\d x\right)_{m,k=1}^N}.\]
\item We have the Fredholm series representation
\be\label{eq:Fredholm1}
\mu_N[\sigma]=\det(1-\sigma L_N)_{L^2(\mathbb R)},
\ee
where $L_N$ is given by \eqref{def:L}.
\item If there exists $\epsilon>0$ such that
\[W_N(v)=O(|v|^{-1-\epsilon}),\quad v\to\infty, \, v\in\mathcal S_N,\qquad \sigma(x)=O\left(\e^{(a_{\max}-a_{\min}+\epsilon)x}\right),\quad x\to -\infty,\]
we have the alternative Fredholm series representation
\be\label{eq:Fredholm2}
\mu_N[\sigma]=\det(1-H_N^\sigma)_{L^2(0,\infty)},
\ee
with $H_N^\sigma$ as in \eqref{def:H}.
\item 
If there exists $\epsilon>0$ such that
\[W_N(v)=O(|v|^{-1-\epsilon}),\quad v\to\infty, \, v\in\mathcal S_N,\]
$a_{\max}-a_{\min}<1$
and $\sigma_t(x)=\frac{1}{1+\e^{-x-t}}$ for $t\in\mathbb R$, we have yet another Fredholm determinant representation:
\be\label{eq:Fredholm3}
\mu_N[\sigma_t]=\det(1+K_{N,t})_{L^2({\Sigma_N})},
\ee
with $K_{N,t}$ given by \eqref{def:kernel}.
\end{enumerate}
\end{theorem}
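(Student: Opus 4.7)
Parts (1) and (2) are essentially immediate from Theorem \ref{thm:biorth}. For (1) in the distinct-zeros case, we substitute \eqref{BiOE-distinct} into $\mu_N[\sigma]$ and apply Andreief's identity to the product of the two $N\times N$ determinants against $\prod_k (1-\sigma(x_k))\,dx_k$; the confluent case is identical after writing $\Delta(\vec x)$ as a determinant of monomials and applying the same identity to both numerator and $Z_N$. For (2), Theorem \ref{thm:biorth}.(1) identifies the marginals $\rho_{n,N}$ as principal minors of $L_N$, so expanding $\prod_{k=1}^N(1-\sigma(x_k))$ and integrating against $d\mu_N$ term by term reproduces \eqref{def:Fredholm} for $\sigma L_N$.

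For (3), the key observation is that since $\ell_N$ lies strictly to the right of $\Sigma_N$ we have $\Re(v-u)>0$ throughout \eqref{def:L}, so the identity $\frac{1}{v-u}=\int_0^\infty \e^{-y(v-u)}\,dy$ together with Fubini yields the factorization
\be
L_N(x,x')=\int_0^\infty \Psi_2(\e^{x+y})\,\Psi_1(\e^{x'+y})\,dy.
\ee
Writing $\sigma L_N=AB$ with $A:L^2(0,\infty)\to L^2(\R)$ of kernel $\sigma(x)\Psi_2(\e^{x+y})$ and $B:L^2(\R)\to L^2(0,\infty)$ of kernel $\Psi_1(\e^{x'+y})$, the classical swap $\det(1-AB)=\det(1-BA)$ produces \eqref{eq:Fredholm2}, because $(BA)(y,y')$ is manifestly $H_N^\sigma(y,y')$. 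The decay hypotheses on $W_N$ and $\sigma$ are used here to ensure that $A$ and $B$ are Hilbert--Schmidt and that all interchanges of integration are legitimate.

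Part (4) is the most delicate but uses the same composition-swap trick on a different factorization, $L_N(x,x')=\int_{\Sigma_N}\mathcal A(x,u)\mathcal B(u,x')\,du$ with
\be
\mathcal A(x,u)=\frac{1}{2\pi\i}\int_{\ell_N}\frac{W_N(v)\,\e^{-vx}}{v-u}\,dv,\qquad \mathcal B(u,x')=\frac{\e^{ux'}}{2\pi\i\,W_N(u)}.
\ee
The swap $\det(1-\sigma_t\mathcal A\mathcal B)_{L^2(\R)}=\det(1-\mathcal B\sigma_t\mathcal A)_{L^2(\Sigma_N)}$ moves the $x$-integration inside the double contour, and the beta-type identity
\be
\int_{\R}\frac{\e^{sx}}{1+\e^{-x-t}}\,dx=-\frac{\pi \e^{-st}}{\sin\pi s},\qquad -1<\Re s<0,
\ee
collapses the inner integral, yielding $\mathcal B\sigma_t\mathcal A=-K_{N,t}$ and hence \eqref{eq:Fredholm3}. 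The condition $a_{\max}-a_{\min}<1$ is precisely what permits $\Sigma_N$ and $\ell_N$ to be placed in $\mathcal S_N$ so that $-1<\Re(u-v)<0$ holds uniformly for $u\in\Sigma_N$ and $v\in\ell_N$. The main obstacle throughout parts (3) and (4) will be justifying these operator identities -- Fubini, trace-class membership, and the exchange of integration in (4) where the $\sin\pi(u-v)$ denominator provides only conditional decay along $\ell_N$ -- under the stated decay hypotheses on $W_N$ and $\sigma$.
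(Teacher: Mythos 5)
Your parts (1) and (2) coincide with the paper's proof: Andr\'eief's identity applied to \eqref{BiOE-distinct} and \eqref{BiOE-confluent}, and the binomial expansion of $\prod_k(1-\sigma(x_k))$ combined with the integrating-out identity of Proposition \ref{prop:integratingout}. Part (3) is also the paper's route (the factorization $L_N(x,x')=\int_0^\infty\Psi_2(\e^{x+y})\Psi_1(\e^{x'+y})\,\d y$ coming from $\frac{1}{v-u}=\int_0^\infty\e^{-y(v-u)}\d y$, followed by the swap $\det(1-AB)=\det(1-BA)$), but with one inaccuracy you should be aware of: as you define them, $A$ and $B$ are \emph{not} Hilbert--Schmidt in general. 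The kernel $\Psi_1(\e^{x'+y})$ behaves like $\e^{(a_{\max}+\epsilon)(x'+y)}$ as $x'+y\to+\infty$ and like $\e^{(a_{\min}-\epsilon)(x'+y)}$ as $x'+y\to-\infty$, so it is square integrable on $(0,\infty)\times\mathbb R$ only in the degenerate case $a_{\min}=a_{\max}=0$. The paper's fix is to first conjugate $L_N$ by the explicit weight $h$ of \eqref{def:h} (which leaves every Fredholm series unchanged) and to distribute the factors $\e^{\pm(a_{\max}+\epsilon)y}$, $h(x)^{\pm1}$ between the two kernels; only then are both factors Hilbert--Schmidt and the swap legitimate. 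You flag this as ``the main obstacle'' but do not carry it out; with the weight inserted your argument becomes exactly the paper's Proposition \ref{prop:HL}.

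Part (4) is where you genuinely diverge from the paper, and your route is correct (again modulo the same weight conjugation on the $L^2(\mathbb R)$ side, needed because $\e^{ux'}/W_N(u)$ is not square integrable in $x'$). The paper does not factor $\sigma_t L_N$ through $L^2(\Sigma_N)$; instead it starts from the identity of part (3), factors $K_{N,t}=\mathcal C\mathcal D$ through $L^2(0,\infty)$ using $\frac{1}{v-u'}=\int_0^\infty\e^{-y(v-u')}\d y$, swaps, and uses $\frac{\pi}{\sin\pi s}=\int_{\mathbb R}\frac{\e^{-sx}}{1+\e^{-x}}\d x$ to recognize $\mathcal D\mathcal C=-\widetilde H_N^{\sigma_t}$, so that \eqref{eq:Fredholm3} is deduced \emph{from} \eqref{eq:Fredholm2}. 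You instead go directly from $\sigma_t L_N$ to $K_{N,t}$: writing $L_N=\mathcal A\mathcal B$ with the $u$-integration as the intermediate variable and applying the same sine identity (in the equivalent form you quote, valid for $-1<\Re(u-v)<0$) to the $x$-integral gives $\mathcal B\sigma_t\mathcal A=-K_{N,t}$, hence $\det(1-\sigma_tL_N)=\det(1+K_{N,t})$ in one step, bypassing $H_N^{\sigma}$ entirely. Both arguments hinge on the same integral identity and on $a_{\max}-a_{\min}<1$ to place $\Sigma_N$ and $\ell_N$ so that $0<\Re(v-u)<1$ uniformly; your version is more direct and does not really use the full strength of the decay hypothesis $W_N(v)=O(|v|^{-1-\epsilon})$ (the exponential decay of $1/\sin\pi(u-v)$ along $\ell_N$ already controls the $v$-integral), whereas the paper's version reuses the operators $\Psi_1,\Psi_2$ already built for part (3). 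Either way, the Fubini and Hilbert--Schmidt justifications you defer are routine once the weight $h$ is in place, so the proposal is sound.
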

\begin{remark}
The kernel $H_N^\sigma$ has the structure of a finite-temperature type deformation of the kernel $L_N$. Indeed, $L_N$ is a special case of $H_N^\sigma$, corresponding to the case when $\sigma$ is the indicator function of the interval $[0,+\infty)$.
Such deformations appear frequently, for example, in the study of free fermions at finite temperature \cite{LeDoussalMajumdarSchehr}, the KPZ equation \cite{ACQ}, and in the study of Schur processes with periodic boundary conditions \cite{BeteaBouttier, BorodinCylindrical}.
\end{remark}
\begin{remark}\label{remark:large}
The result of part 4 may at first sight seem too specific to be interesting, but as we will point out later, it is precisely this identity that will allow us to
make the connection with the Log Gamma polymer, since
Fredholm determinants with kernels of the form \eqref{def:kernel} are intimately connected with ($\alpha$-)Whittaker processes and Sklyanin measures, see e.g.\ \cite{BC, BCR}. In particular, a Fredholm determinant of this type is known to be equal to the Laplace transform of the Log Gamma polymer partition function \cite{BCR}. It should be noted that Fredholm determinant expressions like \eqref{eq:Fredholm3}, but with unbounded contours ${\Sigma_N}$ containing an infinite number of zeros of $W_N$, are also frequently encountered in the literature of polymers and Whittaker processes, see e.g.\ \cite{BCF, BCFV} and the discussion in \cite[Section 2]{BCD}. However, such expressions with kernels defined on {\em large contours} do not fit directly in our framework.
\end{remark}

\medskip

Next, we introduce a deformation of the biorthogonal measure \eqref{def:BiOE-F}. Given a bounded measurable function $\sigma:\mathbb R\to\mathbb C$, we define the measures
\be\label{def:BiOE-F-deformed}
\d\nu_{N,t}(\vec x):=\frac{1}{Z_{N,t}}
\det\Big(L_N(x_m,x_k)\Big)_{m,k=1}^N
\prod_{k=1}^N\big(1-\sigma_t(x_k)\big)\d  x_k,\quad t\in\mathbb R,\ \sigma_t(x) := \sigma(x+t),
\ee
where \be\label{def:ZNp}Z_{N,t} := \int_{\mathbb R^N}\det\Big(L_N(x_m,x_k)\big)_{m,k=1}^N
\prod_{k=1}^N\big(1-\sigma_t(x_k)\big)\d  x_k,\ee 
so that $\displaystyle\int_{\mathbb R^N}\d\nu_{N,t}(x_1,\ldots, x_N)=1$, and where we assume that $\sigma, N,t$ are such that $Z_{N,t}\neq 0$. 

For $\sigma(x)=1_{(0,\infty)}(x)$, the deformed measure is simply the original biorthogonal measure, restricted to $(-\infty, -t)^N$ and renormalized or, in other words, the original biorthogonal ensemble conditioned on $\max\{x_1,\ldots, x_N\}\leq -t$, in case the measure is positive. For general $\sigma$, if the biorthogonal measure is positive, it can be seen as a pushed particle system \cite{KrajenbrinkLeDoussal} in which the particles $x_1,\ldots, x_N$ are pushed to regions where $\sigma_t(x)$ is small; it can be constructed via a procedure of marking and conditioning of the original biorthogonal ensemble \cite{ClaeysGlesner}. 

By definition, the deformed measure \eqref{def:BiOE-F-deformed}  is also a biorthogonal measure. However, the associated kernel in general does not admit an explicit double integral form like in \eqref{def:L}, which makes the deformed measure harder to analyze. Nevertheless, the deformed measure encodes $\mu_N[\sigma]$ in a simple way:
as we will see, logarithmic derivatives of average multiplicative statistics $\mu_N[\sigma]$ of the non-deformed measure can be expressed in terms of average linear statistics 
of the deformed measure. Average linear statistics of determinantal measures are much easier to analyze than average multiplicative statistics, since they can be expressed in terms of the one-point functions. Similarly to the one-point function $\rho_{1,N}(x)$ defined by \eqref{def:rhok} and given by \eqref{def:corrfunction}, we define the one-point function of the deformed measure as
\be\label{def:onepoint}\kappa_{N,t}(x_1)=\frac{N}{Z_{N,t}}(1-\sigma_t(x_1))\int_{\mathbb R^{N-1}}\det\left(L_N(x_m,x_k)\right)_{m,k=1}^N
\prod_{k=2}^N(1-\sigma_t(x_k))\d  x_k.\ee
It is then straightforward to verify that
\be\label{def:linstat}
\bigintsss_{\mathbb R^N}\left(\sum_{k=1}^N f(x_k)\right)\d\nu_{N,t}(x_1,\ldots, x_N)=\int_{\mathbb R}f(x)\kappa_{N,t}(x)\d x,
\ee
for any bounded and measurable $f:\mathbb R\to\mathbb C$.
Later on, it will be convenient to interpret
\eqref{def:BiOE-F-deformed} as an absolutely continuous measure on $\mathbb R^N$ with respect to the deformed measure $(1-\sigma_t(x))\d x$ instead of the Lebesgue measure. With respect to this deformed measure, the one-point function becomes
\be\label{def:onepoint2}
\widetilde\kappa_{N,t}(x_1)=\frac{N}{Z_{N,t}}\int_{\mathbb R^{N-1}}\det\big(L_N(x_m,x_k)\big)_{m,k=1}^N
\prod_{k=2}^N(1-\sigma_t(x_k))\d  x_k.
\ee
We have $\widetilde\kappa_{N,t}(x)=\kappa_{N,t}(x)/(1-\sigma_t(x))$ if $\sigma_t(x)\neq 0$, but $\widetilde\kappa_t(x)$ has the advantage that it is also defined when $\sigma_t(x)=1$.

\begin{theorem}\label{thm: deformed}
Let $W_N,{\Sigma_N},\ell_N$ satisfy Assumption \ref{assumptions}, let $\sigma:\mathbb R \to\mathbb C$ be bounded and $C^1$ with bounded derivative, and define $\sigma_t(x) := \sigma(x+t)$. Let $\d\mu_N$ and $\d\nu_{N,t}$ be defined by \eqref{def:BiOE-F} and \eqref{def:BiOE-F-deformed}, {and $\mu_N[\sigma]$ defined by \eqref{def:avgmultstat}}.
If $\mu_N[\sigma_t]\neq 0$, we have the identity
\be\label{eq:logder}
\frac{\d}{\d t}\log\mu_N[\sigma_t]=\int_{\mathbb R}\sigma'(x+t)\widetilde\kappa_{N,t}(x)\d x.
\ee
If $\sigma(x)=\displaystyle\frac{1}{1+\e^{-x}}$, the above identity simplifies to
\be\label{eq:logder2}
\frac{\d}{\d t}\log\mu_N[\sigma]=\int_{\mathbb R}\frac{1}{1+\e^{-x-t}}\kappa_{N,t}(x)\d x.
\ee
\end{theorem}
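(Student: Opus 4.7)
The plan is to differentiate $\mu_N[\sigma_t]$ with respect to $t$ under the integral sign, using the symmetry of the determinantal weight. Writing
\begin{equation*}
\mu_N[\sigma_t]=\frac{1}{N!}\int_{\mathbb R^N}\det\bigl(L_N(x_m,x_k)\bigr)_{m,k=1}^N\prod_{k=1}^N\bigl(1-\sigma(x_k+t)\bigr)\,dx_k=\frac{Z_{N,t}}{N!},
\end{equation*}
the only $t$-dependence is in the product, and the product rule yields
\begin{equation*}
\frac{d}{dt}\prod_{k=1}^N\bigl(1-\sigma(x_k+t)\bigr)=-\sum_{j=1}^N\sigma'(x_j+t)\prod_{k\neq j}\bigl(1-\sigma(x_k+t)\bigr).
\end{equation*}
The determinant $\det(L_N(x_m,x_k))_{m,k=1}^N$ is invariant under any joint permutation of $(x_1,\ldots,x_N)$ (a permutation of the arguments acts by simultaneous row and column permutations of the matrix), so each of the $N$ summands integrates to the same value. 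This collapses the sum to $N$ copies of the $j=1$ term.

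Next, I would recognize the inner $(N-1)$-fold integral, by the definition \eqref{def:onepoint2}, as
\begin{equation*}
\int_{\mathbb R^{N-1}}\det\bigl(L_N(x_m,x_k)\bigr)_{m,k=1}^N\prod_{k=2}^N\bigl(1-\sigma_t(x_k)\bigr)\,dx_k=\frac{Z_{N,t}}{N}\,\widetilde\kappa_{N,t}(x_1).
\end{equation*}
Inserting this and dividing through by $\mu_N[\sigma_t]=Z_{N,t}/N!$ produces the logarithmic-derivative identity \eqref{eq:logder}. The specialization \eqref{eq:logder2} then follows by using the Riccati-type identity $\sigma'=\sigma(1-\sigma)$ characterizing the logistic function $\sigma(y)=(1+e^{-y})^{-1}$. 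Combined with the relation $(1-\sigma_t(x))\widetilde\kappa_{N,t}(x)=\kappa_{N,t}(x)$ obtained by comparing \eqref{def:onepoint} and \eqref{def:onepoint2}, this gives
\begin{equation*}
\sigma'(x+t)\,\widetilde\kappa_{N,t}(x)=\sigma_t(x)\bigl(1-\sigma_t(x)\bigr)\widetilde\kappa_{N,t}(x)=\sigma_t(x)\,\kappa_{N,t}(x)=\frac{\kappa_{N,t}(x)}{1+e^{-x-t}},
\end{equation*}
and integrating over $x$ yields \eqref{eq:logder2}.

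The only delicate step in this plan is justifying the interchange of $d/dt$ with the $\mathbb R^N$ integration. Since $\mu_N$ is in general signed (indeed complex-valued), a monotone-convergence argument is unavailable and one must appeal to dominated convergence. The hypotheses that $\sigma$ is bounded and $C^1$ with bounded derivative supply the uniform majorants $|1-\sigma(\cdot+t)|\leq 1+\|\sigma\|_\infty$ and $|\sigma'(\cdot+t)|\leq\|\sigma'\|_\infty$ valid for all $t$; combined with the absolute integrability of $\det(L_N(x_m,x_k))_{m,k=1}^N$ on $\mathbb R^N$ that is implicit in the definitions of $\mu_N[\sigma_t]$ and $Z_{N,t}$, this furnishes an integrable dominating function and legitimizes the exchange of derivative and integral. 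Beyond that single analytic point, the argument is purely algebraic: product rule, symmetry of the determinant, and matching against the definition of $\widetilde\kappa_{N,t}$.
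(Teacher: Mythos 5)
Your route is genuinely different from the paper's. The paper first realizes $\mu_N[\sigma_t]$ as the Fredholm determinant $\det(1-\sigma_t\mathcal L_N)_{L^2(\mathbb R)}$ of a rank-$N$ trace-class operator, applies Jacobi's variational formula to get $\frac{\d}{\d t}\log\mu_N[\sigma_t]=-\mathrm{Tr}\,\big(\sigma_t'\mathcal L_N(1-\sigma_t\mathcal L_N)^{-1}\big)$, and then identifies the kernel $M_t$ of $\mathcal L_N(1-\sigma_t\mathcal L_N)^{-1}$ with the correlation kernel of the deformed measure $\nu_{N,t}$ by showing that both define the projection with range $\mathcal F_N$ and kernel $\mathcal G_N^{\perp}$ in $L^2\big((1-\sigma_t(x))\d x\big)$; this is carried out for distinct $a_1,\ldots,a_N$ and extended by continuity. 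Your argument — differentiation under the integral, the product rule, invariance of $\det\big(L_N(x_m,x_k)\big)$ under a simultaneous permutation of the variables, and matching the inner $(N-1)$-fold integral with \eqref{def:onepoint2} — bypasses the operator machinery entirely, works directly for arbitrary (possibly confluent) $a_1,\ldots,a_N$, and only needs an integrable majorant to justify the interchange. That last point is fine: since $\mu_N[\sigma_t]$ and $Z_{N,t}$ are Lebesgue integrals of a complex-valued integrand, absolute integrability of $\det\big(L_N(x_m,x_k)\big)$ is indeed available (and can be substantiated from the decay estimates \eqref{eq:aspsi} for the $\psi_m$), and together with $\|\sigma\|_\infty,\|\sigma'\|_\infty<\infty$ this dominates the difference quotients. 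The derivation of \eqref{eq:logder2} from \eqref{eq:logder} via $\sigma'=\sigma(1-\sigma)$ and $(1-\sigma_t)\widetilde\kappa_{N,t}=\kappa_{N,t}$ is also correct.

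The one genuine problem is a sign that silently disappears at the end. Your product-rule display carries a minus sign, and if you push the computation through you obtain
\begin{equation*}
\frac{\d}{\d t}\log\mu_N[\sigma_t]=-\int_{\mathbb R}\sigma'(x+t)\,\widetilde\kappa_{N,t}(x)\,\d x,
\end{equation*}
i.e.\ the negative of \eqref{eq:logder} as printed (and correspondingly $-\int\sigma_t\kappa_{N,t}$ in place of \eqref{eq:logder2}); the claim that this "produces \eqref{eq:logder}" discards that sign without justification. The minus sign is not an artifact of your method: for a positive measure and $\sigma_t=1_{(-t,\infty)}$ one has $\mu_N[\sigma_t]=\mathbb P\big(\max_k x_k\leq -t\big)$, which is nonincreasing in $t$, while $\widetilde\kappa_{N,t}\geq 0$, so the logarithmic derivative must be $\leq 0$, consistent with your computation and not with \eqref{eq:logder}--\eqref{eq:logder3} as stated. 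In fact the paper's own proof contains the same slip: Jacobi's formula correctly yields $-\mathrm{Tr}\,\big(\sigma_t'\mathcal L_N(1-\sigma_t\mathcal L_N)^{-1}\big)$, but the final displayed equation of Section \ref{section:deformations} equates this to $+\int_{\mathbb R}\sigma_t'(x)M_t(x,x)\d x$, dropping the minus. So your derivation, carried out honestly, proves the corrected identity; as written, you should either keep the minus sign and flag the discrepancy with the stated theorem, or explain where it is supposed to vanish (it does not).
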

\begin{remark}
We believe that the last identity \eqref{eq:logder2} will be useful to study the asymptotic behavior of average multiplicative statistics $\mu_N[\sigma]$ as $N\to\infty$. Via \eqref{eq:logder2}, these asymptotics are encoded simply in the one-point function $\kappa_{N,t}(x)$ of the deformed biorthogonal measure. Even if the deformed biorthogonal measure is not particularly simple, there is a variety of techniques available to study one-point functions, like Coulomb gas techniques and characterizations via equilibrium problems. For specific choices of $W_N$, see Sections \ref{section:LogGamma}, \ref{section:OY} and \ref{section:mixed}, the above result implies that the one-point function $\kappa_{N,t}$ encodes tail asymptotics and large deviations of polymer partition functions.
The condition that $\sigma$ is $C^1$ is convenient for the proof of \eqref{eq:logder}, but it is not a necessary condition, and one can generalize the identity to piecewise $C^1$-functions. For instance, for $\sigma_t=1_{(-t,+\infty)}$, the identity \eqref{eq:logder} continues to hold if we replace $\sigma'$ by a Dirac $\delta$-function, such that
 \be\label{eq:logder3}
\frac{\d}{\d t}\log\mu_N[1_{(-t,\infty)}]=\widetilde\kappa_{N,t}(-t).
\ee
\end{remark}

\paragraph{Outline.} The rest of this paper is organized as follows. In Section \ref{section:biorthogonal}, we will study the biorthogonal structure of the measures $\mu_N$ and prove Theorem \ref{thm:biorth}. In Section \ref{section:det}, we will establish the determinantal expressions for the multiplicative statistics $\mu_N[\sigma]$, and prove Theorem \ref{thm:Fredholm}. In Section \ref{section:deformations}, we will  study the deformed biorthogonal measures $\nu_{N,t}$ and prove Theorem \ref{thm: deformed}. Finally, Sections \ref{section:LogGamma}, \ref{section:OY}, and \ref{section:mixed} are devoted to applications in polymer and random matrix models. There, we will study the biorthogonal measures \eqref{def:BiOE-F}--\eqref{def:L} associated to functions of the form
\begin{equation}
\begin{array}{rcl}
	\displaystyle W_{n,N}^{{\rm Log}\,\Gamma}(z;\vec\alpha,\vec a)&=& \displaystyle\frac{\prod_{j=1}^n\Gamma(\alpha_j-z)}{\prod_{k=1}^N\Gamma(z-a_k)},\\
	\\
	\displaystyle W_N^{{\rm OY}}(z;\vec a,\tau) &=& \displaystyle\frac{{\rm e}^{\frac{\tau z^2}2}}{\prod_{k = 1}^N \Gamma(z - a_k)},\\
	\\
	\displaystyle W_{N}^{{\rm Mixed}}(z;\vec\alpha,\vec a,\tau) &=&\displaystyle {\rm e}^{\tau z^2/2}\frac{\prod_{j=1}^N\Gamma(\alpha_j-z)}{\prod_{k=1}^N\Gamma(z-a_k)}
\end{array}
\end{equation}
and { prove that the associated measures are real-valued (but not necessarily positive) and characterize, respectively, the partition functions of the Log Gamma, the O'Connell-Yor, and the mixed polymer, see Corollaries \ref{finalcor:LogGamma}, \ref{finalcor:OY}, and \ref{finalcor:mixed}}. Moreover, we also prove that in a certain ``small temperature'' regime, these biorthogonal measures degenerate to the ones described in Remark \ref{rem:smalltmm} and associated to matrix models, see Propositions \ref{cor:LogGamma}, \ref{cor:OY} and \ref{cor:mixed}. For the Log Gamma and O'Connell-Yor polymers, this is not surprising, since the small temperature degeneration of the Log Gamma polymer is a last passage percolation model well-known to be characterized in terms of the LUE with external source \cite{Johansson2, BorodinPeche, DiekerWarren}, while the small temperature degeneration of the O'Connell-Yor polymer is a model of Brownian queues well-known to be characterized by the GUE with external source \cite{Baryshnikov,GravnerTracyWidom}. Nevertheless, it reveals how the zero temperature limit takes place on the level of the biorthogonal measures. For the mixed polymer, to the best of our knowledge, the degeneration to the matrix model given by the sum of LUE with external source and GUE is completely new.

\section{Biorthogonal structure of the measures}\label{section:biorthogonal}

We will suppose throughout this section that $W_N,{\Sigma_N},\ell_N$ satisfy Assumption \ref{assumptions}, and that $L_N$ is defined by \eqref{def:L}. The goal of this section is to prove Theorem \ref{thm:biorth}.

\medskip

To start, we focus on the situation where the zeros $a_1,\ldots, a_N$ of $W_N$ inside ${\Sigma_N}$ are all distinct. Then, we prove the following.
\begin{proposition}\label{prop:biorth}
If $a_1,\ldots, a_N$ are all distinct, we have
\be\label{eq:BiOL}
	L_N(x,x')=\sum_{m=1}^N\e^{a_mx'}\psi_m(\e^{x}),
\ee
with $\psi_m$ as in \eqref{def:psiF1}, and we have the orthogonality relation \eqref{eq:biorth}. Consequently, $L_N$ has the reproducing property
\be\label{eq:reproducing}\int_{\mathbb R}L_N(x,t)L_N(t,x')\d t=L_N(x,x').\ee
\end{proposition}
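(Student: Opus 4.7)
The plan is to proceed in three steps. \emph{First}, I would evaluate the inner $u$-integral in the representation \eqref{def:L} of $L_N$ by residues. The closed curve $\Sigma_N$ encloses exactly the zeros $a_1,\ldots,a_N$ of $W_N$, which under the distinctness assumption are simple; moreover $\ell_N$ lies strictly to the right of $\Sigma_N$, so every $v\in\ell_N$ sits outside $\Sigma_N$ and the factor $1/(v-u)$ is holomorphic on and inside $\Sigma_N$. Consequently the only singularities of $u\mapsto \e^{ux'}/\bigl(W_N(u)(v-u)\bigr)$ inside $\Sigma_N$ are simple poles at $u=a_m$ with residue $\e^{a_m x'}/W_N'(a_m)$, and summing them gives
\[
\frac{1}{2\pi\i}\int_{\Sigma_N}\frac{\e^{ux'}\,\d u}{W_N(u)(v-u)}=\sum_{m=1}^N\frac{\e^{a_m x'}}{W_N'(a_m)(v-a_m)}.
\]
Inserting this back into \eqref{def:L} yields precisely $L_N(x,x')=\sum_m \e^{a_m x'}\psi_m(\e^x)$, with $\psi_m$ as in \eqref{def:psiF1}.

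\emph{Second}, I would establish the biorthogonality by reading $\psi_k$ as an inverse Mellin transform. The function $G_k(v):=W_N(v)/\bigl(W_N'(a_k)(v-a_k)\bigr)$ has only a removable singularity at $v=a_k$ (since $W_N(a_k)=0$), hence is analytic on the whole strip $\mathcal{S}_N$ with the same decay as $W_N$. The defining integral \eqref{def:psiF1} identifies $\psi_k=\mathcal{M}^{-1}[G_k]$, so by Mellin inversion $\mathcal{M}[\psi_k](v)=G_k(v)$ throughout $\mathcal{S}_N$. After the substitution $y=\e^x$, the biorthogonality integral reads $\int_0^\infty y^{a_m-1}\psi_k(y)\,\d y=\mathcal{M}[\psi_k](a_m)$, and since $a_m\in\mathcal{S}_N$ this equals $G_k(a_m)=W_N(a_m)/\bigl(W_N'(a_k)(a_m-a_k)\bigr)$, which vanishes for $m\neq k$ (because $W_N(a_m)=0$) and equals $1$ for $m=k$ (by L'H\^opital, using the simple zero of $W_N$ at $a_k$). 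The Mellin step can be made rigorous by splitting the $x$-integral at $0$, applying Fubini separately to each half-line (where the double integral is absolutely convergent once one uses the $O(|v|^{-\epsilon})$ decay of $W_N$), and for the $(-\infty,0)$ piece first deforming $\ell_N$ to a vertical line to the \emph{left} of $a_m$ inside $\mathcal{S}_N$; the deformation contributes a $2\pi\i$ residue at $v=a_m$ which accounts exactly for $\delta_{mk}$.

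\emph{Finally}, the reproducing property \eqref{eq:reproducing} is immediate from biorthogonality: substituting $L_N(x,t)=\sum_m\e^{a_m t}\psi_m(\e^x)$ and $L_N(t,x')=\sum_n\e^{a_n x'}\psi_n(\e^t)$ into $\int_{\mathbb R}L_N(x,t)L_N(t,x')\,\d t$, interchanging the finite sums with the integral, and applying \eqref{eq:biorth} collapses the double sum to $\sum_m\e^{a_m x'}\psi_m(\e^x)=L_N(x,x')$. I expect the only real obstacle to lie in the second step: justifying the Mellin inversion / contour deformation rigorously requires tracking absolute integrability of the double integral on each half-line, using the polynomial decay of $W_N$ in $\mathcal{S}_N$ together with the liberty to slide $\ell_N$ anywhere inside $\mathcal{S}_N$. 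The remaining steps are essentially clean residue calculus.
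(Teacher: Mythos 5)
Your proposal is correct and follows essentially the same route as the paper: evaluate the $u$-integral by residues to get $L_N(x,x')=\sum_m \e^{a_m x'}\psi_m(\e^x)$, obtain the biorthogonality via the Mellin transform identity $\mathcal M[\psi_k](a_m)=\frac{W_N(a_m)}{W_N'(a_k)(a_m-a_k)}$ (justified by splitting the $x$-integral and Fubini, using the decay of $W_N$), and then deduce the reproducing property directly from \eqref{eq:BiOL} and \eqref{eq:biorth}. The only small imprecision is in your parenthetical: deforming $\ell_N$ inside $\psi_k$ picks up no residue (the singularity at $a_k$ is removable and $a_m$ is not a pole of that integrand); the $2\pi\i\,W_N'(a_k)\,\delta_{mk}$ residue at $v=a_m$ appears only when you recombine the two half-line contributions after the $x$-integration has introduced the extra factor $\frac{1}{v-a_m}$, and the net bookkeeping you describe is nevertheless the correct one.
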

\begin{proof}
We can evaluate the $u$-integral in \eqref{def:L} using the residue theorem, and in this way, we immediately obtain \eqref{eq:BiOL}.

\medskip

Next, since
\[\psi_m(\e^x)=\frac{1}{2\pi\i W_N'(a_m)}\int_{\ell_N}\frac{W_N(v)\e^{-xv}}{v-a_m}\d v,\]
for any vertical line $\ell_N$ (avoiding $a_m$) in the strip $\mathcal S_N=\{z\in\mathbb C:\alpha_N<\Re z<\beta_N\}$, and since $W_N(z)=O(|z|^{-\epsilon})$ as $z\to\infty$,
it is easy to verify that, for any $\delta>0$,
\be\label{eq:aspsi}\psi_m(\e^x)=O\left(\e^{-(\alpha_N+\delta)x}\right)\quad\mbox{as $x\to -\infty$},\qquad \psi_m(\e^x)=O\left(\e^{-(\beta_N-\delta)x}\right)\quad\mbox{as $x\to +\infty$}.\ee
The integrals $\displaystyle\int_{-\infty}^{+\infty}\e^{a_k x}\psi_m(\e^{x})\d x$, $k=1,\ldots, N$ are thus well-defined, since $a_1,\ldots,a_N\in\mathcal S_N$.

\medskip

To obtain the biorthogonality relations, recalling the direct and inverse Mellin transform from \eqref{def:Mellin}, we observe that
$$
	\int_{-\infty}^{+\infty}\e^{a_k x}\psi_m(\e^{x})\d x = \int_{0}^\infty s^{a_k - 1}\psi_m(s) \d s = \frac{1}{W_N'(a_k)}\left(\mathcal M \circ\mathcal M^{-1} \right)\left[\frac{W_N(\cdot)}{\cdot - a_m}\right](a_k).
$$
Since $W_N$ has simple zeros at the points $a_1,\ldots, a_N$, the above expression is $0$ for $k\neq m$, and to $1$ for $k=m$.

\medskip

Finally, it is straightforward to deduce the reproducing property \eqref{eq:reproducing} from \eqref{eq:BiOL} and \eqref{eq:biorth}.
\end{proof}

As a consequence of \eqref{eq:BiOL}, using the multiplicativity of the determinant, we can rewrite
\eqref{def:BiOE-F} as
\begin{align}\d\mu_N(\vec x)&=\frac{1}{N!}\det\left(\sum_{j=1}^N\e^{a_jx_k}\psi_j(\e^{x_m})\right)_{m,k=1}^N\prod_{k=1}^N\d x_k\nonumber\\
&=\frac{1}{N!}\det\left(\e^{a_jx_k}\right)_{k,j=1}^N\det\left(\psi_j(\e^{x_m})\right)_{j,m=1}^N\prod_{k=1}^N\d x_k,\label{eq:biorthproduct}\end{align}
which is \eqref{BiOE-distinct}, such that part 2 of Theorem \ref{thm:biorth} is proved.

\begin{proposition}\label{prop:projection}
For arbitrary $a_1,\ldots, a_N$, $L_N$ satisfies the reproducing property \eqref{eq:reproducing} and we have $\displaystyle\int_{\mathbb R}L_N(x,x)\d x=N$.
\end{proposition}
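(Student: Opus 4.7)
The strategy is a perturbation argument reducing to the distinct case of Proposition~\ref{prop:biorth}. For small $\epsilon > 0$, I would construct $W_N^\epsilon$ having $N$ distinct simple zeros inside $\Sigma_N$ and converging to $W_N$: if $a$ is a zero of $W_N$ of order $k$ inside $\Sigma_N$, multiply $W_N$ by the rational factor $\prod_{l=0}^{k-1}(z-a-l\epsilon)/(z-a)^k$, whose pole at $z=a$ is cancelled by the zero of $W_N$. The resulting $W_N^\epsilon$ is analytic in $\mathcal S_N$ with the same decay at infinity as $W_N$, satisfies Assumption~\ref{assumptions} with the same data $(\mathcal S_N,\Sigma_N,\ell_N)$ for $\epsilon$ small enough, and tends to $W_N$ locally uniformly. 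Applying Proposition~\ref{prop:biorth} to the corresponding kernel $L_N^\epsilon$ gives both the reproducing identity $\int_{\mathbb R}L_N^\epsilon(x,t)L_N^\epsilon(t,x')\,dt = L_N^\epsilon(x,x')$ and, by summing the biorthogonality relations \eqref{eq:biorth} for $L_N^\epsilon$ on the diagonal, $\int_{\mathbb R}L_N^\epsilon(x,x)\,dx = N$.

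Pointwise convergence $L_N^\epsilon(x,x')\to L_N(x,x')$ as $\epsilon\to 0^+$ is immediate from \eqref{def:L}, since the integrand converges uniformly on the compact $\Sigma_N$ and is dominated along $\ell_N$ uniformly in $\epsilon$. To pass the limit inside the $t$- and $x$-integrals, I would combine the factorization
\[L_N^\epsilon(x,x')=\int_0^{\infty}\Psi_2^\epsilon(\e^{x+s})\Psi_1^\epsilon(\e^{x'+s})\,ds,\]
obtained from $(v-u)^{-1}=\int_0^{\infty}\e^{-s(v-u)}\,ds$ valid since $\Re(v-u)>0$ on the contours, with the dual representation
\[L_N^\epsilon(x,x') = -\int_0^{\infty}\Psi_2^\epsilon(\e^{x-s})\Psi_1^\epsilon(\e^{x'-s})\,ds\]
obtained by deforming $\ell_N$ to the left of $\Sigma_N$ (the resulting single-integral term $\frac{1}{2\pi\i}\int_{\Sigma_N}\e^{u(x'-x)}\,du$ vanishes by Cauchy's theorem because the integrand is entire in $u$). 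Uniform bounds $|\Psi_2^\epsilon(\e^y)|\leq C_c\e^{-cy}$ for any $c\in(\alpha_N,\beta_N)$, and $|\Psi_1^\epsilon(\e^y)|\leq C(\e^{(a_{\min}-\delta)y}+\e^{(a_{\max}+\delta)y})$, then produce the needed integrable dominators once the first factorization is applied in the regime of large positive arguments and the second in the regime of large negative arguments.

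The main difficulty is exactly this construction of an integrable dominator that is valid on all of $\mathbb R$ and uniform in $\epsilon$: neither factorization alone is integrable in both directions, because the decay rate of $\Psi_2^\epsilon$ depends on the vertical line chosen within $\mathcal S_N$ and no single choice is optimal at both $\pm\infty$. Gluing the two factorizations, each with its own well-chosen $c\in(\alpha_N,\beta_N)$ on the appropriate side of the perturbed zeros $a_j(\epsilon)$, overcomes this obstacle; dominated convergence then transfers the two identities from $L_N^\epsilon$ to $L_N$, yielding the proposition for arbitrary $a_1,\ldots,a_N$.
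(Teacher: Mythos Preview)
Your approach is essentially the paper's: reduce to the distinct-zero case by perturbation and pass to the limit by continuity. The paper factors $W_N(z)=W_0(z)\prod_{m=1}^N(z-a_m)$, varies $\vec a$ directly inside $\Sigma_N$, and simply asserts that both sides of \eqref{eq:reproducing} depend continuously on $\vec a$ without spelling out a dominator; your rational-factor construction of $W_N^\epsilon$ is equivalent, and your dual-factorization argument is an attempt to supply the dominated-convergence detail the paper omits. One caveat: your bounds via $\Psi_2^\epsilon$ presume $W_N$ is absolutely integrable on vertical lines, which Assumption~\ref{assumptions} alone does not guarantee (it only gives $O(|z|^{-\epsilon})$, and $\Psi_2$ is used in the paper only under the stronger hypothesis $W_N(v)=O(|v|^{-1-\epsilon})$); the uniform-in-$\epsilon$ bound on $L_N^\epsilon$ follows more safely straight from the double contour integral \eqref{def:L}, since $W_N(v)/(v-u)$ is already $O(|v|^{-1-\epsilon})$ on $\ell_N$, and this combined with your contour-shift of $\ell_N$ to the left of $\Sigma_N$ gives the required two-sided exponential decay in the integration variable.
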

\begin{proof}
The reproducing property is already proved in the case where all $a_1,\ldots, a_N$ are distinct.
	If not all $a_1,\ldots, a_N$ are distinct, write
	\[W_N(z)=W_0(z)\prod_{m=1}^N(z-a_m).\] For the sake of clarity, let us temporarily make explicit the dependence of $L_N$ on the parameters $a_1,\ldots,a_N$ in our notations, and write
	\[L_{N}(x,x';\vec{a}) \equiv L_N(x,x') = \frac{1}{(2 \pi \i)^2} \int_{\Sigma_N} \d u \int_{\ell_N} \d v \frac{W_0(v)\prod_{m=1}^N(v-a_m)}{W_0(u)\prod_{m=1}^N(u-a_m)} \frac{{\rm e}^{-vx + ux'}}{v - u}.\]
	Given $W_0$ and ${\Sigma_N}$, the kernel $L_N$ is continuous in $\vec a$, as long as $a_1,\ldots, a_N$ remain inside ${\Sigma_N}$. We already know that $L_{N}$ has the reproducing property if $a_1,\ldots, a_N$ are distinct:  
	\be\label{eq:reproducing2}\int_{\mathbb R}L_{N}(x,t;\vec a)L_{N}(t,x';\vec a)\d t=L_{N}(x,x';\vec a).\ee
Both sides of this identity are continuous in $\vec a$, thus the identity continues to hold when $a_1,\ldots, a_N$ are not distinct.
Similarly, by the biorthogonality relation \eqref{eq:biorth} and by \eqref{eq:BiOL}, we have that
$$\int_{\mathbb R}L_N(x,x;\vec a)\d x=N$$ for $a_1,\ldots, a_N$ distinct, and by continuity this continues to hold
in general.
\end{proof}

Our next result is a determinant expression for the marginals $\rho_{n,N}$ defined in \eqref{def:complexcorrelators}. It follows from a standard integrate-out trick, we record the proof here for the convenience of the reader, and because it is found in many classical references for Hermitian kernels only.

\begin{proposition}\label{prop:integratingout}
	The marginals $\rho_{n,N}$ of the measure $\mu_N$ defined by \eqref{def:BiOE-F}, satisfy the determinantal identity 
	\be
		\rho_{n,N}(x_1,\ldots, x_n)=
		\det\Big(L_N(x_j,x_k)\Big)_{j,k=1}^n, \quad n = 1, \ldots, N.
	\ee
\end{proposition}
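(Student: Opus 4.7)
The plan is to reduce the claim to the standard ``integrate out one variable'' identity
\be
\int_{\mathbb R} \det\Big(L_N(x_j,x_k)\Big)_{j,k=1}^{n+1} \d x_{n+1} = (N-n)\, \det\Big(L_N(x_j,x_k)\Big)_{j,k=1}^{n}, \qquad n=0,1,\ldots,N-1,
\ee
and then to iterate it $N-n$ times starting from the definition \eqref{def:complexcorrelators}. The factors produced along the iteration multiply to $1\cdot 2\cdots (N-n)=(N-n)!$, which cancels exactly against the prefactor $1/(N-n)!$ in \eqref{def:complexcorrelators}, leaving precisely $\det\bigl(L_N(x_j,x_k)\bigr)_{j,k=1}^n$ as claimed.

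To prove the one-variable identity, I would expand the $(n+1)\times(n+1)$ determinant via the Leibniz formula and group the permutations $\sigma\in S_{n+1}$ according to the permutation $\tau\in S_n$ they induce once $n+1$ is integrated out. Only two families of $\sigma$ contribute to each $\tau$: (a) the $\sigma$ with $\sigma(n+1)=n+1$ and $\sigma|_{\{1,\ldots,n\}}=\tau$, satisfying $\mathrm{sgn}(\sigma)=\mathrm{sgn}(\tau)$; and (b), for each $j\in\{1,\ldots,n\}$, the $\sigma$ with $\sigma(j)=n+1$, $\sigma(n+1)=\tau(j)$, and $\sigma(i)=\tau(i)$ otherwise, obtained from $\tau$ by splicing $n+1$ into the cycle through $j$. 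Lengthening a cycle by one element flips its sign, so $\mathrm{sgn}(\sigma)=-\mathrm{sgn}(\tau)$ in family (b). The two properties proved in Proposition \ref{prop:projection}, namely $\int_{\mathbb R} L_N(x,x)\,\d x=N$ (for family (a)) and the reproducing property $\int_{\mathbb R} L_N(x_j,t)L_N(t,x_{\tau(j)})\,\d t=L_N(x_j,x_{\tau(j)})$ (for family (b)), collapse every integrated summand to $\mathrm{sgn}(\tau)\prod_{i=1}^n L_N(x_i,x_{\tau(i)})$ multiplied by either $N$ (once) or $-1$ ($n$ times). Summing these contributions gives the factor $N-n$, and summing over $\tau\in S_n$ reconstructs the $n\times n$ determinant on the right-hand side.

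The only real point of care is the validity of the Fubini-type interchange, i.e.\ convergence of the relevant integrals. When $a_1,\ldots,a_N$ are distinct this is immediate from $L_N(x,x')=\sum_{m=1}^N \e^{a_m x'}\psi_m(\e^x)$ proved in \eqref{eq:BiOL}, combined with the decay estimates \eqref{eq:aspsi}: both ends of the integration over any one variable are controlled by exponentials whose rates lie in $\mathcal S_N$, so each integrated monomial in the Leibniz expansion is absolutely integrable. For the general case where $a_1,\ldots,a_N$ need not be distinct, I would conclude by the same continuity argument used in Proposition \ref{prop:projection}: both sides of the target identity depend continuously on $\vec a$ as long as the $a_j$ stay in the interior of ${\Sigma_N}$, so the identity in the generic case extends to the confluent one by taking limits. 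The only mild obstacle is the bookkeeping of the cycle-sign count; the algebraic core is the standard determinantal manipulation that relies solely on the two properties of $L_N$ and is insensitive to the fact that the kernel is neither Hermitian nor necessarily real-valued here.
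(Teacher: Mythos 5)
Your proposal is correct and follows essentially the same route as the paper: both reduce the claim to the single-variable identity $\int_{\mathbb R}\det\bigl(L_N(x_j,x_k)\bigr)_{j,k=1}^{m}\d x_m=(N-m+1)\det\bigl(L_N(x_j,x_k)\bigr)_{j,k=1}^{m-1}$, proved using only the reproducing property and $\int_{\mathbb R}L_N(x,x)\d x=N$ from Proposition \ref{prop:projection}, and then iterate so that the accumulated factor $(N-n)!$ cancels the prefactor in \eqref{def:complexcorrelators}. The only (immaterial) difference is that you verify the one-step identity via the Leibniz expansion with cycle-splicing sign bookkeeping, whereas the paper uses cofactor expansion along the last row and column.
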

\begin{proof}
We need to prove that
	\begin{equation}\label{eq:intout}
		\int_{\mathbb R^{N -n}} \det \Big(L_N(x_j,x_m)\Big)_{j,m = 1}^N \mathrm \d x_{n +1} \cdots \d x_{N} = (N - n)! 		\det\Big(L_N(x_j,x_k)\Big)_{j,k=1}^n,
	\end{equation}
	for $n = 1, \ldots, N -1$.
Let $k\in\{2,\ldots, N\}$, and write $M := \Big( L_N(x_i,x_j) \Big)_{i,j = 1}^k$. 
In the rest of this proof, we will write $M^{a,b}$ for the $(k-1)\times (k-1)$ matrix obtained from $M$ by removing the $a$-th line and the $b$-th column, and similarly
$M^{a_1 a_2, b_1 b_2}$ for the $(k-2)\times (k-2)$ matrix obtained from $M$ by removing the two corresponding rows and columns.
We now expand $\det M$ with respect to the last row, to obtain
\begin{align}
 	\int_{\mathbb R} \det M \d x_k &= \int_{\mathbb R} \sum_{j = 1}^k (-1)^{j + k} L_N(x_k,x_j) \det M^{k,j}\d x_k \nonumber \\ 
	&= \left(\int_{\mathbb R} L_N(x_k,x_k) \d x_k \right) \det M^{k,k} + \int_{\mathbb R}\sum_{j = 1}^{k-1} (-1)^{j + k} L_N(x_k,x_j) \det M^{k,j} \d x_k\nonumber.
	\end{align}
Observing that the integral in the first term is equal to $N$ (recall \eqref{eq:reproducing0}) and expanding $\det M^{k,j}$ once more, now with respect to the last column, we find
	\begin{align}
	\int_{\mathbb R} \det M \d x_k &= N \det M^{k,k} + \int_{\mathbb R} \sum_{j = 1}^{k-1} (-1)^{j + k} L_N(x_k,x_j) \sum_{m = 1}^{k-1} (-1)^{m + k - 1} L_N(x_m,x_k)\det M^{km,j k} \d x_k \nonumber\\
	&= N \det M^{k,k} - \sum_{j,m = 1}^{k-1} (-1)^{j + m} L_N(x_m,x_j) \det M^{km,j k}  \\&=  (N - k + 1) \det M^{k,k},
\end{align}
where we used the reproducing property \eqref{eq:reproducing0} in the second line, and where the last line is verified again by row or column expansion.
This implies that
$$
 \int_{\mathbb R} \det \left( L_N(x_i,x_j)\right)_{i,j = 1}^k \d x_k = (N - k + 1) \det \Big( L_N(x_i,x_j)\Big)_{i,j = 1}^{k-1},
$$
for every $k\in\{2,\ldots, N\}$. Repeatedly applying this identity for $k=N, N-1,\ldots, n+1$, we obtain \eqref{eq:intout}.
\end{proof}

Propositions \ref{prop:projection} and \ref{prop:integratingout} together imply part 1 of Theorem \ref{thm:biorth}.

	\medskip

If $a_1=\cdots=a_N=a$, we 
can still evaluate the $u$-integral in the double contour integral expression for $L_N$ using the residue theorem:
\[L_N(x,x')=\frac{1}{2\pi\i}\int_{\ell_N} {\rm Res}\left(\frac{\e^{ux'}}{W_0(u)(u-a)^N(v-u)};u=a\right)W_N(v)\e^{-vx}\d v.\]
The residue takes the form
$\e^{ax'}\sum_{j=1}^N c_j(v)x'^{j-1}$, with $c_j(v)=\frac{1}{v-a}P_{N-j}\left(\frac{1}{v-a}\right)$, for some polynomial $P_{N-j}$ of degree $N-j$. It follows that
\[L_N(x,x')=\e^{ax'}\sum_{j=1}^N x'^{j-1}\frac{1}{2\pi\i}\int_{\ell_N} W_N(v)c_j(v)\e^{-vx}\d v.\]
Hence,
 \eqref{def:BiOE-F} is equal to
	\begin{multline}\frac{1}{N!}\det\left(\e^{ax_k}\sum_{j=1}^N x_k^{j-1}\int_{\ell_N} W_N(v)c_j(v)\e^{-vx_m}\d v\right)_{k,m=1}^N\prod_{k=1}^N\d x_k\\
	=\frac{1}{N!}\det\left(x_k^{j-1}\right)_{k,j=1}^N\det\left(\frac{1}{2\pi\i}\int_{\ell_N} c_j(v)W_N(v)\e^{-vx_m}\d v\right)_{j,m=1}^N\prod_{k=1}^N \e^{a x_k} \d x_k.
	\end{multline}
In the latter expression, we can replace 	
	$\displaystyle\int_{\ell_N} c_j(v)W_N(v)\e^{-vx_m}\d v$, $j=1,\ldots, N$, by any set of functions with the same span, like for instance
	$\displaystyle\int_{\ell_N} \frac{(v-a)^{j-1}W_N(v)}{(v-a)^{N}}\e^{-vx_m}\d v$, $j=1,\ldots, N$, at the price of replacing the normalization constant $N!$ by a different constant $Z_N$. 
	We then obtain that
 
	\begin{align*}\d\mu_N(\vec x)&=\frac{1}{N!}\det\left(x_k^{j-1}\right)_{k,j=1}^N\det\left(\frac{1}{2\pi\i}\int_{\ell_N} c_j(v)W_N(v)\e^{-vx_k}\d v\right)_{j,k=1}^N\prod_{k=1}^N \e^{a x_k} \d x_k\\
	&=
\frac{1}{Z_N}\det\left(x_k^{j-1}\right)_{k,j=1}^N\det\left(\frac{1}{2\pi\i}\int_{\ell_N} \frac{(v-a)^{j-1}W_N(v)}{(v-a)^{N}}\e^{-vx_k}\d v\right)_{j,k=1}^N\prod_{k=1}^N \e^{a x_k} \d x_k	
	\\
	&=
\frac{1}{Z_N}\det\left(x_k^{j-1}\right)_{k,j=1}^N\det\left(\phi_j(x_k)\right)_{j,k=1}^N\prod_{k=1}^N  \d x_k
	\end{align*}

	 This proves part 3 of Theorem \ref{thm:biorth}.		
		
		\medskip
		
In a similar way, one can obtain explicit expressions for the biorthogonal measures in cases where some, but not all, of the zeros $a_1,\ldots, a_N$ are confluent.

\section{Determinant identities}\label{section:det}

In this section, we will prove the determinant identities from Theorem \ref{thm:Fredholm}. We will start by expressing $\mu_N[\sigma]$ as an $N\times N$ determinant, and afterwards we proceed towards the different Fredholm determinant identities. 
Again, throughout the section, we assume that ${\Sigma_N}, \ell_N, W_N$ satisfy Assumption \ref{assumptions}.

\subsection{Matrix determinants}\label{subsec:finitedet}

The finite size determinants of part 1 of Theorem \ref{thm:Fredholm} are obtained after a straightforward application of Andr\'eief's identity.
\begin{proposition}Let $\sigma:\mathbb R\to\mathbb C$ be bounded and measurable.
\begin{enumerate}\item If $a_1,\ldots, a_N$ are distinct, we have
\[\mu_N[\sigma]=\det\left(\int_{\mathbb R}(1-\sigma(x))\e^{a_m x}\psi_k(\e^x)\d x\right)_{m,k=1}^N.\]
\item If $a_1,\ldots, a_N=a$, we have
\[\mu_N[\sigma]=\frac{\displaystyle\det\left(\int_{\mathbb R}(1-\sigma(x))x^{m-1}\phi_k(\e^x)\d x\right)_{m,k=1}^N}{\displaystyle\det\left(\int_{\mathbb R}x^{m-1}\phi_k(\e^x)\d x\right)_{m,k=1}^N}.\]
\end{enumerate}
\end{proposition}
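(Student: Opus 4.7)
The plan is to apply Andr\'eief's identity directly to the biorthogonal forms \eqref{BiOE-distinct} and \eqref{BiOE-confluent} established in Theorem \ref{thm:biorth}. Recall that Andr\'eief's identity states
\[
\int_{\mathbb R^N}\det(f_j(x_k))_{j,k=1}^N\det(g_j(x_k))_{j,k=1}^N\prod_{k=1}^N\d\nu(x_k)=N!\,\det\!\left(\int_{\mathbb R}f_j(x)g_k(x)\d\nu(x)\right)_{j,k=1}^N,
\]
for any (signed/complex) measure $\nu$ for which the integrals converge absolutely.

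For the distinct case, I would take \eqref{BiOE-distinct} and absorb the product $\prod_{k=1}^N(1-\sigma(x_k))$ into one of the two Jacobi--Trudi-type determinants (equivalently, into the reference measure $\d\nu(x)=(1-\sigma(x))\d x$). Then applying Andr\'eief with $f_m(x)=\e^{a_m x}$ and $g_k(x)=\psi_k(\e^x)$ directly yields
\[
\mu_N[\sigma]=\det\!\left(\int_{\mathbb R}(1-\sigma(x))\e^{a_mx}\psi_k(\e^x)\d x\right)_{m,k=1}^N,
\]
the factorial $N!$ cancelling against the $1/N!$ in \eqref{BiOE-distinct}. The only thing to verify is absolute convergence of the entries; this follows from the boundedness of $\sigma$ together with the decay estimate \eqref{eq:aspsi} for $\psi_k$, which together with $\alpha_N<\Re a_m<\beta_N$ ensures exponential decay of the integrand at both ends.

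For the confluent case, I would argue in two steps. Starting from \eqref{BiOE-confluent}, use $\Delta(\vec x)=\det(x_k^{j-1})_{k,j=1}^N$ (up to a sign that gets absorbed into $Z_N$) to put the measure in the Andr\'eief-ready form, and apply the identity with $f_m(x)=x^{m-1}$ and $g_k(x)=\phi_k(\e^x)$, giving
\[
\int_{\mathbb R^N}\prod_{k=1}^N(1-\sigma(x_k))\,\Delta(\vec x)\det(\phi_m(\e^{x_k}))_{m,k=1}^N\prod_{k=1}^N\d x_k=C\cdot N!\,\det\!\left(\int_{\mathbb R}(1-\sigma(x))x^{m-1}\phi_k(\e^x)\d x\right)_{m,k=1}^N,
\]
where $C$ is the sign coming from identifying $\Delta$ with the Vandermonde determinant. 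Specializing to $\sigma\equiv 0$ gives the same identity for $Z_N$, and taking the ratio $\mu_N[\sigma]=\mu_N[\sigma]Z_N/Z_N$ cancels both $C$ and $N!$, yielding the stated formula. The integrability of the matrix entries here follows similarly from the Mellin-transform representation of $\phi_k$ and the decay of $W_N$.

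There is no real obstacle: the proof is essentially an application of Andr\'eief, modulo routine convergence checks and bookkeeping of constants. The only mild subtlety is the handling of the normalization $Z_N$ in the confluent case, which is trivialized by the ratio structure in the claim.
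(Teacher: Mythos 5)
Your proof is correct and follows essentially the same route as the paper: both the distinct and confluent cases are handled by applying Andr\'eief's identity to the biorthogonal forms \eqref{BiOE-distinct} and \eqref{BiOE-confluent}, with the normalization $Z_N$ in the confluent case eliminated by the ratio structure. The extra remarks on absolute convergence via \eqref{eq:aspsi} and the sign bookkeeping for the Vandermonde factor are fine but not points where the paper's argument differs.
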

\begin{proof}
Let $a_1,\ldots, a_N$ be distinct. From the expression \eqref{BiOE-distinct} for $\mu_N$, we have
\[\mu_N[\sigma]=\frac{1}{N!}\int_{\mathbb R^N}\det\left(\e^{a_jx_k}\right)_{k,j=1}^N\det\left(\psi_j(\e^{x_m})\right)_{j,m=1}^N\prod_{j = 1}^N \left(1 - \sigma(x_j) \right)\d x_j.
\]
We now apply Andr\'eief's identity on the right hand side and obtain the result.

\medskip

If $a_1=\cdots=a_N=a$,
we have
\[\mu_N[\sigma]=\frac{1}{Z_N}\int_{\mathbb R^N}\det\left(x_j^{k-1}\right)_{k,j=1}^N\det\left(\phi_j(\e^{x_m})\right)_{j,m=1}^N\prod_{j = 1}^N \left(1 - \sigma(x_j) \right)\d x_j,
\]
with \[Z_N=\int_{\mathbb R^N}\det\left(x_j^{k-1}\right)_{k,j=1}^N\det\left(\phi_j(\e^{x_m})\right)_{j,m=1}^N\prod_{j=1}^N\d x_j.\]
Applying Andr\'eief's identity to the above two expressions, we obtain the result.
\end{proof}

\subsection{Fredholm determinants}

The following lemma is another classical result concerning biorthogonal ensembles. Once again, we recall the proof for convenience, and to convince the reader that it does not require positivity of the biorthogonal measure.

\begin{proposition}\label{prop:Fredholm1}
The average multiplicative statistics defined in \eqref{def:avgmultstat} satisfy the identity
\begin{equation}\label{eq:Prop3.2}
	\mu_N[\sigma]=1 + \sum_{n = 1}^N \frac{(-1)^n}{n!}\int_{\mathbb R^n} \det \Big( \sigma(x_j)L_N(x_j,x_m) \Big)_{j,m = 1}^n \d x_1\cdots \d x_n.
\end{equation}
\end{proposition}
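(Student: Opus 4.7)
The plan is to derive the Fredholm series identity directly from the definition of $\mu_N[\sigma]$ in \eqref{def:avgmultstat} by expanding the product $\prod_{k=1}^N(1-\sigma(x_k))$, using the symmetry of the measure $\mu_N$, and then applying the marginal/determinant identity for $\rho_{n,N}$ from Proposition \ref{prop:integratingout} (which was already established without requiring positivity).

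First I would observe that the density $\frac{1}{N!}\det(L_N(x_m,x_k))_{m,k=1}^N$ is symmetric under permutations of $x_1,\ldots,x_N$: this follows because simultaneously permuting rows and columns of a matrix preserves the determinant. Next, expand
\[
\prod_{k=1}^N(1-\sigma(x_k)) = \sum_{n=0}^N (-1)^n \sum_{\substack{S\subseteq\{1,\ldots,N\}\\|S|=n}} \prod_{k\in S}\sigma(x_k).
\]
By the symmetry just noted, each of the $\binom{N}{n}$ terms in the inner sum contributes the same integral, so
\[
\mu_N[\sigma] = \sum_{n=0}^N (-1)^n \binom{N}{n} \frac{1}{N!}\int_{\mathbb{R}^N} \prod_{k=1}^n\sigma(x_k)\, \det\bigl(L_N(x_m,x_k)\bigr)_{m,k=1}^N \prod_{k=1}^N dx_k.
\]

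Now I would integrate out the variables $x_{n+1},\ldots,x_N$. By the definition \eqref{def:complexcorrelators} of $\rho_{n,N}$ together with Proposition \ref{prop:integratingout}, this integration yields
\[
\frac{1}{(N-n)!}\int_{\mathbb{R}^{N-n}}\det\bigl(L_N(x_m,x_k)\bigr)_{m,k=1}^N\prod_{k=n+1}^N dx_k = \det\bigl(L_N(x_j,x_m)\bigr)_{j,m=1}^n.
\]
Substituting and simplifying $\binom{N}{n}(N-n)!/N! = 1/n!$ gives
\[
\mu_N[\sigma] = \sum_{n=0}^N \frac{(-1)^n}{n!}\int_{\mathbb{R}^n}\prod_{k=1}^n\sigma(x_k)\,\det\bigl(L_N(x_j,x_m)\bigr)_{j,m=1}^n \prod_{k=1}^n dx_k.
\]
Finally, absorbing $\sigma(x_j)$ into the $j$-th row of the determinant (multilinearity in rows) yields $\det(\sigma(x_j)L_N(x_j,x_m))_{j,m=1}^n$, which is exactly the claimed identity \eqref{eq:Prop3.2} (the $n=0$ term giving the constant $1$).

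There is no substantive obstacle here; this is a standard integrate-out argument. The only point requiring mild care is to verify that the intermediate integrals converge absolutely so that the expansion of the product and the exchange of summation with integration are justified. Since $\sigma$ is bounded and $L_N$ together with its minors decay sufficiently in the tails (the $\psi_m$'s satisfy the exponential bounds \eqref{eq:aspsi}, and $L_N(x,x')$ is given by the kernel \eqref{def:L} which has analogous decay), absolute integrability on $\mathbb{R}^n$ for each fixed $n\leq N$ follows, and the sum over $n$ is finite ($N+1$ terms), so no further convergence concerns arise. Crucially, the whole argument uses only the symmetry of the density and the integrate-out lemma, never positivity, so it applies to the complex-valued biorthogonal measures considered here.
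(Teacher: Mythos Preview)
Your proof is correct and follows essentially the same route as the paper: expand the product, use symmetry of the density to reduce to a single representative subset, integrate out the remaining variables via Proposition \ref{prop:integratingout}, and absorb the $\sigma(x_j)$ factors into the rows of the determinant. Your added remarks on absolute integrability and the fact that positivity is never used are accurate and make the justification slightly more explicit than in the paper, but the argument is the same.
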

\begin{proof}
We expand the left hand side of the identity as 
\begin{align*}
	\mu_N[\sigma]&=\int \prod_{j = 1}^N \left(1 - \sigma(x_j) \right) \d \mu_N(x_1,\ldots,x_N) 
	\\&= 1 + \sum_{n=1}^N (-1)^n \binom{N}{n} \int_{\mathbb R^N} \prod_{j = 1}^n \sigma(x_j) \d \mu_N(x_1,\ldots,x_N)  \nonumber \\
	&= 1 + \sum_{n = 1}^N (-1)^n \frac{1}{n!(N-n)!} \int_{\mathbb R^N} \prod_{j = 1}^n \sigma(x_j)\det\Big( L_N(x_k,x_{m})\Big)_{k,m = 1}^N \d x_1 \cdots \d x_N.
\end{align*}
Integrating out the variables $x_{n+1}, \ldots, x_N$, using Proposition \ref{prop:integratingout} (or the equivalent identity \eqref{eq:intout}), we obtain that the latter is equal to
\begin{gather}
1 + \sum_{n = 1}^N \frac{(-1)^n}{n!}\int_{\mathbb R^n} \det \left( \sigma(x_j)L_N(x_j,x_m) \right)_{j,m = 1}^n \d x_1\cdots \d x_n, \label{eq:5.8}
\end{gather}
\end{proof}

We recognize the right hand side of the identity \eqref{eq:Prop3.2} as the truncated Fredholm series $\det(1-\sigma L_N)_{L^2(\mathbb R)}$, and the identity \eqref{eq:Fredholm1} is proven.

\medskip

To prove the other identities in Theorem \ref{thm:Fredholm} (points 3 and 4), we need to rely on properties of Fredholm determinants. For that reason, we will now construct trace-class operators associated to suitable conjugations of the kernels we considered before, such that we can interpret the truncated Fredholm series as proper Fredholm determinants associated to (finite rank) trace-class operators.

\medskip

The kernel \[L_N(x,x')=\sum_{m=1}^N\e^{a_mx'}\psi_m(\e^{x})\] does not directly define an integral operator on $L^2(\mathbb R)$, because neither $\e^{a_m x'}$ nor $\psi_m(\e^x)$ are (in general) in $L^2(\mathbb R)$. We can however define a new kernel of the form
\be\label{def:hatL}\widetilde L_N(x,x'):=\frac{h(x)}{h(x')}L_N(x,x'),\ee
such that the Fredhom series $\det(1-\sigma L_N)_{L^2(\mathbb R)}$ and $\det(1-\sigma \widetilde L_N)_{L^2(\mathbb R)}$ are equal by definition, but the integral operator $\widetilde L_N$ defined by
\be\label{def:opL}\widetilde L_N[f](x):=\int_{\mathbb R}\widetilde L_N(x,x')f(x')\d x',\ee
is a genuine integral operator acting on $L^2(\mathbb R)$, of rank $N$ and in particular trace-class.
A suitable choice for $h$ turns out to be
\be\label{def:h}
h(x)=\begin{cases}
\e^{(a_{\max}+\epsilon) x},& x\geq 0,\\
\e^{(a_{\min}-\epsilon) x},& x<0,
\end{cases}
\ee
for sufficiently small $\epsilon>0$, with $a_{\min}$ and $a_{\max}$ as in \eqref{def:aminmax}. Then, the asymptotic behavior \eqref{eq:aspsi} for $\psi(\e^x)$ as $x\to \pm\infty$ implies indeed that $\widetilde L_N$ is a bounded linear operator on $L^2(\mathbb R)$.
Similarly, we define for $H_N^\sigma$ from \eqref{def:H} 
the conjugated kernel
\be\label{def:hatH}
\widetilde H_N^\sigma(y,y'):=\e^{-(a_{\max}+\epsilon) (y-y')}
H_N^\sigma(y,y'),
\ee
and the associated integral operator
\[{\widetilde H_N^\sigma}[f](y)=\int_{0}^\infty\widetilde H_N^\sigma(y,y')f(y')\d y',\]
acting on $L^2(0,\infty)$.

\begin{proposition}\label{prop:HL}
Let $W_N$ satisfy Assumption \ref{assumptions} and be such that
$W_N(v)=O(|v|^{-1-\epsilon})$ as $v\to \infty$ in $\mathcal S_N$.
Let $\sigma:\mathbb R\to\mathbb C$ be bounded and measurable, and such that there exists $\epsilon>0$ such that
\[\sigma(x)=O\left(\e^{(a_{\max}-a_{\min}+\epsilon)x}\right),\qquad x\to -\infty.\]
The operators ${\widetilde H_N^\sigma}$ and $\sigma\widetilde L_N$ are trace-class, and we have the identity
\begin{equation}
		\det(1 - {\widetilde H_N^\sigma})_{L^2(0,\infty)} = \det(1 - \sigma \widetilde L_N)_{L^2(\mathbb R)}.
	\end{equation}
\end{proposition}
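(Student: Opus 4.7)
My approach is to exhibit $\sigma\mathcal L_N$ and $\mathcal H_N^\sigma$ as the two cyclic products $XY$ and $YX$ of an explicit finite-rank factorization, and then invoke the classical identity $\det(1-XY)=\det(1-YX)$ for trace-class operators. The factorization is suggested by the elementary integral
\be\label{eq:plan-elem}
\frac{1}{v-u}=\int_0^{\infty}\e^{-y(v-u)}\d y,
\ee
valid whenever $\Re(v-u)>0$. By Assumption \ref{assumptions}, $\Sigma_N$ lies strictly to the left of $\ell_N$, so $\inf_{v\in\ell_N,\,u\in\Sigma_N}\Re(v-u)>0$ and \eqref{eq:plan-elem} applies uniformly. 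Plugging this into \eqref{def:L} and interchanging the order of integration (justified by the decay $W_N(v)=O(|v|^{-1-\epsilon})$ on $\ell_N$ and compactness of $\Sigma_N$) yields the key representation
\be\label{eq:plan-fact}
L_N(x,x')=\int_0^\infty \Psi_2(\e^{y+x})\,\Psi_1(\e^{y+x'})\d y,
\ee
with $\Psi_1,\Psi_2$ as in \eqref{def:psi12}.

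Building on \eqref{eq:plan-fact}, I would introduce
\begin{align*}
\mathcal A &: L^2(0,\infty)\to L^2(\mathbb R), & (\mathcal A g)(x)&=h(x)\int_0^\infty \e^{(a_{\max}+\epsilon)y}\Psi_2(\e^{y+x})g(y)\d y,\\
\mathcal B &: L^2(\mathbb R)\to L^2(0,\infty), & (\mathcal B f)(y)&=\e^{-(a_{\max}+\epsilon)y}\int_{\mathbb R}h(x')^{-1}\Psi_1(\e^{y+x'})f(x')\d x',
\end{align*}
with $h$ as in \eqref{def:h} and $M_\sigma$ denoting multiplication by $\sigma$. A direct computation using \eqref{eq:plan-fact} shows that the kernel of $\mathcal A\mathcal B$ equals $\widetilde L_N$ (the conjugation shifts in $y$ cancel), while the kernel of $\mathcal B M_\sigma\mathcal A$ equals $\widetilde H_N^\sigma$ (the shifts combine to produce the $\e^{-(a_{\max}+\epsilon)(y-y')}$ factor). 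Therefore, as operators,
\[\sigma\mathcal L_N=(M_\sigma\mathcal A)\,\mathcal B\ \text{on}\ L^2(\mathbb R),\qquad \mathcal H_N^\sigma=\mathcal B\,(M_\sigma\mathcal A)\ \text{on}\ L^2(0,\infty).\]

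To conclude, I need to verify trace-class properties. Because $W_N$ has exactly $N$ zeros inside $\Sigma_N$, the contour integral defining $\Psi_1(s)$ collapses by residues to a finite linear combination of functions of the form $s^{a_m}$ (with polynomial-in-$\log s$ corrections in the confluent case), making $\mathcal B$ an operator of rank at most $N$. Combined with the boundedness of $M_\sigma\mathcal A$ --- which follows from the asymptotics \eqref{eq:aspsi} of the $\psi_m$, the Mellin-type decay of $\Psi_2$ provided by $W_N\in L^1(\ell_N)$, the choice of $h$, and the hypothesis $\sigma(x)=O(\e^{(a_{\max}-a_{\min}+\epsilon)x})$ at $-\infty$ --- this forces both $(M_\sigma\mathcal A)\mathcal B$ and $\mathcal B(M_\sigma\mathcal A)$ to be of rank at most $N$, hence trace-class. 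The desired identity
\[\det(1-\sigma\mathcal L_N)_{L^2(\mathbb R)}=\det(1-\mathcal H_N^\sigma)_{L^2(0,\infty)}\]
is then the standard cyclic property of Fredholm determinants applied with $X=M_\sigma\mathcal A$ and $Y=\mathcal B$.

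The main obstacle I foresee is the careful bookkeeping required to pick the shift parameter $(a_{\max}+\epsilon)$ and the multiplier $h$ in such a way that $M_\sigma\mathcal A$ and $\mathcal B$ genuinely land in the indicated $L^2$-spaces. The leading asymptotic behavior of $\Psi_1(\e^{y+x'})$ (controlled by $a_{\max}$ as $y+x'\to+\infty$ and by $a_{\min}$ as $y+x'\to-\infty$) and of $\Psi_2(\e^{y+x})$ (controlled by $\beta_N$ and $\alpha_N$) must be matched against the weights, and the hypothesis $\sigma(x)=O(\e^{(a_{\max}-a_{\min}+\epsilon)x})$ at $-\infty$ is precisely what is needed to tame the worst contribution in the integral \eqref{def:H}, which would otherwise behave like $\e^{(a_{\min}-a_{\max})x}$. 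Once these estimates are in place, everything reduces to Fubini plus the cyclic identity, both of which become elementary in the finite-rank setting.
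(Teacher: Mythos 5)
Your proposal is correct and follows essentially the same route as the paper: the same factorization of $\sigma\mathcal L_N$ and $\mathcal H_N^\sigma$ through a pair of operators built from $\Psi_1$ and $\Psi_2$ (with the same weight $h$ and exponential shift $\e^{\pm(a_{\max}+\epsilon)y}$), the same Fubini step based on $\frac{1}{v-u}=\int_0^\infty\e^{-y(v-u)}\d y$, and the cyclic identity $\det(1-XY)=\det(1-YX)$. The only cosmetic difference is in how trace-class is justified: the paper checks that both factors are Hilbert--Schmidt, whereas you note that the $\Psi_1$-factor is finite rank (rank $\leq N$, by residues at $a_1,\ldots,a_N$) and the $\sigma\Psi_2$-factor is bounded, which yields the same conclusion.
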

\begin{proof}
Given $\Psi_1$ and $\Psi_2$  as in \eqref{def:psi12}, define two operators $$ {A: L^2(\mathbb R) \to L^2(0,\infty) \quad \text{and} \quad B: L^2(0,\infty) \to L^2(\mathbb R)}$$ with kernels
	$$
	 A(y,x) := \frac{\e^{-(a_{\max}+\epsilon) y}}{h(x)}\Psi_1\left({\rm e}^{y + x} \right), \quad B(x,y') = \e^{(a_{\max}+\epsilon) y'}h(x)\sigma(x)\Psi_2\left({\rm e}^{y' + x} \right).
	$$
	Observe also that $\Psi_2$ is well-defined since $W_N(v)=O(|v|^{-1-\epsilon})$ as $v\to\infty$ in $\mathcal S_N$.
Using Assumption \ref{assumptions} and the definition \eqref{def:psi12} of $\Psi_1$, it is straightforward to check by deforming $\Sigma_N$ to a sufficiently narrow curve encircling $a_1,\ldots, a_N$ that, for any $\epsilon>0$,
\[\Psi_1(\e^{y+x})=O\left(\e^{(a_{\min}-\epsilon/2)(y+x)}\right)\ \mbox{as $y+x\to -\infty$,}\quad \Psi_1(\e^{y+x})=O\left(\e^{(a_{\max}+\epsilon/2)(y+x)}\right)\ \mbox{as $y+x\to +\infty$},\]
such that
\[A(y,x)=O\left(\e^{-\epsilon|x+y|/2}\right),\qquad x+y\to \pm\infty.\]
It is now straightforward to check that ${A}$ is a Hilbert-Schmidt operator.

\medskip

Similarly, for sufficiently small $\epsilon$, we can deform $\ell_N$ to the vertical line with real part $a_{\max}+\frac{3}{2}\epsilon$, and hence
\[\Psi_2(\e^{y+x})=O\left(\e^{-(a_{\max}+\frac{3}{2}\epsilon)(y+x)}\right)\quad \mbox{as $y+x\to \pm\infty$},\]
such that
\[B(x,y)=O\left(\e^{-\frac{\epsilon}{2} y}\e^{-(a_{\max}+\frac{3}{2}\epsilon) x}\sigma(x)h(x)\right),\qquad x+y\to \pm\infty,\]
which implies that ${B}$ is also a Hilbert-Schmidt operator, under the condition that
\[\sigma(x)=\e^{(a_{\max}-a_{\min}+\epsilon)x},\qquad x\to -\infty.\]

As compositions of Hilbert-Schmidt operators, {$ A B$ and $ B A$} are trace-class operators, and we have
{\be \det(1- A B)_{L^2(0,\infty)}=\det(1- B A)_{L^2(\mathbb R)}.\ee }
It is easy to see that {$\widetilde H_N^\sigma= A B$ by comparing the kernels of both operators, and we will now prove that $\sigma\widetilde L_N= B A$}, which implies the result.

\medskip

Using the integral representation of $\Psi_1$ and $\Psi_2$ from \eqref{def:psi12}, we compute the kernel of the operator ${B A}$ acting on $L^2(\mathbb R)$ as follows:
	\begin{align*}
		(BA)(x,x')& = \frac{1}{(2 \pi \i)^2}\sigma(x)\frac{h(x)}{h(x')} \int_0^\infty \d y \int_{{\Sigma_N}} \d u \frac{{\rm e}^{(y + x')u}}{W_N(u)} \int_{\ell_N} \d v W_N(v){\rm e}^{-(y + x)v} \nonumber \\
		&=  \frac{1}{(2 \pi \i)^2}\sigma(x)\frac{h(x)}{h(x')}  \int_{{\Sigma_N}} \d u \frac{{\rm e}^{ x'u}}{W_N(u)} \int_{\ell_N} \d v W_N(v){\rm e}^{-xv}\int_0^\infty \d y {\rm e}^{y(u-v)} \nonumber \\
		&= \sigma(x)\frac{h(x)}{h(x')}L_N(x,x')=\sigma(x)\widetilde L_N(x,x'), \label{eq:4.10}
	\end{align*}
	such that indeed {$B A=\sigma\widetilde L_N$}.
\end{proof}
\begin{remark}
If $\sigma$ is differentiable, we can use integration by parts to rewrite $H_N^{\sigma}(y,y')$ as
\begin{align*}
H_N^{\sigma}(y,y')&=\int_{\mathbb R} \d r \sigma(r)\int_{{\Sigma_N}} \frac{\d u}{2 \pi \i} \frac{1}{W_N(u)} \e^{(y+r)u} \int_{\ell_N} \frac{\d v}{2 \pi \i} W_N(v)\e^{-(y'+r)v}\\
&=\frac{1}{(2\pi\i)^2}\int_{\R} \d r \sigma'(r) \int_{{\Sigma_N}}\d u \int_{\ell_N}\d v \frac{W_N(v)}{W_N(u)} \e^{(y+r)u} \e^{-(y'+r)v}\frac{1}{v-u}
\\
&=\int_{\R} \d r \sigma'(r) L_N\left(y'+r, y+r\right).
\end{align*}
Fredholm determinants with kernels of this type can be useful to obtain underlying differential equations and integro-differential equations, like for instance in \cite{BCT}.
\end{remark}
We constructed { the Hilbert-Schmidt operators $\widetilde L_N$ and $\widetilde H_N^\sigma$} in such a way that
{\[\det(1 - \sigma\widetilde L_N)_{L^2(\mathbb R)} = \det(1 - \sigma  L_N)_{L^2(\mathbb R)},\qquad \det(1 - \widetilde H_N^\sigma)_{L^2(0,\infty)} = \det(1 - H_N^\sigma)_{L^2(0,\infty)},\]}
where with the left hand sides, we mean proper Fredholm determinants, and with the right hand sides, we mean Fredholm series. As a consequence, combining Proposition \ref{prop:HL} with Proposition \ref{prop:Fredholm1}, we obtain \eqref{eq:Fredholm2}.

The result below generalizes 
Lemma 8.8 in \cite{BCF}, and will imply \eqref{eq:Fredholm3}.
\begin{proposition}\label{lemma:alternativeFredholmdet}
Let $W_N$ satisfy Assumption \ref{assumptions} and be such that
$W_N(v)=O(|v|^{-1-\epsilon})$ as $v\to \infty$ in $\mathcal S_N$.
Let $t\in\mathbb R$, and let $\sigma_t(x)=\frac{1}{1+\e^{-x-t}}$.
If $a_{\max}-a_{\min}<1$,
we have the identity
\be
{\det\left(1-\widetilde H_N^{\sigma_t}\right)_{L^2(\mathbb R)}=\det\left(1+ K_{N,t}\right)_{L^2({\Sigma_N})}},
\ee
with {$ K_{N,t}$} the integral operator acting on $L^2(\Sigma_N)$ with kernel $K_{N,t}$ defined in \eqref{def:kernel}.
\end{proposition}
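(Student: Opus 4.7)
The plan is to mimic the operator factorization used in Proposition \ref{prop:HL}: I will exhibit operators $\mathcal A$ and $\mathcal B$ such that $\mathcal A\mathcal B$ recovers (a conjugation of) $\mathcal H_N^{\sigma_t}$ and $\mathcal B\mathcal A$ equals $-\mathcal K_{N,t}$, and conclude via Sylvester's identity $\det(1-\mathcal A\mathcal B)=\det(1-\mathcal B\mathcal A)$.

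The first step is to rewrite $H_N^{\sigma_t}(y,y')$ as a double contour integral. Substituting \eqref{def:psi12} into \eqref{def:H} and interchanging the $x$, $u$, $v$ integrals yields
$$H_N^{\sigma_t}(y,y') = \frac{1}{(2\pi\i)^2}\int_{\Sigma_N}\d u\int_{\ell_N}\d v\,\frac{W_N(v)}{W_N(u)}\e^{yu-y'v}\int_{\mathbb R}\frac{\e^{(u-v)x}}{1+\e^{-x-t}}\,\d x.$$
The inner integral is a classical Mellin--Barnes computation: via the substitution $z=\e^{x+t}$ and the beta-type identity $\int_0^\infty z^{s-1}/(1+z)\,\d z = \pi/\sin\pi s$ (valid for $0<\Re s<1$), one obtains
$$\int_{\mathbb R}\frac{\e^{(u-v)x}}{1+\e^{-x-t}}\,\d x = -\frac{\pi\, \e^{t(v-u)}}{\sin\pi(u-v)},$$
provided $0<\Re(v-u)<1$. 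The assumption $a_{\max}-a_{\min}<1$ is precisely what allows us to deform $\Sigma_N$ into a narrow curve around $\{a_1,\ldots,a_N\}$ and pick $\ell_N$ just to its right so that this strip condition holds uniformly for $u\in\Sigma_N$, $v\in\ell_N$.

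Next I would factor the resulting expression by introducing
$$A(y,u):=\frac{1}{2\pi\i}\,\e^{yu},\qquad B(u,y'):=\frac{1}{2\pi\i\,W_N(u)}\int_{\ell_N}\frac{\pi\,\e^{t(v-u)}}{\sin\pi(v-u)}W_N(v)\,\e^{-y'v}\,\d v.$$
After the same exponential conjugation trick used in Proposition \ref{prop:HL} (replacing $\e^{yu}$ and $\e^{-y'v}$ by $h$-twisted versions so that all functions lie in the appropriate $L^2$), these define Hilbert--Schmidt operators $\mathcal A: L^2(\Sigma_N)\to L^2(0,\infty)$ and $\mathcal B:L^2(0,\infty)\to L^2(\Sigma_N)$; the Hilbert--Schmidt bounds follow from $W_N(v)=O(|v|^{-1-\epsilon})$ together with the exponential decay of $1/\sin\pi(v-u)$ on $\ell_N$. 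By construction $\mathcal A\mathcal B = \mathcal H_N^{\sigma_t}$, while computing the kernel of $\mathcal B\mathcal A$ and swapping the $y$- and $v$-integrals gives, using $\int_0^\infty \e^{y(u'-v)}\d y = 1/(v-u')$ (legitimate because $\Re v>\Re u'$),
$$(\mathcal B\mathcal A)(u,u') = \frac{1}{(2\pi\i)^2 W_N(u)}\int_{\ell_N}\frac{\pi\,\e^{t(v-u)}}{\sin\pi(v-u)}\frac{W_N(v)}{v-u'}\,\d v = -K_{N,t}(u,u').$$

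Sylvester's identity for trace-class compositions of Hilbert--Schmidt operators then closes the argument:
$$\det\bigl(1-\mathcal H_N^{\sigma_t}\bigr)_{L^2(\mathbb R)} = \det(1-\mathcal A\mathcal B)_{L^2(0,\infty)} = \det(1-\mathcal B\mathcal A)_{L^2(\Sigma_N)} = \det\bigl(1+\mathcal K_{N,t}\bigr)_{L^2(\Sigma_N)}.$$
The main obstacle is the contour geometry: one must simultaneously enforce (a) $\ell_N$ to the right of $\Sigma_N$ so that $\int_0^\infty \e^{y(u'-v)}\d y$ converges, (b) $\Re(v-u)<1$ so that the Mellin--Barnes computation of the $x$-integral is valid, and (c) $\Sigma_N$ enclosing $a_1,\ldots,a_N$ while avoiding other possible zeros of $W_N$. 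The hypothesis $a_{\max}-a_{\min}<1$ is exactly the condition making (a) and (b) jointly achievable; verifying the Fubini/contour swaps rigorously is the only technical point, everything else is bookkeeping.
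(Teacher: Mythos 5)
Your proof is correct and follows essentially the same route as the paper's: the same factorization into a pure exponential kernel between $L^2(\Sigma_N)$ and $L^2(0,\infty)$ and a sine-kernel-type contour integral, the same identities $\int_0^\infty z^{s-1}(1+z)^{-1}\d z=\pi/\sin\pi s$ and $\int_0^\infty \e^{-y(v-u')}\d y=1/(v-u')$, and the cyclicity $\det(1-\mathcal A\mathcal B)=\det(1-\mathcal B\mathcal A)$. The only difference is that you start from $\mathcal H_N^{\sigma_t}$ and recover $-K_{N,t}$ as $\mathcal B\mathcal A$, whereas the paper factors $\mathcal K_{N,t}=\mathcal C\mathcal D$ and identifies $\mathcal D\mathcal C$ with $-\widetilde H_N^{\sigma_t}$; your $\mathcal A,\mathcal B$ are, up to the conjugation by $\e^{cy}$, exactly the paper's $\mathcal D,\mathcal C$.
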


\proof

Using the fact that $\frac{1}{v - u'} = \displaystyle\int_{0}^\infty \d y\, {\rm e}^{-y(v - u')}$ when $\mathrm{Re}(v - u') > 0$, we can write $K_{N,t}$ from \eqref{def:kernel} as the composition of two Hilbert-Schmidt operators, {$ K_{N,t} =  C D$}, with
$${ C : L^2(0,\infty) \to L^2({\Sigma_N})\; \text{and} \;  D: L^2({\Sigma_N}) \to L^2(0,\infty)}$$ with kernels
$$
	C(u,y) = \frac{\e^{yc}}{2 \pi \i} \int_{\ell_N} \frac{\pi \e^{t(v-u)}}{\sin \pi(u - v)} \frac{W_N(v)}{W_N(u)} \e^{-yv} \d v, \quad \quad D(y,u') = \frac{1}{2\pi\i}{\rm e}^{y(u'-c)},
$$
where $\ell_N$ is the vertical line with real part $c$, with $a_{\max}<c<a_{\min}+1$, and where $\Sigma_N$ lies at the left of $\ell_N$.
It is then straightforward to verify that ${D}$ is a Hilbert-Schmidt operator.
For ${C}$, the same is true, but note that we need here that $0<\Re(v-u)<1$ for every $v\in\ell_N$, $u\in{\Sigma_N}$ in order to avoid zeros of $\sin\pi(u-v)$.

\medskip

Now we use the fact that {$\det(1 +  C D)_{L^2({\Sigma_N})} = \det(1 + D C)_{L^2(0,\infty)}$, and compute the kernel of $ D C$} as (note that the minus sign in front compensates the modified sign of the argument of $\sin$)
$$
(DC)(y,y') = -\frac{1}{(2 \pi \i)^2} \int_{{\Sigma_N}} \d u \ {\rm e}^{y(u-c)} \int_{\ell_N} \d v \frac{\pi \e^{t(v-u)}}{\sin \pi (v - u)} \frac{W_N(v)}{W_N(u)} {\rm e}^{-y'(v-c)}.
$$
We then substitute the  identity
$$
\frac{\pi}{\sin \pi s} = \int_{\mathbb R} \frac{{\rm e}^{-sx}}{1 + {\rm e}^{-x}}\d x,\qquad 0<\Re s<1,
$$ 
where we use once more  that  $0<\Re(v-u)<1$ for every $v\in\ell_N$, $u\in{\Sigma_N}$,
and change the order of integration, to obtain
\begin{align*}
(DC)(y,y') &= - {\rm e}^{c(y' - y)}\int_{\R} \frac{\d x}{1 + {\rm e}^{-x}} \int_{{\Sigma_N}} \frac{\d u}{2 \pi \i} \frac{1}{W_N(u)}\e^{(y+x-t)u}\int_{\ell_N} \frac{\d v}{2 \pi \i} W_N(v)\e^{-(y'+x-t)v}\nonumber \\
&= -{\rm e}^{c(y' - y)} \int_{\R} \frac{\d x}{1 + {\rm e}^{-x}} \int_{{\Sigma_N}} \frac{\d u}{2 \pi \i} \frac{1}{W_N(u)} \e^{(y+x-t)u} \int_{\ell_N} \frac{\d v}{2 \pi \i} W_N(v)\e^{-(y'+x-t)v}\\
&= -{\rm e}^{c(y' - y)} \int_{\R} \frac{\d r}{1 + {\rm e}^{-r-t}} \int_{{\Sigma_N}} \frac{\d u}{2 \pi \i} \frac{1}{W_N(u)} \e^{(y+r)u} \int_{\ell_N} \frac{\d v}{2 \pi \i} W_N(v)\e^{-(y'+r)v}
, \label{eq:beforedoublecontour}
\end{align*}
which is exactly $-\widetilde H_N^{\sigma_t}(y,y')$ by \eqref{def:H} and \eqref{def:psi12}, if $c=a_{\max}+\epsilon$.
\endproof

\section{Logarithmic derivatives and deformed biorthogonal measures}\label{section:deformations}

In this section, we will prove Theorem \ref{thm: deformed}. The proof of this result is essentially a special case of a more general result in \cite{ClaeysGlesner}. Besides Assumption \ref{assumptions}, we assume here that $\sigma:\mathbb R\to\mathbb C$ is $C^1$, bounded, and with bounded derivative, and that $t$ is such that
$$\mu_N[\sigma_t]=\det(1-\sigma_t L_N)_{L^2(\mathbb R)} = \det(1-\sigma_t \widetilde L_N)_{L^2(\mathbb R)}\neq 0,$$ 
with $\sigma_t(x)=\sigma(x+t)$.

\medskip

For simplicity, we will first assume that $a_1,\ldots, a_N$ are distinct. Later, we will use a continuity argument to prove our identities in general.
By Proposition \ref{prop:biorth}, we know that $\widetilde L_N$ can be written in the form
\be\label{def:hatL2}
\widetilde L_N(x;x')=\sum_{k=1}^NF_k(x)G_k(x'),\qquad \int_{\mathbb R}F_k(x)G_j(x)\d x=\delta_{jk}, \quad \forall\, j,k = 1,\ldots,N,
\ee
with $F_k(x)=h(x)\psi_k(\e^x)$ and $G_k(x)=\e^{a_k x}/h(x)$.
The asymptotic behavior \eqref{eq:aspsi} of $\psi_k$ and the definition \eqref{def:h} imply that
$F_k(x)$ and $G_k(x)$ decay rapidly as $x\to\pm\infty$. Consequently, the operator $\widetilde L_N$ is a bounded linear operator, which is moreover finite rank and thus trace-class.
By the Jacobi identity, we have
\begin{equation}
\frac{\d}{\d t} \log \mu_N[\sigma_t]=\frac{\d}{\d t} \log \det(1-\sigma_t\widetilde L_N)_{L^2(\mathbb R)} = -\Tr \left( \left(\frac{\d}{\d t} \sigma_t\right)\widetilde L_N(1 - \sigma_t\widetilde L_N)^{-1}  \right)_{L^2(\mathbb R)},
\end{equation}
where we observe that $(\frac{\d}{\d t} \sigma_t)\widetilde L_N(1 - \sigma_t\widetilde L_N)^{-1}=\sigma_t'\widetilde L_N(1 - \sigma_t\widetilde L_N)^{-1} $ is a trace-class operator if $\sigma_t'$ is bounded.
We thus have
\begin{equation}\label{eq:logder4}
\frac{\d}{\d t} \log \mu_N[\sigma_t] = -\Tr \left( \sigma_t'\widetilde L_N(1 - \sigma_t\widetilde L_N)^{-1}  \right)_{L^2(\mathbb R)}. 
\end{equation}

\medskip

By definition \eqref{def:opL} and by Proposition \ref{prop:projection}, we directly verify that $\widetilde L_N^2=\widetilde L_N$, such that $\widetilde L_N$ is a rank $N$ (not necessarily Hermitian) projection operator. 
We denote $\mathcal F_N$ for the linear span of $F_1,\ldots, F_N$ and $\mathcal G_N$ for the linear span of $G_1,\ldots, G_N$.
\begin{proposition}
$\widetilde L_N$ is the unique projection operator on $L^2(\mathbb R)$ with range $\mathcal F_N$ and kernel $\mathcal G_N^{\perp}$.
\end{proposition}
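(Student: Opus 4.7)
The plan is to use the biorthogonal expansion $\widetilde L_N(x,x')=\sum_{k=1}^N F_k(x)G_k(x')$ together with the relations $\int_{\mathbb R}F_k G_j\,\d x=\delta_{jk}$ to read off the range and kernel of $\mathcal L_N$ directly, and then deduce uniqueness from the resulting direct-sum decomposition of $L^2(\mathbb R)$.

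First, I would compute the action $\mathcal L_N[f](x)=\sum_{k=1}^N F_k(x)\int_{\mathbb R} G_k(y)f(y)\,\d y$, which makes the inclusion $\mathrm{ran}(\mathcal L_N)\subset\mathcal F_N$ immediate. For the reverse inclusion, apply $\mathcal L_N$ to an element $F_j\in\mathcal F_N$: biorthogonality gives $\mathcal L_N F_j=\sum_k F_k\,\delta_{jk}=F_j$, so $\mathcal F_N$ is fixed pointwise by $\mathcal L_N$, which simultaneously reconfirms $\mathcal L_N^2=\mathcal L_N$ and shows that the range equals $\mathcal F_N$. Next, $\mathcal L_N f=0$ is equivalent to $\sum_{k=1}^N \bigl(\int G_k f\bigr)F_k=0$; since the biorthogonality relations force $F_1,\ldots,F_N$ to be linearly independent, this is equivalent to $\int_{\mathbb R} G_k(x) f(x)\,\d x=0$ for every $k$, i.e.\ $f\in\mathcal G_N^\perp$. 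Hence $\ker(\mathcal L_N)=\mathcal G_N^\perp$.

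For uniqueness, I would show $L^2(\mathbb R)=\mathcal F_N\oplus\mathcal G_N^\perp$. The intersection is trivial: if $f=\sum_j c_j F_j$ also satisfies $\int G_k f=0$ for all $k$, biorthogonality yields $c_k=0$ for all $k$. For the sum, given any $f\in L^2(\mathbb R)$, write $f=f_1+f_2$ with $f_1:=\sum_{k=1}^N\bigl(\int G_k f\bigr)F_k\in\mathcal F_N$ and $f_2:=f-f_1$; a direct check using biorthogonality gives $\int G_j f_2=0$ for all $j$, so $f_2\in\mathcal G_N^\perp$. Any projection $P$ with range $\mathcal F_N$ and kernel $\mathcal G_N^\perp$ must send $f_1+f_2\mapsto f_1$, so $P=\mathcal L_N$.

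The argument is essentially formal once one has the biorthogonality; the only point that needs a bit of care is knowing that $F_k,G_k\in L^2(\mathbb R)$ so that the integrals $\int G_k f$ and the sums $\sum F_k\int G_k f$ genuinely live in the right spaces, but this has already been ensured by the choice of the conjugation $h$ in \eqref{def:h} together with the decay estimate \eqref{eq:aspsi}.
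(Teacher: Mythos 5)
Your proof is correct, and its skeleton is the same as the paper's: expand $\widetilde L_N(x,x')=\sum_{k=1}^N F_k(x)G_k(x')$, use biorthogonality to see that $\mathcal L_N$ fixes $\mathcal F_N$ and annihilates $\mathcal G_N^{\perp}$, establish $L^2(\mathbb R)=\mathcal F_N\oplus\mathcal G_N^{\perp}$, and conclude uniqueness since a projection is determined by complementary range and kernel. The one step you handle differently is the direct-sum decomposition: the paper argues by contradiction, taking a nonzero $f$ orthogonal to $\mathcal F_N\oplus\mathcal G_N^{\perp}$, deducing $f\in\mathcal F_N^{\perp}\cap\mathcal G_N$ and contradicting biorthogonality (an argument that tacitly uses that $\mathcal F_N\oplus\mathcal G_N^{\perp}$ is closed and that $(\mathcal G_N^{\perp})^{\perp}=\mathcal G_N$, both harmless since $\mathcal F_N,\mathcal G_N$ are finite-dimensional), whereas you split any $f$ explicitly as $f=\mathcal L_N f+(f-\mathcal L_N f)$ and verify by biorthogonality that the second piece lies in $\mathcal G_N^{\perp}$ and that the intersection is trivial. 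Your constructive route is marginally more self-contained, sidesteps those closedness remarks, and yields the identifications $\mathrm{ran}(\mathcal L_N)=\mathcal F_N$, $\ker(\mathcal L_N)=\mathcal G_N^{\perp}$ directly (via linear independence of $F_1,\ldots,F_N$), while the paper's version is shorter on the page; you also correctly note that the integrability needed for $\int G_k f$ is guaranteed by the conjugation \eqref{def:h} and the decay \eqref{eq:aspsi}, exactly as in the paper's setup.
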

\begin{proof}
It is straightforward to verify from \eqref{def:hatL}--\eqref{def:opL} and \eqref{def:hatL2}
that $\widetilde L_N[f]=f$ for $f\in \mathcal F_N$ and that $\widetilde L_N[f]=0$ for $f\in \mathcal G_N^{\perp}$.
Moreover, $\mathcal F_N\oplus \mathcal G_N^{\perp}=L^2(\mathbb R)$. Indeed, suppose that this is not the case; then there exists a non-zero $f\in L^2(\mathbb R)$ such that $f\perp h$ for every $h\in  \mathcal F_N\oplus  \mathcal G_N^{\perp}$.
Consequently, $f\in \mathcal F_N^{\perp}\cap \mathcal G_N$. but 
$\mathcal F_N^{\perp}\cap \mathcal G_N=\{0\}$, 
by the orthogonality relation $\int_{\mathbb R}F_j(x)G_k(x)\d x=\delta_{jk}$, which gives a contradiction.
Using the above properties, we obtain that $\widetilde L_N$ is the unique projection with range $\mathcal F_N$ and kernel $\mathcal G_N^{\perp}$.
\end{proof}
Let us now consider a kernel $M_t(x,x')$ of the operator $\widetilde L_N(1-\sigma_t \widetilde L_N)^{-1}$ acting on $L^2(\mathbb R)$, such that
\[\widetilde L_N(1-\sigma_t \widetilde L_N)^{-1}[f](x)=\int_{\mathbb R}M_t(x,x')f(x')\d x'.\]
Then we define an operator $\mathcal P_t$ acting on $L^2\left(\mathbb R,(1-\sigma_t(x))\d x\right)$, with respect to the deformed Lebesgue measure $(1-\sigma_t(x))\d x$, as follows:
\be\label{def:Pt}\mathcal P_t[g](x)=\widetilde L_N(1-\sigma_t \widetilde L_N)^{-1}[(1-\sigma_t)g]=\int_{\mathbb R}M_t(x,x')g(x')(1-\sigma_t(x'))\d x'.\ee

\begin{proposition}The operator
$\mathcal P_t$
is the unique projection operator on $L^2\left((1-\sigma_t(x))\d x\right)$ with range $\mathcal F_N$ and kernel $\mathcal G_N^{\perp}$.
\end{proposition}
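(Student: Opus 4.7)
The strategy mirrors that of the preceding proposition. The key ingredients are the projection identity $\mathcal L_N^2=\mathcal L_N$ (so $\mathcal L_N$ has range $\mathcal F_N$ and kernel $\mathcal G_N^\perp$ in the standard $L^2(\mathbb R)$ inner product), the biorthogonality relation $\int F_k G_j\,\d x=\delta_{jk}$, and the standing assumption that $\mu_N[\sigma_t]=\det(1-\sigma_t\mathcal L_N)_{L^2(\mathbb R)}\neq 0$, which in particular guarantees the invertibility used to define $\mathcal P_t$ in \eqref{def:Pt}. Throughout, I interpret $\mathcal G_N^\perp$ in the statement as the orthogonal complement with respect to the deformed pairing $\langle f,g\rangle_t:=\int f(x)g(x)(1-\sigma_t(x))\,\d x$.

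First I would check that $\mathcal P_t$ acts as the identity on $\mathcal F_N$ and takes values in $\mathcal F_N$. If $f\in\mathcal F_N$, then $\mathcal L_N f=f$, so $(1-\sigma_t\mathcal L_N)f=(1-\sigma_t)f$, and inverting gives $(1-\sigma_t\mathcal L_N)^{-1}[(1-\sigma_t)f]=f$, whence $\mathcal P_t[f]=\mathcal L_N f=f$. For any $g$ the definition yields $\mathcal P_t[g]\in\mathrm{range}(\mathcal L_N)=\mathcal F_N$, and combining the two facts gives $\mathcal P_t^2=\mathcal P_t$ with range exactly $\mathcal F_N$. For the kernel, $\mathcal P_t[g]=0$ is equivalent to $\mathcal L_N h=0$ where $h=(1-\sigma_t\mathcal L_N)^{-1}[(1-\sigma_t)g]$; but $\mathcal L_N h=0$ reduces the defining equation $h-\sigma_t\mathcal L_N h=(1-\sigma_t)g$ to $h=(1-\sigma_t)g$, so the kernel condition becomes $\int G_k(x)(1-\sigma_t(x))g(x)\,\d x=0$ for all $k$, which is exactly $g\in\mathcal G_N^\perp$ in the deformed pairing.

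For uniqueness, a projection is determined by its range and kernel as soon as these two subspaces sum to the whole space. Since $\dim\mathcal F_N=N$ and $\mathcal G_N^\perp$ has codimension $N$ (because $\mathcal G_N$ is $N$-dimensional), it suffices to verify $\mathcal F_N\cap\mathcal G_N^\perp=\{0\}$. For $f=\sum_k c_k F_k$ in this intersection, the relations $\int f(x)G_j(x)(1-\sigma_t(x))\,\d x=0$ become $Mc=0$ with
\begin{equation*}
M_{jk}=\int F_k(x)G_j(x)(1-\sigma_t(x))\,\d x=\delta_{jk}-\int \sigma_t(x)F_k(x)G_j(x)\,\d x.
\end{equation*}
Writing the finite-rank operator $\sigma_t\mathcal L_N=\sum_k(\sigma_t F_k)\otimes G_k$ and using the classical formula $\det(1-\sum_k u_k\otimes v_k)=\det(\delta_{jk}-\langle v_j,u_k\rangle)_{j,k}$, one identifies $\det M=\det(1-\sigma_t\mathcal L_N)_{L^2(\mathbb R)}=\mu_N[\sigma_t]\neq 0$, forcing $c=0$.

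The one non-routine step is this last matrix identification: recognizing the Gram-type matrix $M$ of deformed pairings between $(F_k)$ and $(G_j)$ as the finite-dimensional representation of $1-\sigma_t\mathcal L_N$, whose determinant equals $\mu_N[\sigma_t]$ and is non-zero by hypothesis. Everything else reduces to elementary algebraic manipulations with the projection identity $\mathcal L_N^2=\mathcal L_N$ and the biorthogonality of the pair $(F_k,G_j)$.
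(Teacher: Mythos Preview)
Your proof is correct. The verification that $\mathcal P_t$ acts as the identity on $\mathcal F_N$, takes values in $\mathcal F_N$, and annihilates precisely $\mathcal G_N^\perp$ (in the deformed pairing) follows the same manipulations as in the paper, and you in fact spell out the reverse inclusion $\ker\mathcal P_t\subseteq\mathcal G_N^\perp$ more explicitly than the paper does.

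The genuine difference is in how the decomposition $\mathcal F_N\oplus\mathcal G_N^\perp = L^2((1-\sigma_t)\d x)$ (equivalently, uniqueness) is obtained. The paper argues by contradiction, taking a nonzero $g$ in the orthogonal complement of the sum and pushing the problem back to the undeformed $L^2(\mathbb R)$ setting via the previous proposition. You instead compute the Gram-type matrix $M_{jk}=\int F_kG_j(1-\sigma_t)\,\d x$ directly and identify $\det M$ with the finite-rank Fredholm determinant $\det(1-\sigma_t\mathcal L_N)=\mu_N[\sigma_t]$. Your route has the advantage of making the role of the standing hypothesis $\mu_N[\sigma_t]\neq 0$ completely transparent: it is exactly the nondegeneracy of the deformed pairing between $\mathcal F_N$ and $\mathcal G_N$. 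One minor remark: once you have established that $\mathcal P_t$ is a projection with range $\mathcal F_N$ and kernel $\mathcal G_N^\perp$, the direct sum decomposition and hence uniqueness already follow automatically (any idempotent splits the space as range $\oplus$ kernel), so your $\det M$ computation, while correct and informative, is strictly speaking an independent confirmation rather than a necessary step.
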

\begin{proof}
Note first that the orthogonal complement in $\mathcal G_N^{\perp}$ here stands for orthogonality with respect to the deformed Lebesgue measure $(1-\sigma_t(x))\d x$. 

\medskip

Let us first prove that $\mathcal F_N\oplus \mathcal G_N^{\perp}=L^2\left((1-\sigma_t(x))\d x\right)$.
Similarly as before, suppose that this is not true, then there is a non-zero $g\in L^2\left(\mathbb R,(1-\sigma_t(x))\d x\right)$ in the orthogonal complement of 
$\mathcal F_N\oplus \mathcal G_N^{\perp}$. This means that $g\in \mathcal F_N^{\perp}\cap \mathcal G_N$. 
In other words, $f=g(1-\sigma_t)\in L^2(\mathbb R,\d x)$ belongs to the $L^2(\mathbb R, \d x)$-orthogonal complement of $\mathcal F_N$, which has trivial intersection with $\mathcal G_N$. This gives a contradiction, and we conclude that $\mathcal F_N\oplus \mathcal G_N^{\perp}=L^2\left((1-\sigma_t(x))\d x\right)$.

\medskip

Let $g\in \mathcal F_N$, such that $(1-\sigma_t)g=(1-\sigma_t\widetilde L_N)g$.
Then,
\[\mathcal P_t[g]=\widetilde L_N(1-\sigma_t \widetilde L_N)^{-1}[(1-\sigma_t)g]=\widetilde L_N[g]=g.\]

\medskip

Finally, let $g\in \mathcal G_N^{\perp}$, such that $\widetilde L_N[(1-\sigma_t)g]=0$ and $(1-\sigma_t)g=\left(1-\sigma_t\widetilde L_N\right)[(1-\sigma_t)g]$.
Thus,
\[\mathcal P_t[g]=\widetilde L_N[(1-\sigma_t)g]=0.\]

\medskip

Combining the above properties, we obtain the result.
\end{proof}

Recall from \eqref{def:Pt} that the $L^2\left(\mathbb R,(1-\sigma_t(x))\d x\right)$-operator $\mathcal P_t$ has integral kernel $M_t(x,x')$, i.e.
\[\mathcal P_t[g]=\widetilde L_N(1-\sigma_t\widetilde L_N)^{-1}[(1-\sigma_t)g]=\int_{\mathbb R}M_t(x,x')g(x')(1-\sigma_t(x'))\d x,\]
and we can take $M_t$ of the form
\[M_t(x,x')=\sum_{m=1}^{N}\widetilde F_m(x)\widetilde G_m(x'), \quad \text{with} \quad \int_{\mathbb R}\widetilde F_j(x)\widetilde G_k(x)(1-\sigma_t(x))\d x=\delta_{jk}, \quad j,k = 1, \ldots,N,\] 
where $\widetilde F_1,\ldots, \widetilde F_N$ span the range $\mathcal F_N$ of $\widetilde L_N$, and $\widetilde G_1,\ldots, \widetilde G_N$ span $\mathcal G_N$, since $\mathcal G_N^\perp$ (in $L^2(\mathbb R,(1-\sigma_t(x))\d x)$) is the kernel of $\mathcal P_t$.
The biorthogonal measure
\[\d\widetilde\nu_{N,t}(x_1,\ldots, x_N):=\frac{1}{N!}\det\left(M_t(x_j,x_k)\right)_{j,k=1}^N\prod_{k=1}^N (1-\sigma_t(x_k))\d x_k\]
can then be rewritten as
\begin{align*}
\d\widetilde\nu_{N,t}(\vec x)&=\frac{1}{N!}\det\left(\sum_{m=1}^{N}\widetilde F_m(x_j)\widetilde G_m(x_k)\right)_{j,k=1}^N\prod_{k=1}^N (1-\sigma_t(x_k)\d x_k\\
&=\frac{1}{N!}\det\left(\widetilde F_m(x_j)\right)_{j,m=1}^N\det\left(\widetilde G_m(x_k)\right)_{m,k=1}^N\prod_{k=1}^N (1-\sigma_t(x_k))\d x_k
\\
&=\frac{1}{Z_{N,t}}\det\left(F_m(x_j)\right)_{j,m=1}^N\det\left(G_m(x_k)\right)_{m,k=1}^N\prod_{k=1}^N (1-\sigma_t(x_k))\d x_k\\
&=\frac{1}{N!}\det\left(\sum_{m=1}^{N}F_m(x_j)G_m(x_k)\right)_{j,k=1}^N\prod_{k=1}^N (1-\sigma_t(x_k))\d x_k\\
&=\frac{1}{Z_{N,t}}\det\left(L_N(x_j,x_k)\right)_{j,k=1}^N\prod_{k=1}^N (1-\sigma_t(x_k))\d x_k=\d\nu_{N,t}(\vec x).
\end{align*}
We can now conclude from \eqref{eq:logder} that 
\[\frac{\d}{\d t} \log \mu_N[\sigma_t] = -\Tr \left( \sigma_t'\widetilde L_N(1 - \sigma_t\widetilde L_N)^{-1}  \right)_{L^2(\mathbb R)}=\int_{\mathbb R}\sigma_t'(x)M_t(x,x)\d x, 
\]
and recalling \eqref{def:onepoint} and \eqref{def:onepoint2}, we moreover have
$M_t(x,x)=\widetilde\kappa_{N,t}(x)$, and $M_t(x,x)=\frac{\kappa_{N,t}(x)}{1-\sigma_t(x)}$ if $\sigma_t(x)\neq 1$. The above identities imply Theorem \ref{thm: deformed} in the case of distinct $a_1,\ldots, a_N$, and the general identities follow as before by continuity.

\section{Log Gamma polymer partition function and LUE with external source}\label{section:LogGamma}

In the last three sections of this paper, we give an overview of concrete polymer and random matrix models which fit in our general framework, and to which the results presented in Section \ref{sec:Intro} can be applied. A concise table with a (probably non-exhaustive) list of models which can be described in terms of our general class of biorthogonal measures is given at the end, see Table \ref{table}. 

\paragraph{The model.}
The Log Gamma polymer was introduced by Sepp\"al\"ainen in \cite{Seppalainen}.
Let $n,N>1$ be two positive integers.
We consider the lattice
\[\Big\{(j,k):j\in\{1,2,\ldots, n\}, k\in\{1,2,\ldots, N\}\Big\},\]
and assign to each point $(j,k)$ in the lattice independently a random variable $d_{j,k}\sim\Gamma^{-1}(\gamma_{j,k})$ with inverse Gamma distribution,
\[\mathbb P(d_{j,k}\leq y)=\frac{1}{\Gamma(\gamma_{j,k})}\int_{0}^y  x^{-\gamma_{j,k} - 1}{\rm e}^{-1/x}\d x.\]
The mean of $d_{j,k}$ thus exists only for $\gamma_{j,k}>1$ and is then equal to $\frac{1}{\gamma_{j,k}-1}$, while the variance is finite only if $\gamma_{j,k}>2$, and then equal to $\frac{1}{(\gamma_{j,k}-1)^2(\gamma_{j,k}-2)}$.
The parameters $\gamma_{j,k}$ take the form
$\gamma_{j,k}=\alpha_j-a_k$, where
$\vec\alpha=(\alpha_j)_{j = 1}^n$ and $\vec a=( a_k )_{k = 1}^N$ are two vectors of parameters such that $\alpha_j-a_k>0$ for all $j=1,\ldots, n$ and $k=1,\ldots, N$.

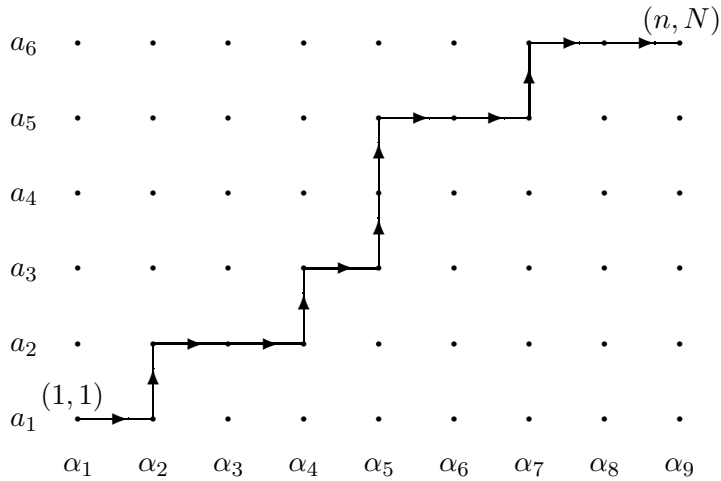
\begin{figure}[t]
\begin{center}
    \setlength{\unitlength}{1truemm}
    \begin{picture}(100,50)(0,-10)
    \put(5,2){$(1,1)$} \put(85,52){$(n,N)$}
    \put(10,0){\thicklines\circle*{.8}}
\put(20,0){\thicklines\circle*{.8}}
\put(30,0){\thicklines\circle*{.8}}
\put(40,0){\thicklines\circle*{.8}}
\put(50,0){\thicklines\circle*{.8}}
\put(60,0){\thicklines\circle*{.8}}    
    \put(70,0){\thicklines\circle*{.8}}    
   \put(80,0){\thicklines\circle*{.8}}     
    \put(90,0){\thicklines\circle*{.8}}    
    
  \put(10,10){\thicklines\circle*{.8}}
\put(20,10){\thicklines\circle*{.8}}
\put(30,10){\thicklines\circle*{.8}}
\put(40,10){\thicklines\circle*{.8}}
\put(50,10){\thicklines\circle*{.8}}
\put(60,10){\thicklines\circle*{.8}}    
    \put(70,10){\thicklines\circle*{.8}}    
   \put(80,10){\thicklines\circle*{.8}}     
    \put(90,10){\thicklines\circle*{.8}}

    \put(10,20){\thicklines\circle*{.8}}
\put(20,20){\thicklines\circle*{.8}}
\put(30,20){\thicklines\circle*{.8}}
\put(40,20){\thicklines\circle*{.8}}
\put(50,20){\thicklines\circle*{.8}}
\put(60,20){\thicklines\circle*{.8}}    
    \put(70,20){\thicklines\circle*{.8}}    
   \put(80,20){\thicklines\circle*{.8}}     
    \put(90,20){\thicklines\circle*{.8}}  
    
    \put(10,30){\thicklines\circle*{.8}}
\put(20,30){\thicklines\circle*{.8}}
\put(30,30){\thicklines\circle*{.8}}
\put(40,30){\thicklines\circle*{.8}}
\put(50,30){\thicklines\circle*{.8}}
\put(60,30){\thicklines\circle*{.8}}    
    \put(70,30){\thicklines\circle*{.8}}    
   \put(80,30){\thicklines\circle*{.8}}     
    \put(90,30){\thicklines\circle*{.8}}  
    
\put(10,40){\thicklines\circle*{.8}}
\put(20,40){\thicklines\circle*{.8}}
\put(30,40){\thicklines\circle*{.8}}
\put(40,40){\thicklines\circle*{.8}}
\put(50,40){\thicklines\circle*{.8}}
\put(60,40){\thicklines\circle*{.8}}    
    \put(70,40){\thicklines\circle*{.8}}    
   \put(80,40){\thicklines\circle*{.8}}     
    \put(90,40){\thicklines\circle*{.8}}      
    
    \put(10,50){\thicklines\circle*{.8}}
\put(20,50){\thicklines\circle*{.8}}
\put(30,50){\thicklines\circle*{.8}}
\put(40,50){\thicklines\circle*{.8}}
\put(50,50){\thicklines\circle*{.8}}
\put(60,50){\thicklines\circle*{.8}}    
    \put(70,50){\thicklines\circle*{.8}}    
   \put(80,50){\thicklines\circle*{.8}}     
    \put(90,50){\thicklines\circle*{.8}}

    \put(10,0){\line(1,0){10}}\put(16.7,0){\thicklines\vector(1,0){.0001}}
    \put(20,10){\line(1,0){10}}\put(26.7,10){\thicklines\vector(1,0){.0001}}
    \put(30,10){\line(1,0){10}}\put(36.7,10){\thicklines\vector(1,0){.0001}}
    \put(40,20){\line(1,0){10}}\put(46.7,20){\thicklines\vector(1,0){.0001}}
    \put(50,40){\line(1,0){10}}\put(56.7,40){\thicklines\vector(1,0){.0001}}
    \put(60,40){\line(1,0){10}}\put(66.7,40){\thicklines\vector(1,0){.0001}}
    \put(70,50){\line(1,0){10}}  \put(76.7,50){\thicklines\vector(1,0){.0001}}  
    \put(80,50){\line(1,0){10}} \put(86.7,50){\thicklines\vector(1,0){.0001}}
    
    \put(20,0){\line(0,1){10}}\put(20,6.7){\thicklines\vector(0,1){.0001}}
    \put(40,10){\line(0,1){10}}\put(40,16.7){\thicklines\vector(0,1){.0001}}
    \put(50,20){\line(0,1){10}}\put(50,26.7){\thicklines\vector(0,1){.0001}}
    \put(50,30){\line(0,1){10}}\put(50,36.7){\thicklines\vector(0,1){.0001}}
    \put(70,40){\line(0,1){10}}\put(70,46.7){\thicklines\vector(0,1){.0001}} 
    
    \put(8,-7){$\alpha_1$}
    \put(18,-7){$\alpha_2$}
    \put(28,-7){$\alpha_3$}
    \put(38,-7){$\alpha_4$}
    \put(48,-7){$\alpha_5$}
    \put(58,-7){$\alpha_6$}
    \put(68,-7){$\alpha_7$}
    \put(78,-7){$\alpha_8$}
    \put(88,-7){$\alpha_9$}
    
    \put(1,-1){$a_1$}
    \put(1,9){$a_2$}
    \put(1,19){$a_3$}
    \put(1,29){$a_4$}
    \put(1,39){$a_5$}
    \put(1,49){$a_6$}

%

    
    \end{picture}
    \caption{The Log Gamma polymer lattice for $n=9$, $N=6$, with a possible up-right path. The values $\alpha_j, a_k$ determine the weight of the point $(j,k)$ in an up-right path.}
    \label{figure: LogGamma}
\end{center}
\end{figure}

	The Log Gamma polymer partition function $Z_{n,N}^{{\rm Log}\Gamma}({\vec \alpha, \vec a})$ is the random variable
	\begin{equation}
		Z_{n,N}^{{\rm Log}\Gamma}(\vec \alpha, \vec a) := \sum_{\pi: (1,1) \nearrow (n,N)} \prod_{(j,k) \in \pi} d_{j,k},
	\end{equation}
	where the sum is taken over all up-right directed lattice paths $\pi$ from $(1,1)$ to $(n,N)$, and the product over all lattice points in an up-right directed lattice path. See Figure \ref{figure: LogGamma}.

Heuristically, given the fact that the mean and variance of the random variables $d_{j,k}$ are small for $\gamma_{j,k}$ large, we can expect that for large values of $\gamma_{j,k}$ , the weights $\prod_{(j,k) \in \pi} d_{j,k}$ of different paths $\pi$ will then typically be small and not show large deviations, such that many different paths will contribute to the leading order of the partition function (large entropy). For $\gamma_{j,k}$ small, we expect on the other hand large fluctuations, and then we expect that the partition function will be dominated by few paths with large weights (small entropy), such that the model degenerates to a last passage percolation or corner growth model. See \cite{Seppalainen, BCR} for more details about this model.

\paragraph{Fredholm determinant identity.}
Suppose that $n\geq N$ and that there exist constants $\delta_1,\delta_2\in (0,1)$ such that $\delta_1 < \min \{\delta_2,1 - \delta_2 \} < 1/2,$ and such that
\begin{equation}\label{eq:assumptionLogG}
	0 \leq a_j < \delta_1, \quad \alpha_k > \delta_2, \quad \forall j = 1,\ldots,N, \; \forall k =1,\ldots, n.
\end{equation}
Under these assumptions,
Borodin, Corwin, and Remenik in \cite[Corollary 1.8]{BCR} proved the identity	
	\begin{equation}\label{eq:LaplaceLogGamma}
		\mathbb E \left[ {\rm e}^{-\e^t Z_{n,N}^{{\rm Log}\Gamma}(\valpha,\va)}\right] = \det(1 + K_{N,t})_{L^2({\Sigma_N})},
	\end{equation}
where $K_{N,t}$ {(now also depending on $n, \vec\alpha,\va$)} is as in \eqref{def:kernel}, with ${\Sigma_N}$ the circle of radius $\delta_1$ around the origin, $\ell_N$ the vertical line $\Re z=\delta_2$, and $W_N$ given by
\be\label{def:FLogGamma}
W_N(z)=W_{n,N}^{{\rm Log}\,\Gamma}(z;\vec\alpha,\vec a):=\frac{\prod_{j=1}^n\Gamma(\alpha_j-z)}{\prod_{k=1}^N\Gamma(z-a_k)}.
\ee
Assumption \ref{assumptions} is valid for this choice of $W_N$ if $n>N$, and also for $n=N$ if $\sum_{k=1}^N(\alpha_j+a_j-2\delta_2)<0$. The stronger decay assumption for $W_N$ needed in parts 3 and 4 of Theorem \ref{thm:Fredholm} is also valid if 
$n>N$, and if $n=N$ provided that $\sum_{k=1}^N(\alpha_j+a_j-2\delta_2)<-1$.
This is easily seen by using the following variant of Stirling's approximation (see, e.g., \cite[formula (5.11.9)]{DLMF}),
\be\label{eq:Stirling}
\Gamma\left(x+\i y\right)\sim \sqrt{2\pi}|y|^{x-\frac{1}{2}}\e^{-\frac{\pi}{2}|y|},\qquad x,y\in\mathbb R,\ y\to \pm\infty.
\ee

The {homogenenous} Log Gamma polymer occurs when $\alpha_k=\alpha>0$ for $k=1,\ldots, n$ and $a_m=0$ for $m=1,\ldots, N$, and is of particular interest. Observe that this case meets the assumptions also when $n=N$ and $\alpha<2$, since we can then take $\delta_2>\alpha/2$. The stronger decay needed in parts 3 and 4 of Theorem \ref{thm:Fredholm} is also valid for $\alpha<2$ if $n=N$ is sufficiently large.

\begin{remark}
The {analogue of the kernel $K_{N,t}$ in \cite{BCR}} was defined without squaring $2\pi\i$ in the denominator, but this is due to an extra factor $\frac{1}{2\pi\i}$ in the definition of the scalar product on $L^2({\Sigma_N})$ there.
\end{remark}

\paragraph{Associated biorthogonal measures.}
As anticipated in the introduction, Remark \ref{remark:large}, combining the Fredholm determinant identity \eqref{eq:LaplaceLogGamma} with Theorem \ref{thm:Fredholm}, part 4, we can relate the partition function of the Log Gamma polymer to an explicit biorthogonal measure. In the following corollary and below,  given $b \in \mathbb C$, $[b]_k$ will denote a $k$-dimensional vector with $b$ in each entry and $\vec e_m$ will denote the $N$-dimensional vector containing $1$ in position $m$ and $0$ elsewhere.{ We also recall the definition of Meijer $G$-function (see for instance \cite{DLMF})
$$
	\MeijerG*{m}{n}{p}{q}{a_1,\ldots,a_p}{b_1,\ldots,b_q}{z} := \frac{1}{2 \pi \i} \int_L \frac{\prod_{\ell = 1}^m \Gamma(b_\ell - v) \prod_{\ell = 1}^n \Gamma(1 - a_\ell + v)}{\prod_{\ell = m}^{q-1} \Gamma(1 - b_{\ell + 1} + v) \prod_{\ell = n}^{p-1} \Gamma(a_{\ell + 1} - v)}z^v \d v .
$$
The choice of the contour $L$ depends on the positions of the parameters $\{a_i, \, i = 1,\ldots,p\}$ and $\{ b_j,\, j = 1,\ldots,q\}$. For our specific cases below, we can choose it as the vertical line $\ell_N$.} \\

{\begin{corollary}\label{finalcor:LogGamma}
	Consider two sets of parameters $\vec\alpha = \left( \alpha_j\right)_{j = 1}^n$ and $\vec a = \left( a_k\right)_{k = 1}^N$ with $n \geq N$ and such that $\alpha_j - a_k > 0$ for all $j = 1, \ldots, n$ and $k = 1,\ldots,N$. Suppose moreover that there exist two positive constants $\delta_1,\delta_2$ such that $\delta_1 < \min\{ \delta_2, 1 - \delta_2\} < 1/2$ and such that equations \eqref{eq:assumptionLogG} are satisfied. In the case where $n=N$ we assume in addition that $\sum_{k=1}^N(\alpha_j+a_j-2\delta_2)<-1$. Let $W_N$ be as in \eqref{def:FLogGamma}, $\Sigma_N$ the circle of radius $\delta_1$ around the origin, and $\ell_N$ the vertical line $\mathrm{Re} z = \delta_2$.
	Then:
\begin{enumerate} 
	\item 
			\begin{equation}\label{eq:LaplaceLogGamma1}
				\mathbb E \left[ {\rm e}^{-\e^t Z_{n,N}^{{\rm Log}\Gamma}(\valpha,\va)}\right] =\mu_N[\sigma_t],\qquad \sigma_t(x)=\frac{1}{1+\e^{-x-t}},
			\end{equation}
			where the measure $\mu_N$ is given by equations \eqref{def:BiOE-F}--\eqref{def:L}.
	\item	
			\begin{equation}\label{eq:LaplaceLogGamma2}
				\frac{\d}{\d t}\log\mathbb E \left[ {\rm e}^{-\e^t Z_{n,N}^{{\rm Log}\Gamma}(\valpha,\va)}\right] =\int_{\mathbb R}\sigma_t(x)\kappa_{N,t}(x)\d x,
			\end{equation}
			where $\kappa_{N,t}$ is the one-point function \eqref{def:onepoint} of the deformed biorthogonal measure \eqref{def:BiOE-F-deformed}.
	\item In the case where the parameters $\vec a$ are all distinct, the biorthogonal measure is given explicitly, in terms of Meijer $G$-function, by 
		\be\label{eq:BiOLogGamma}
			\d\mu_{n,N}^{{\rm Log}\Gamma}(\vec x;\vec\alpha,\vec a)=\frac{1}{Z_N}\det\left(\e^{a_mx_k}\right)_{k,m=1}^N
			\det\left(\MeijerG*{n}{0}{0}{n + N}{-}{\valpha ; [1]_m + \va-\vec e_m}{\e^{-x_k}}\right)_{m,k=1}^N
			\prod_{k=1}^N\d  x_k.
		\ee
		In the confluent case $a_1 = \cdots = a_N = 0$, the biorthogonal measure is given by
		 \be\label{eq:BiOLogGammaconfluent}
			\d\mu_{n,N}^{{\rm Log}\Gamma}(\vec x;\vec\alpha,\vec 0)=\frac{1}{Z_N}\Delta(\vec x)\det\left(\MeijerG*{n}{0}{0}{n + N}{-}{\valpha ; [0]_{N-m+1}; [1]_{m-1}}{\e^{-x_k}}\right)_{m,k=1}^N
			\prod_{k=1}^N\d  x_k.
		\ee
\end{enumerate}
\end{corollary}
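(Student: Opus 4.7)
\medskip\noindent\textbf{Proof plan for Corollary \ref{finalcor:LogGamma}.}

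The idea is to combine the Fredholm determinant identity \eqref{eq:LaplaceLogGamma} of Borodin--Corwin--Remenik, which is the starting input, with the general machinery of Theorems \ref{thm:Fredholm}, \ref{thm: deformed}, and \ref{thm:biorth} applied to the specific choice $W_N = W_{n,N}^{{\rm Log}\,\Gamma}$ in \eqref{def:FLogGamma}. First I would verify that Assumption \ref{assumptions} is satisfied for this $W_N$, together with the strip $\mathcal S_N = \{ \alpha_N < \Re z < \beta_N \}$ chosen so that the circle $\Sigma_N$ of radius $\delta_1$ and the vertical line $\ell_N = \{\Re z = \delta_2\}$ both lie in $\mathcal S_N$. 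The zeros $a_1,\ldots,a_N$ are the simple zeros of $1/\prod_k \Gamma(z-a_k)$ inside $\Sigma_N$, while the poles of $\prod_j \Gamma(\alpha_j-z)$ lie to the right of $\ell_N$ by assumption \eqref{eq:assumptionLogG}. The decay $W_N(v)=O(|v|^{-1-\epsilon})$ needed for parts 3 and 4 of Theorem \ref{thm:Fredholm} follows from the version of Stirling's formula \eqref{eq:Stirling}, which gives $|W_N(c+\i y)| \sim C |y|^{\sum_j(\alpha_j - c) + \sum_k(c - a_k) - (n+N)/2}$ as $y \to \pm\infty$; when $n > N$, this is automatic, and when $n=N$ it reduces to the assumed inequality $\sum_k(\alpha_j + a_j - 2\delta_2) < -1$.

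With these hypotheses verified, part 1 is immediate: Theorem \ref{thm:Fredholm}, part 4, asserts precisely that $\det(1 + K_{N,t})_{L^2(\Sigma_N)} = \mu_N[\sigma_t]$ for $\sigma_t(x)=1/(1+e^{-x-t})$, and combining this with \eqref{eq:LaplaceLogGamma} gives \eqref{eq:LaplaceLogGamma1}. Part 2 then follows from part 1 by direct application of Theorem \ref{thm: deformed}, equation \eqref{eq:logder2}: since $\sigma=\sigma_0$ is the logistic function with $\sigma'(x)=\sigma(x)(1-\sigma(x))$, the derivative identity simplifies exactly to \eqref{eq:LaplaceLogGamma2}. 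Both parts 1 and 2 are essentially formal consequences of results already established.

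The substantive content of part 3 is the identification of the functions $\psi_m$ (resp.\ $\phi_m$) appearing in Theorem \ref{thm:biorth} with explicit Meijer $G$-functions. For the distinct case, starting from \eqref{def:L1}, use the functional equation $(v-a_m)\Gamma(v-a_m) = \Gamma(v-a_m+1)$ to rewrite
\[
\frac{W_N(v)}{v-a_m} \;=\; \frac{\prod_{j=1}^n \Gamma(\alpha_j-v)}{\Gamma(v-a_m+1)\prod_{k\neq m}\Gamma(v-a_k)},
\]
which removes the apparent singularity at $v=a_m$ and simultaneously eliminates the prefactor $1/W_N'(a_m)$ (one checks that the remaining constant is absorbed into $Z_N$). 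This integrand matches the Mellin--Barnes representation of $G^{n,0}_{0,n+N}$ with upstairs list $\vec\alpha$ and downstairs list obtained from $\vec a$ by adding $1$ to every entry except the $m$-th, which is the content of $[1]_N + \vec a - \vec e_m$ (the index $m$ in the statement refers to the position of the $-1$ coming from $-\vec e_m$, so the vector has length $N$). Plugging this identification into \eqref{BiOE-distinct} yields \eqref{eq:BiOLogGamma}. For the confluent case $a_1=\cdots=a_N=0$, the analogous computation starts from \eqref{def:phiF}: the integrand becomes
\[
\frac{y^0 \prod_{j=1}^n \Gamma(\alpha_j-v)}{v^{N-m+1}\prod_{k=1}^N \Gamma(v-a_k)} \;=\; \frac{\prod_{j=1}^n \Gamma(\alpha_j-v)}{\Gamma(v+1)^{N-m+1}\,\Gamma(v)^{m-1}}
\]
after repeated application of $v\Gamma(v)=\Gamma(v+1)$ to convert the denominator powers $v^{N-m+1}$ against the $N$ copies of $\Gamma(v)$, which matches $G^{n,0}_{0,n+N}$ with downstairs list $(\vec\alpha ; [0]_{N-m+1}; [1]_{m-1})$, producing \eqref{eq:BiOLogGammaconfluent} via \eqref{BiOE-confluent}.

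The main bookkeeping obstacle will be this last step: making sure the vector of downstairs parameters of the Meijer $G$-function, after the Gamma-function manipulations, exactly matches the notation $[1]_N + \vec a - \vec e_m$ (or $[0]_{N-m+1};[1]_{m-1}$ in the confluent case). Everything else — the verification of the assumptions, the Fredholm identity, and the logarithmic derivative — is a direct application of theorems already proved in the earlier sections.
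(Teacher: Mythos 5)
Your proposal is correct and follows essentially the same route as the paper: part 1 from the Borodin--Corwin--Remenik identity combined with part 4 of Theorem \ref{thm:Fredholm}, part 2 from Theorem \ref{thm: deformed}, and part 3 by rewriting $W_N(v)/(v-a_m)$ (resp.\ $W_N(v)/v^{N-m+1}$ in the confluent case) through $(v-a_m)\Gamma(v-a_m)=\Gamma(v-a_m+1)$ and recognizing the resulting Mellin--Barnes integrals as Meijer $G$-functions with the constants $1/W_N'(a_m)$ absorbed into $Z_N$, exactly as in the paper. One small slip in your side verification of the decay hypothesis: Stirling applied to the \emph{ratio} gives $|W_N(c+\i y)|\sim C\,|y|^{\sum_j(\alpha_j-c)-\sum_k(c-a_k)-(n-N)/2}\,\e^{-(n-N)\pi|y|/2}$ as $y\to\pm\infty$, not the exponent you wrote (which treats the $\Gamma(z-a_k)$ as if they were in the numerator and omits the exponential factor); with the corrected asymptotics your stated conclusions --- automatic decay for $n>N$ and the condition $\sum_k(\alpha_k+a_k-2\delta_2)<-1$ when $n=N$ --- are exactly right and agree with the paper.
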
}
{\proof The first part is a direct consequence of the Fredholm determinant identity \eqref{eq:LaplaceLogGamma} and Theorem \ref{thm:Fredholm}, while the second part is a corollary of Theorem \ref{thm: deformed}. As for the third part, one can compute explicitly the functions $\psi_m$ from \eqref{def:L1} as 
\begin{align*}\psi_m(y)&=\frac{1}{2\pi\i \left(W_{n,N}^{{\rm Log}\Gamma}\right)'(a_m;\vec \alpha,\vec a)}\int_{\ell_N}\frac{\prod_{j=1}^n\Gamma(\alpha_j-v)\ y^{-v}}{(v-a_m)\prod_{j=1}^N\Gamma(v - a_j)}\d v\\
&=\frac{1}{2\pi\i \left(W_{n,N}^{{\rm Log}\Gamma}\right)'(a_m;\vec \alpha,\vec a)}\int_{\ell_N}\frac{\prod_{j=1}^n\Gamma(\alpha_j-v)\ y^{-v}}{\Gamma(v-a_m+1)\prod_{j\neq m}\Gamma(v - a_j)}\d v,\end{align*}
and we recognize the right hand side as a Meijer $G$-function \cite{DLMF}:
\be\label{eq:psiMeijerG}\psi_m(y)=\frac{1}{\left(W_{n,N}^{{\rm Log}\Gamma}\right)'(a_m;\vec \alpha,\vec a)}\MeijerG*{n}{0}{0}{n + N}{-}{\valpha ; [1]_N + \va-\vec e_m}{1/y}.\ee
In the confluent case where $a_1=\cdots=a_N=0$, we have similarly that the functions $\phi_m$ in \eqref{def:phiF} are given by
\begin{align*}\phi_m(y)&=\frac{1}{2\pi\i}\int_{\ell_N}\frac{\prod_{j=1}^n\Gamma(\alpha_j-v)}{\Gamma(v)^N}\frac{y^{-v}}{v^{N-m+1}}\d v\\&=\frac{1}{2\pi\i}\int_{\ell_N}\frac{\prod_{j=1}^n\Gamma(\alpha_j-v)}{\Gamma(v)^{m-1}\Gamma(v+1)^{N-m+1}}{y^{-v}}\d v\\&=\MeijerG*{n}{0}{0}{n + N}{-}{\valpha ; [0]_{N-m+1}; [1]_{m-1}}{1/y}.\end{align*}
\endproof} 
{
\begin{remark}
It is easy to see, using the definition of Meijer $G$-functions, that when $W_N$ is given by \eqref{def:FLogGamma}, the associated functions $\{\psi_i\}_{i = 1}^N$ and $\{\phi_i\}_{i = 1}^N$, defined by \eqref{def:L1} and \eqref{def:phiF}, are real valued:
$$\psi_i({\rm e}^{\bar x}) = \overline{\psi_i({\rm e}^{x})}, \quad \phi_i({\rm e}^{\bar x}) = \overline{\phi_i({\rm e}^{x})},  \quad i = 1,\ldots, N.$$
Consequently, $d\mu_{n,N}^{{\rm Log}\Gamma}$ is a signed real-valued measure.
\end{remark}}

\paragraph{LUE with external source.}
{Let $B$ be a deterministic positive-definite Hermitian $N\times N$ matrix with  eigenvalues $b_1,\ldots, b_N\in[0,1)$, and consider 
the probability distribution
\[\frac{1}{Z_N'}(\det M)^\nu \e^{-{\rm Tr}((I-B) M)}\d M,\qquad \nu>0,\quad \d M=\prod_{j=1}^N\d M_{jj}\ \prod_{1\leq j<k\leq N}\d\Re M_{jk}\,\d \Im M_{jk},\]
on the space of $N\times N$ positive-definite Hermitian matrices. This model is known as the chiral Unitary Ensemble or LUE with external source and is connected to sample covariance matrices, see e.g.\ \cite{BBP, DesrosiersForrester, ElKaroui, Forrester}.
For $B=0$, this is the classical LUE or Wishart-Laguerre ensemble.} 

\medskip

{The eigenvalues of $M$ have the joint probability distribution
\be\label{eq:jpdfLUE+}\frac{1}{Z_N}\Delta(\vec x)\ \det\left(x_j^{\nu}\e^{-(1-b_k) x_j}\right)_{j,k=1}^N
\prod_{j=1}^N\d x_j,\qquad x_1,\ldots, x_N>0.\ee
This is a determinantal point process with kernel (see \cite[Formula (4.2)]{DesrosiersForrester} or \cite[Formula (5.158)]{Forrester})
\be\label{eq:LLUE+0}
L_{N}^{\rm LUE+}(x,x';\vec b,\nu)  := \frac{\e^{x'-x}}{(2 \pi \i)^2} \int_{\gamma_1} \d u \int_{\gamma_2} \d v \left(\frac{u}{v}\right)^{N+\nu}\prod_{j=1}^N\frac{v-1+b_j}{u-1+b_j} \frac{\e^{-ux'+vx}}{v-u},\qquad x,x'>0,
\ee
where $\gamma_1$ encircles $1-b_1,\ldots, 1-b_N$ anti-clockwise, $\gamma_2$ encircles $0$ anti-clockwise, and they don't intersect. 
Changing $u\mapsto 1-u$, $v\mapsto 1-v$, this is equal to 
\be\label{eq:LLUE+}
L_{N}^{\rm LUE+}(x,x';\vec b,\nu)  := \frac{1}{(2 \pi \i)^2} \int_{\Sigma_N} \d u \int_{\ell_N} \d v \left(\frac{u-1}{v-1}\right)^{N+\nu}\prod_{j=1}^N\frac{v-b_j}{u-b_j} \frac{\e^{-vx+ux'}}{v-u},
\ee
where $\Sigma_N$ is a positively oriented circle around $b_1,\ldots, b_N$ with radius $<1$, and $\ell_N$ a negatively oriented circle around $1$, which can moreover be deformed (after changing the orientation) to a vertical line lying between $\Sigma_N$ and $1$ if $\nu>0$.
This kernel is of the form \eqref{def:L} with 
\be\label{def:FLUE+}
W_N(z) \equiv W_{N}^{\rm LUE+}(z;\vec b,\nu):=\frac{\prod_{j=1}^N(z-b_j)}{(z-1)^{N+\nu}}.
\ee
Biorthogonal systems associated to this kernel can be constructed in terms of multiple Laguerre polynomials, see \cite{BleherKuijlaars}.
We will now explain that \eqref{eq:LLUE+}, with $\nu=n-N$ a non-negative integer, is the \emph{zero-temperature limit} of the Log Gamma polymer biorthogonal measure.}

\paragraph{Zero-temperature limit.}
{Let us consider the Log Gamma polymer with parameters
$$\alpha_1=\cdots=\alpha_n=T\to 0$$ 
and re-scaled $a$-parameters $a_j = Tb_j, \quad j = 1\ldots, N$ with $0\leq b_1,\ldots, b_N<1$. 
We then have
\begin{align*}W^{{\rm Log}\Gamma}_{n,N}(T\zeta; [T]_n, T\vec b)&=\frac{\Gamma(T(1-\zeta))^n}{\prod_{m=1}^N\Gamma(T(\zeta-b_m))}
\\
&=T^{N-n}
\frac{\prod_{m=1}^N(\zeta-b_m)}{(1-\zeta)^n}\frac{\Gamma(1+T(1-\zeta))^n}{\prod_{m=1}^N\Gamma(1+T(\zeta-b_m))}.
\end{align*}
and hence
\be\label{eq:scalinglimitLogGamma}\lim_{T\to 0}\frac{W^{{\rm Log}\Gamma}_{n,N}(Tv;[T]_n,T\vec b)}{W^{{\rm Log}\Gamma}_{n,N}(Tu;[T]_n,T\vec b)}=\frac{W_{N}^{\rm LUE+}(v;\vec b,n-N)}{W_{N}^{\rm LUE+}(u;\vec b,n-N)},\qquad\mbox{for $v\in\ell_N$, $u\in\Sigma_N$}.\ee}

A more careful analysis leads us to the following.

\begin{proposition}\label{prop:LogGamma}
{We have the point-wise limit
	\be\label{eq:zerotemplimit}\lim_{T\to 0}\frac{1}{T}L_{n,N}^{{\rm Log}\Gamma}\left(\frac{x}{T},\frac{x'}{T};[T]_n, T\vec b\right)=L_{N}^{\rm LUE+}\left(x,x';\vec b,\nu=n-N\right),\qquad x,x'\in\mathbb R,\ee
	and there exists $C_{n,N}(\vec b)$ such that for $\epsilon, T>0$ sufficiently small,
	\be\label{eq:dominatedcvgence}\left|\frac{1}{T}L_{n,N}^{{\rm Log}\Gamma}\left(\frac{x}{T},\frac{x'}{T};[T]_n, T\vec b\right)\right|\leq C_{n,N}(\vec b) h(x)\tilde h(x'), \ee
with	
	\be \label{def:h2}h(x)=\e^{-(b_{\max}+2\epsilon) x},\qquad \tilde h(x')=\begin{cases}\e^{(b_{\min}-\epsilon) x'},& x'<0\\
\e^{(b_{\max} +\epsilon)x'},&x'\geq 0\end{cases}.\ee
	}
\end{proposition}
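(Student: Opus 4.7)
The plan is to rescale the contour variables in $L_{n,N}^{{\rm Log}\Gamma}$, read off the pointwise limit of the integrand via the identity $\Gamma(Tz)=\Gamma(1+Tz)/(Tz)$, and justify the interchange of limit and integral by dominated convergence, which simultaneously produces the bound \eqref{eq:dominatedcvgence}.

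First I would perform the change of variables $u=T\tilde u$, $v=T\tilde v$ in the integral representation \eqref{def:L} of $L_{n,N}^{{\rm Log}\Gamma}(x/T,x'/T;[T]_n,T\vec a)$. This sends the contours enclosing $Ta_1,\dots,Ta_N$ and the vertical line at $\Re z=T\delta_2$ to fixed contours $\tilde\Sigma_N$ enclosing $a_1,\ldots,a_N$ and $\tilde\ell_N$, independent of $T$; the Jacobian $T^2$ cancels against the factor $T$ from $v-u=T(\tilde v-\tilde u)$, so
\[
\frac{1}{T}L_{n,N}^{{\rm Log}\Gamma}\left(\tfrac{x}{T},\tfrac{x'}{T};[T]_n,T\vec a\right)=\frac{1}{(2\pi\i)^2}\int_{\tilde\Sigma_N}\d\tilde u\int_{\tilde\ell_N}\d\tilde v\, \frac{W_{n,N}^{{\rm Log}\Gamma}(T\tilde v;[T]_n,T\vec a)}{W_{n,N}^{{\rm Log}\Gamma}(T\tilde u;[T]_n,T\vec a)}\frac{\e^{-\tilde v x+\tilde u x'}}{\tilde v-\tilde u}.
\]
Using $\Gamma(Ts)=\Gamma(1+Ts)/(Ts)$ on every Gamma factor, the powers of $T$ cancel between numerator and denominator, yielding
\[
\frac{W_{n,N}^{{\rm Log}\Gamma}(T\tilde v;[T]_n,T\vec a)}{W_{n,N}^{{\rm Log}\Gamma}(T\tilde u;[T]_n,T\vec a)}=\frac{W_N^{{\rm LUE+}}(\tilde v;\vec a, n-N)}{W_N^{{\rm LUE+}}(\tilde u;\vec a, n-N)}\,R_T(\tilde u,\tilde v),\qquad R_T(\tilde u,\tilde v)=\frac{G_T(\tilde v)}{G_T(\tilde u)},
\]
with $G_T(z)=\Gamma(1+T(1-z))^n/\prod_m\Gamma(1+T(z-a_m))$. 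Since $G_T(z)\to 1$ as $T\to 0$ for each fixed $z$, $R_T(\tilde u,\tilde v)\to 1$ pointwise, which delivers \eqref{eq:zerotemplimit} on the integrand level.

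To pass to the limit under the integral and obtain \eqref{eq:dominatedcvgence} in one stroke, I would fix $\tilde\ell_N=\{\Re z=a_{\max}+2\epsilon\}$ and $\tilde\Sigma_N$ a compact contour contained in $\{a_{\min}-\epsilon\leq\Re z\leq a_{\max}+\epsilon\}$ enclosing $a_1,\ldots,a_N$, which is possible for $\epsilon>0$ small enough. By construction
\[
|\e^{-\tilde v x+\tilde u x'}|\leq \e^{-(a_{\max}+2\epsilon)x}\tilde h(x')=h(x)\tilde h(x')\qquad\text{for all }\tilde u\in\tilde\Sigma_N,\ \tilde v\in\tilde\ell_N,
\]
and the task reduces to bounding the remaining factor $|W_{n,N}^{{\rm Log}\Gamma}(T\tilde v)/W_{n,N}^{{\rm Log}\Gamma}(T\tilde u)|/|\tilde v-\tilde u|$ uniformly in $T\in(0,T_0]$ by an integrable function on $\tilde\Sigma_N\times\tilde\ell_N$. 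On the compact $\tilde\Sigma_N$ the $u$-factors are uniformly bounded above and below away from $0$. On $\tilde\ell_N$, writing $\tilde v=a_{\max}+2\epsilon+\i t$, I would split $|t|\leq 1/T$ and $|t|>1/T$. In the first regime $G_T(\tilde v)$ stays uniformly bounded by continuity of $\Gamma$ on a compact set, so the integrand is controlled by a constant times the ${\rm LUE+}$ integrand, whose modulus is $\lesssim(1+|t|)^{N-n-1}$ (integrable for $n>N$). In the second regime Stirling's formula \eqref{eq:Stirling} applied to each Gamma factor of $G_T(\tilde v)$ gives $|G_T(\tilde v)|\lesssim(T|t|)^{(N-n)/2+O(T)}\e^{-(n-N)\pi T|t|/2}$, and the substitution $s=T|t|$ converts the $t$-integral over $|t|>1/T$ into a $T$-independent, convergent integral over $s\geq 1$. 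Taking the $t$-integral of the maximum of the two bounds, and integrating over $\tilde\Sigma_N$, defines $C_{n,N}(\vec a)$; dominated convergence then proves both \eqref{eq:zerotemplimit} and \eqref{eq:dominatedcvgence}.

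The main obstacle is producing the $T$-uniform integrable bound on the Gamma-ratio along $\tilde\ell_N$: one has to interpolate carefully between the Taylor regime $T|t|\lesssim 1$, where continuity of $\Gamma$ at $1$ suffices but provides no intrinsic $|t|$-decay, and the Stirling regime $T|t|\gtrsim 1$, where $\Gamma$ produces the exponential factor $\e^{-(n-N)\pi T|t|/2}$. The crucial cancellation is that the $T^{N-n}$ prefactors from $\Gamma(Tz)=\Gamma(1+Tz)/(Tz)$ match in numerator and denominator, so the algebraic Stirling exponent reduces to $(N-n)/2+O(T)$ and stays bounded as $T\to 0$. The crossover near $|t|\sim 1/T$ is delicate and occupies the bulk of the analysis; once the uniform bound is in hand, the application of dominated convergence and the pointwise limit of $R_T$ conclude the proof of the proposition.
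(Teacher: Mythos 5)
Your proposal follows essentially the same route as the paper: rescale the contour variables, use $\Gamma(1+z)=z\Gamma(z)$ to factor the integrand into the ${\rm LUE+}$ ratio times a Gamma-ratio tending to $1$, choose $\ell_N=a_{\max}+2\epsilon+\i\mathbb R$ and $\Sigma_N$ in the strip so that $|\e^{-vx+ux'}|\leq h(x)\tilde h(x')$, bound the Gamma-ratio uniformly in $T$ via Stirling, and conclude by dominated convergence; your two-regime split along $\ell_N$ just spells out the paper's one-line appeal to \eqref{eq:Stirling}. Two harmless caveats: your Stirling exponent should be $(n-N)/2$ rather than $(N-n)/2$ (the exponential factor dominates anyway), and like the paper's explicit estimate your integrability argument covers only $n>N$, the case $n=N$ requiring the extra decay of the Gamma-ratio that the paper also only sketches.
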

\begin{proof}
{It is straightforward to verify that
\begin{align*}&\frac{1}{T}L_{n,N}^{{\rm Log}\Gamma}\left(\frac{x}{T},\frac{x'}{T};[T]_n, T\vec b\right)=\frac{1}{(2 \pi \i)^2} \int_{\Sigma_N} \d s \int_{\ell_N} \d w \frac{W_{n,N}^{{\rm Log}\Gamma}(Tw;[T]_n,T\vec b)}{W_{n,N}^{{\rm Log}\Gamma}(Ts;[T]_n,T\vec b)} \frac{\e^{-wx+sx'}}{w-s}\\
&\quad =\frac{1}{(2 \pi \i)^2} \int_{\Sigma_N} \d s \int_{\ell_N} \d w \left(\frac{s-1}{w-1}\right)^{n}\frac{\Gamma(1+T(1-w))^n}{\Gamma(1+T(1-s))^n}\prod_{j=1}^N\frac{w-b_j}{s-b_j}\frac{\Gamma(1+T(s-b_j))}{\Gamma(1+T(w-b_j))} \frac{\e^{-wx+sx'}}{w-s},\end{align*}
where $\Sigma_N$ is a contour surrounding $b_1,\ldots, b_N$ and no other zeros of $W_{n,N}^{{\rm Log}\Gamma}(Tu)$, and $\ell_N$ lies at the right of $\Sigma_N$ and at the left of $1$, and where we used $\Gamma(z+1)=z\Gamma(z)$ to pass to the second line.
Hence,
\begin{align*}
&\frac{1}{T}L_{n,N}^{{\rm Log}\Gamma}\left(\frac{x}{T},\frac{x'}{T};[T]_n, T\vec b\right)\\
&\quad =\frac{1}{(2 \pi \i)^2} \int_{\Sigma_N} \d s \int_{\ell_N} \d w \frac{W_{N}^{\rm LUE+}(w;\vec b,n-N)}{W_{N}^{\rm LUE+}(s;\vec b,n-N)} 
\left(\frac{\Gamma(1+T(1-w))^n}{\Gamma(1+T(1-s))^n}\prod_{j=1}^N\frac{\Gamma(1+T(s-b_j))}{\Gamma(1+T(w-b_j))}\right)
\frac{\e^{-wx+sx'}}{w-s}.
\end{align*}
The integrand converges point-wise to $0$ as $T\to 0$. Moreover, the part between brackets is uniformly bounded for $T$ sufficiently small, $s\in\Sigma_N$, and $w\in\ell_N$, provided that we take $\ell_N$ close to $1$. To see the uniformity in $w$, one has to use \eqref{eq:Stirling}. We can then use Lebesgue's dominated convergence theorem to conclude that we have the point-wise convergence \eqref{eq:zerotemplimit}.

\medskip

To obtain the bound \eqref{eq:dominatedcvgence}, we take $\ell_N=b_{\max}+2\epsilon+\i\mathbb R$, and $\Sigma_N$ within the strip $b_{\min}-\epsilon\leq \Re s\leq b_{\max}+\epsilon$, with $\epsilon>0$ sufficiently small so that $b_{\max}+2\epsilon<1$. For $T>0$ sufficiently small, it then follows that
\begin{align*}
\left|\frac{1}{T}L_{n,N}^{{\rm Log}\Gamma}\left(\frac{x}{T},\frac{x'}{T};[T]_n, T\vec b\right)\right|
&\leq 2\frac{h(x)\tilde h(x')}{(2 \pi)^2} \int_{\Sigma_N} |\d s| \int_{\ell_N} |\d w| \left|\frac{W_{N}^{\rm LUE+}(w;\vec b,n-N)}{W_{N}^{\rm LUE+}(s;\vec b,n-N)}\right| 
\frac{1}{|w-s|}\\
&\leq C_{n,N}(\vec b) h(x)\widetilde h(x').\end{align*}
The latter inequality holds only if $n>N$, since $\left|\frac{W_{N}^{\rm LUE+}(w;\vec b,n-N)}{W_{N}^{\rm LUE+}(s;\vec b,n-N)}\right| 
\frac{1}{|w-s|}$ is not integrable for $n=N$. One can however refine the above estimates by taking into account the decay
of
$\frac{\Gamma(1+T(1-w))^n}{\Gamma(1+T(1-s))^n}\prod_{j=1}^N\frac{\Gamma(1+T(s-b_j))}{\Gamma(1+T(w-b_j))}$ as $w\to \infty$ on $\ell_N$, such that the same inequality remains to hold.
}
\end{proof}
{\begin{corollary}\label{cor:LogGamma}
As $T\to 0$, we have the convergence 
	\be\label{eq:LUElimit}\mathbb E \left[ {\rm e}^{-\e^{t/T} Z_{n,N}^{{\rm Log}\Gamma}([T]_n,T\vec b)}\right]\longrightarrow \mathbb P_{\mathrm{LUE+}}\left(\max\{x_1,\ldots, x_N\}\leq -t ; \vec b, \nu = {n - N} \right),\ee
	where the expectation on the left is with respect to the distribution of the Log Gamma polymer partition function, and the probability on the right is with respect to the LUE with external source \eqref{eq:jpdfLUE+}.	
\end{corollary}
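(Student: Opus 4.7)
The plan is to pass to the limit $T\to 0$ inside the Fredholm series representation given by Corollary~\ref{finalcor:LogGamma}, part~1, combined with the kernel convergence of Proposition~\ref{prop:LogGamma}. By the first part of Corollary~\ref{finalcor:LogGamma}, for each $T>0$ sufficiently small we have
\[
\mathbb E\!\left[{\rm e}^{-\e^{t/T} Z_{n,N}^{{\rm Log}\Gamma}([T]_n,T\va)}\right] = \mu_N[\sigma_{t/T}] = \det\bigl(1 - \sigma_{t/T}\, L_{n,N}^{{\rm Log}\Gamma}(\cdot,\cdot;[T]_n,T\va)\bigr)_{L^2(\mathbb R)},
\]
where $\sigma_{t/T}(x)=\frac{1}{1+\e^{-x-t/T}}$. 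I would first rescale the integration variables $x_j\mapsto x_j/T$ in the Fredholm series~\eqref{def:Fredholm}. Each factor of $\d x_j$ produces $T^{-1}$, which is absorbed by writing $\frac{1}{T}L_{n,N}^{{\rm Log}\Gamma}(x/T,x'/T;[T]_n,T\va)$, and $\sigma_{t/T}(x/T)=\frac{1}{1+\e^{-(x+t)/T}}$ converges pointwise as $T\to 0$ to the indicator $1_{(-t,\infty)}(x)$ (with value $1/2$ at $x=-t$, which is irrelevant in the integral).

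Next, I would argue term-by-term convergence of the Fredholm series. Proposition~\ref{prop:LogGamma} gives exactly what is needed: the pointwise convergence
\[
\tfrac{1}{T}L_{n,N}^{{\rm Log}\Gamma}\bigl(x/T,x'/T;[T]_n,T\va\bigr)\longrightarrow L_N^{\rm LUE+}(x,x';\va,n-N),
\]
together with the uniform bound $|T^{-1}L_{n,N}^{{\rm Log}\Gamma}(x/T,x'/T)|\le C_{n,N}(\va)h(x)\tilde h(x')$ with $h,\tilde h$ as in~\eqref{def:h2}. Since $h(x)\tilde h(x')$ is a rank-one product, Hadamard's inequality gives
\[
\bigl|\det\bigl(\tfrac{1}{T}L_{n,N}^{{\rm Log}\Gamma}(x_j/T,x_m/T)\bigr)_{j,m=1}^k\bigr|\le k^{k/2}C_{n,N}(\va)^k\prod_{j=1}^k h(x_j)\tilde h(x_j),
\]
and the product $h(x)\tilde h(x)$ is integrable on $(-t,\infty)$ provided $\epsilon$ is chosen small enough so that $a_{\min}-a_{\max}-3\epsilon<0$, which is possible since $a_{\max}-a_{\min}<1$. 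This gives an $L^1$ majorant against which one applies dominated convergence for each fixed $k$, and simultaneously a summable majorant (via the $k^{k/2}/k!$ factor) justifying the interchange of the sum over $k$ with the limit.

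Passing to the limit term-by-term, the right-hand side becomes the Fredholm series
\[
\sum_{k=0}^{\infty}\frac{(-1)^k}{k!}\int_{(-t,\infty)^k}\det\bigl(L_N^{\rm LUE+}(x_j,x_m;\va,n-N)\bigr)_{j,m=1}^k\prod_{j=1}^k\d x_j = \det\bigl(1 - \mathbf 1_{(-t,\infty)}L_N^{\rm LUE+}\bigr)_{L^2(\mathbb R)}.
\]
By the gap probability formula for the determinantal point process defined by \eqref{eq:jpdfLUE+}--\eqref{eq:LLUE+}, this Fredholm determinant equals $\mathbb P_{\mathrm{LUE+}}(\max_j x_j\le -t;\va,\nu=n-N)$, which completes the proof. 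The main technical obstacle is producing an integrable, $k$-uniform majorant for the kernel determinants; this is handled by the explicit bound~\eqref{eq:dominatedcvgence} from Proposition~\ref{prop:LogGamma} combined with Hadamard's inequality, and by the fact that the exponential weights $h,\tilde h$ can be tuned within the strip $a_{\min}-a_{\max}<1$ so as to be integrable on the relevant half-line.
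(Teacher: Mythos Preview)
Your approach is essentially the one the paper takes: rescale the integration variables, invoke the pointwise convergence and uniform bound of Proposition~\ref{prop:LogGamma}, and pass to the limit by dominated convergence. The paper integrates the $N\times N$ determinant defining $\mu_N$ directly, whereas you phrase it as a Fredholm series over $k$; since the kernel has rank $N$ the series truncates at $k=N$, so this difference is cosmetic. Your explicit use of Hadamard's inequality is precisely what the paper's bound ``$|\det O(1)|$ with uniformly bounded entries'' is encoding.

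There is, however, a real gap in your majorant step. The Hadamard bound yields
\[
\bigl|\det\bigl(\tfrac{1}{T}L_{n,N}^{{\rm Log}\Gamma}(x_j/T,x_m/T)\bigr)_{j,m=1}^k\bigr|\le k^{k/2}C_{n,N}(\va)^k\prod_{j=1}^k h(x_j)\tilde h(x_j),
\]
but the integration in the Fredholm series is over $\mathbb R^k$, and $h(x)\tilde h(x)=\e^{(a_{\min}-a_{\max}-3\epsilon)x}$ for $x<0$ is \emph{not} integrable on $(-\infty,-t)$: the exponent is always negative. Integrability on $(-t,\infty)$ alone does not justify dominated convergence over $\mathbb R$, and the condition $a_{\min}-a_{\max}-3\epsilon<0$ you write down is automatic and unrelated to $a_{\max}-a_{\min}<1$. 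The fix is to keep the factor $\sigma_{t/T}(x_j/T)$ in the majorant (as the paper does) and bound it uniformly in $T$: for $x<-t$ and $0<T\le T_0$ one has $\sigma_{t/T}(x/T)\le \e^{(x+t)/T}\le \e^{(x+t)/T_0}$, so that $\sigma_{t/T}(x/T)h(x)\tilde h(x)\le \e^{(x+t)/T_0}\e^{(a_{\min}-a_{\max}-3\epsilon)x}$, which is integrable on $(-\infty,-t)$ once $T_0$ is taken small enough that $1/T_0>a_{\max}-a_{\min}+3\epsilon$. With this correction your argument is complete and matches the paper's.
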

\begin{proof}
As a consequence of the previous result, we have point-wise convergence of the determinants
\[\det\left(\frac{\sigma_{t/T}(x_j/T)}{T}L_{n,N}^{{\rm Log}\Gamma}\left(\frac{x_j}{T},\frac{x_k}{T};[T]_n, T\vec b\right)\right)_{j,k=1}^N\to \det\left(1_{(-t,+\infty)}(x)L_{N}^{\rm LUE+}\left(x_j,x_k;\vec b,n-N\right)\right)_{j,k=1}^N\]
as $T\to 0$, where $\sigma_{t/T}(x/T)=\frac{1}{1+\e^{-\frac{x+t}{T}}}$. Moreover, we have the domination
\[\left|\det\left(\frac{\sigma_{t/T}(x_j/T)}{T}L_{n,N}^{{\rm Log}\Gamma}\left(\frac{x_j}{T},\frac{x_k}{T};[T]_n, T\vec b\right)\right)_{j,k=1}^N\right|\leq C_{n,N}^N\prod_{j=1}^Nh(x_j)\tilde h(x_j)\sigma_{t/T}(x_j/T)\ |\det O(1)|,\]
where $O(1)$ here denotes an $N\times N$ matrix with uniformly bounded entries.
Integrating over $\mathbb R^N$ yields
$\mu_{n,N}^{{\rm Log}\Gamma}[\sigma_{t/T}(./T)]\to \mu_{N}^{\rm LUE+}[1_{(t,+\infty)}]$, which implies the result by \eqref{eq:LaplaceLogGamma1}.
\end{proof}
}

\begin{remark}
The right hand side of \eqref{eq:LUElimit} is the largest eigenvalue distribution in the LUE with external source, and it is well-known \cite{Johansson2, BorodinPeche, DiekerWarren} that this distribution characterizes last passage percolation with exponential weights, which is the zero temperature limit of the Log Gamma polymer. The above results are thus not surprising, but reveal how the zero-temperature limit takes place on the level of the associated biorthogonal measures. Moreover, it now becomes clear that \eqref{eq:LaplaceLogGamma1} is the natural finite temperature generalization of the results from \cite{Johansson2, BorodinPeche, DiekerWarren} which express the Laplace transform of last passage percolation in terms of the largest eigenvalue of the LUE with external source: the biorthogonal measures \eqref{eq:BiOLogGamma} and \eqref{eq:BiOLogGammaconfluent} play a similar role in the Log Gamma polymer as the LUE with external source in last passage percolation. 
\end{remark}
\begin{remark}
Corollary \ref{cor:LogGamma} is consistent with known large $N$ asymptotic results from \cite{BCR, BCD}, stating that suitably rescaled Log Gamma polymer partition function has Tracy-Widom and Baik-Ben Arous-Péché fluctuations. These types of behavior are well-known to occur also in the large $N$ limits of the LUE and the LUE with external source, respectively.
\end{remark}

\section{O'Connell-Yor polymer partition function and GUE with external source}\label{section:OY}

\paragraph{The model.} The O'Connell-Yor polymer \cite{OConnelYor}, or semi-discrete directed polymer, is a semi-discrete version of the Log Gamma polymer. In order to define it, we fix a real positive number $\tau > 0$, a positive integer $N > 0$ and real numbers $a_1,\ldots,a_N$. 
For any $i = 1, \ldots, N$, $B_i$ will denote a one-dimensional Brownian motion with drift $a_i$. In other words, for any $s \geq 0$, $B_i(s) \sim a_i s + B(s)$, where $B(s)$ is a standard one-dimensional Brownian motion. Moreover, we assume that $B_1,\ldots, B_N$ are independent. 
A semi-discrete up-right path between $(0,1)$ and $(\tau,N)$ consists of horizontal segments that can jump vertically between certain points $(s,k)$ and $(s,k+1)$, see Figure \ref{figure: OY}. The energy of such a path $\phi$ is given by the sum of the Brownian increments between the jump points,
\[E(\phi)=B_1(s_1) + (B_2(s_2) - B_2(s_1)) + \ldots + (B_N(\tau) - B_N(s_{N-1})).\]
The O'Connell-Yor polymer partition function is then defined as 
\begin{equation}
	Z_N^{\rm OY}(\vec a, \tau) := \int_{0 < s_1 < \ldots < s_{N-1} < \tau} {\rm e}^{E(\phi)} \d s_1 \cdots \d s_{N-1}.
\end{equation}

\begin{figure}[t]
\begin{center}
    \setlength{\unitlength}{1truemm}
    \begin{picture}(100,50)(0,-10)
    \put(5,2){$(0,1)$} \put(85,51){$(\tau,N)$}
    \put(10,0){\thicklines\circle*{.8}}
    
    \put(90,50){\thicklines\circle*{.8}}

    \put(10,0){\line(1,0){22}}\put(22.7,0){\thicklines\vector(1,0){.0001}}
    \put(32,10){\line(1,0){12}}\put(39.7,10){\thicklines\vector(1,0){.0001}}
    \put(44,20){\line(1,0){16}}\put(53.7,20){\thicklines\vector(1,0){.0001}}
    \put(60,30){\line(1,0){10}}\put(66.7,30){\thicklines\vector(1,0){.0001}}
    \put(70,40){\line(1,0){8}}\put(75.7,40){\thicklines\vector(1,0){.0001}}
    \put(78,50){\line(1,0){12}}\put(85.7,50){\thicklines\vector(1,0){.0001}}
    
    \put(32,0){\line(0,1){10}}\put(32,6.7){\thicklines\vector(0,1){.0001}}
    \put(44,10){\line(0,1){10}}\put(44,16.7){\thicklines\vector(0,1){.0001}}
    \put(60,20){\line(0,1){10}}\put(60,26.7){\thicklines\vector(0,1){.0001}}
    \put(70,30){\line(0,1){10}}\put(70,36.7){\thicklines\vector(0,1){.0001}}
    \put(78,40){\line(0,1){10}}\put(78,46.7){\thicklines\vector(0,1){.0001}} 
    
    \put(26,-3){$(s_1,1)$}    \put(26,11){$(s_1,2)$}
        \put(38,7){$(s_2,2)$}    \put(38,21){$(s_2,3)$}
        \put(54,17){$(s_3,3)$}    \put(54,31){$(s_3,4)$}
        \put(64,27){$(s_4,4)$}    \put(64,41){$(s_4,5)$}
        \put(72,37){$(s_5,5)$}    \put(72,51){$(s_5,6)$}
    
    \put(1,-1){$a_1$}
    \put(1,9){$a_2$}
    \put(1,19){$a_3$}
    \put(1,29){$a_4$}
    \put(1,39){$a_5$}
    \put(1,49){$a_6$}

%

    
    \end{picture}
    \caption{A possible semi-discrete up-right path, with $N=6$, and jump points $s_1,\ldots, s_5$. The weight of a path with jump at $(s_k,k)$ in the O'Connell-Yor polymer depends on the value of $a_k$.}
    \label{figure: OY}
\end{center}
\end{figure}
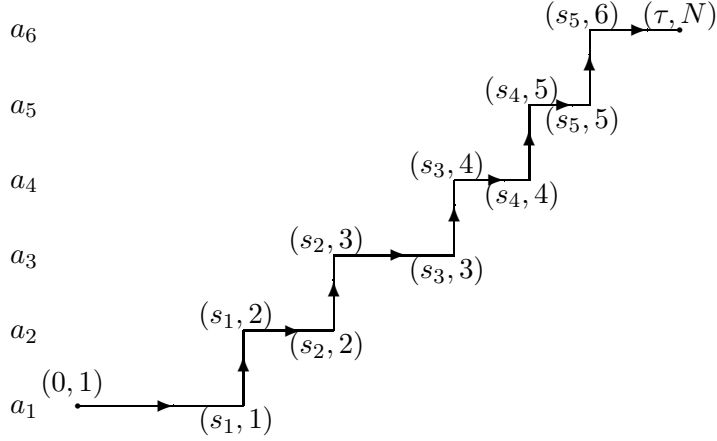

\paragraph{Fredholm determinant identity.}
{Suppose that $|a_i| < \delta_1<1/2$ for all $i=1,\ldots, N$, and take $\delta_2\in(0,1)$ such that $\delta_2>2\delta_1$. Then, Borodin and Corwin \cite[Theorem 5.2.11]{BC} proved the identity
\begin{equation}\label{eq:LaplaceOY}
		\mathbb E \left[ {\rm e}^{-\e^t Z_{N}^{\rm OY}(\va,\tau)}\right] = \det(1 + K_{N,t})_{L^2({\Sigma_N})},
\end{equation}
with
$K_{N,t}$ as in \eqref{def:kernel}, with ${\Sigma_N}$ the circle of radius $\delta_1$ around the origin, $\ell_N$ the vertical line $\Re z=\delta_2$, and $W_N$ given by
 \be\label{def:FOY}
W_N(z)=W_{N}^{\rm OY}(z;\vec a,\tau):=\frac{\e^{\frac{\tau z^2}{2}}}{\prod_{k=1}^N\Gamma(z-a_k)}.
\ee
It is easy to check that $W_N$ satisfies Assumption \ref{assumptions}, because of the decay of $\e^{\frac{\tau z^2}{2}}$ as $\Im z\to \pm\infty$.

\paragraph{Associated biorthogonal measures.}
{In the same fashion as for the Log Gamma polymer, we can combine the identity \eqref{eq:LaplaceOY} with part 4 of Theorem \ref{thm:Fredholm} to relate the partition function of the O'Connell-Yor polymer to an explicit biorthogonal measure. In the homogeneous case, the following corollary was already established by Imamura and Sasamoto in \cite[Theorem 2]{ImamuraSasamoto1}. 
\begin{corollary}\label{finalcor:OY}
	 Consider a set of real parameters $\vec a = (a_i)_{i = 1}^N$ with
$|a_i| < \delta_1<1/2$ for all $i=1,\ldots, N$, and take $\delta_2\in(0,1)$ such that $\delta_2>2\delta_1$.	
Let $W_N$ be as in \eqref{def:FOY}, ${\Sigma_N}$ the positively oriented circle of radius $\delta_1$ around $0$,  and $\ell_N$ the vertical line with real part $\delta_2$.
Then:
\begin{enumerate}
	\item 
			\begin{equation}\label{eq:LaplaceOY1}
				\mathbb E \left[ {\rm e}^{-\e^t Z_{N}^{{\rm OY}}(\va,\tau)}\right] =\mu_N[\sigma_t],\qquad \sigma_t(x)=\frac{1}{1+\e^{-x-t}},
			\end{equation}
			where the measure $\mu_N$ is given by equations \eqref{def:BiOE-F}--\eqref{def:L}.
	\item	
			\begin{equation}\label{eq:LaplaceOY2}
				\frac{\d}{\d t}\log\mathbb E \left[ {\rm e}^{-\e^t Z_{N}^{{\rm OY}}(\va,\tau)}\right] =\int_{\mathbb R}\sigma_t(x)\kappa_{N,t}(x)\d x,
			\end{equation}
			where $\kappa_{N,t}$ is the one-point function \eqref{def:onepoint} of the deformed biorthogonal measure \eqref{def:BiOE-F-deformed}.
	\item In the case where the parameters $\vec a$ are all distinct, the biorthogonal measure $\mu_N$ is given explicitly as
\[	\d\mu_{N}^{{\rm OY}}(\vec x;\vec a,\tau)=\frac{1}{Z_N}\det\left(\e^{a_mx_k}\right)_{k,m=1}^N
			\det\left(\frac{1}{2 \pi \i}\int_{\frac{1}2 + \i \mathbb R} \frac{\e^{\tau \frac{t^2}2} \e^{-tx_k}}{(t - a_m){\prod_{j = 1}^N \Gamma(t - a_j)}}\d t\right)_{m,k=1}^N
			\prod_{k=1}^N\d  x_k.\]
		In the confluent case $a_1=\cdots=a_N=a$, 
it is given by
\[\d\mu_{N}^{{\rm OY}}(\vec x;\vec a,\tau)=\frac{1}{Z_N}\det\left(x_k^{m-1}\right)_{k,m=1}^N
			\det\left(\frac{1}{2 \pi i}\int_{\frac{1}2 + \i \mathbb R} \frac{\e^{-x_k(t - a)} \e^{\tau \frac{t^2}2}}{\Gamma(t-a)^N(t - a)^{N - m + 1}} \d t\right)_{m,k=1}^N
			\prod_{k=1}^N\d  x_k.\]	

\end{enumerate}
	
\end{corollary}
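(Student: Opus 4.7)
The plan is to reduce Corollary \ref{finalcor:OY} entirely to the general machinery developed earlier in the paper, applied to the specific weight $W_N^{{\rm OY}}$ from \eqref{def:FOY}. The preliminary step is to verify that this $W_N^{{\rm OY}}$, together with ${\Sigma_N}$ (the circle of radius $\delta_1$ around the origin) and $\ell_N$ (the vertical line $\Re z = \delta_2$), satisfies Assumption \ref{assumptions}. The zeros of $W_N^{{\rm OY}}$ inside ${\Sigma_N}$ are precisely $a_1,\ldots,a_N$, coming from the poles of $\prod_k \Gamma(z-a_k)$, and $W_N^{{\rm OY}}$ is entire on $\mathbb C$. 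The decay requirement $W_N^{{\rm OY}}(z)=O(|z|^{-\epsilon})$ in a vertical strip is controlled by the Gaussian factor $\e^{\tau z^2/2}$, which decays super-exponentially as $|\Im z|\to\infty$ (since $\Re(z^2)=(\Re z)^2-(\Im z)^2\to -\infty$ in the strip), easily dominating the exponential growth of $1/\Gamma$ predicted by Stirling \eqref{eq:Stirling}; the same argument gives the stronger bound $W_N^{{\rm OY}}(z)=O(|z|^{-1-\epsilon})$ needed for parts 3 and 4 of Theorem \ref{thm:Fredholm}.

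With the assumptions in place, parts 1 and 2 are immediate. Since $|a_i|<\delta_1<1/2$, we have $a_{\max}-a_{\min}<1$, so part 4 of Theorem \ref{thm:Fredholm} applies and rewrites the right-hand side of the Borodin--Corwin identity \eqref{eq:LaplaceOY} as $\mu_N[\sigma_t]$ for $\sigma_t(x)=1/(1+\e^{-x-t})$, giving \eqref{eq:LaplaceOY1}. Part 2 then follows by differentiating in $t$ and invoking the second identity \eqref{eq:logder2} of Theorem \ref{thm: deformed}, noting that $\sigma(x)=1/(1+\e^{-x})$ is $C^1$ with bounded derivative so all hypotheses of that theorem are met.

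For part 3, the distinct case is handled by substituting $W_N^{{\rm OY}}$ into the formula \eqref{BiOE-distinct} for $\d\mu_N$ provided by Theorem \ref{thm:biorth}, using the explicit expression \eqref{def:L1} for $\psi_m$. Choosing the vertical line $\ell_N=\tfrac12+\i\mathbb R$ (which lies in $\mathcal S_N$ and can be reached from $\Re z=\delta_2$ without crossing any zeros of $W_N^{{\rm OY}}$), one obtains
\[\psi_m(\e^{x_k})=\frac{1}{2\pi\i\,(W_N^{{\rm OY}})'(a_m)}\int_{\frac12+\i\mathbb R}\frac{\e^{\tau t^2/2}\,\e^{-tx_k}}{(t-a_m)\prod_{j=1}^N\Gamma(t-a_j)}\,\d t,\]
and absorbing the constant prefactor $1/(W_N^{{\rm OY}})'(a_m)$ into the overall normalization $Z_N$ produces exactly the displayed formula. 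The confluent case $a_1=\cdots=a_N=a$ is handled analogously via \eqref{BiOE-confluent} and the definition \eqref{def:phiF} of $\phi_m$; after substituting $y=\e^{x_k}$ and noting $y^{a-v}=\e^{-x_k(v-a)}$, one reads off the stated integral directly.

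The argument is essentially formal once the machinery of Sections \ref{section:biorthogonal}--\ref{section:deformations} is in place, and the main point requiring modest care is the interplay of contours: one must confirm that $\ell_N$ can be freely moved between the original value $\Re z=\delta_2$ and the convenient choice $\Re z=\tfrac12$ without crossing any zeros of $W_N^{{\rm OY}}$, which is clear since these zeros are confined to $a_1,\ldots,a_N$ inside ${\Sigma_N}$. Beyond this contour deformation, the corollary is a direct specialization of Theorems \ref{thm:biorth}, \ref{thm:Fredholm}, and \ref{thm: deformed} to the O'Connell--Yor weight, exactly in parallel with the treatment of the Log Gamma polymer in Corollary \ref{finalcor:LogGamma}.
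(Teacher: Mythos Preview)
Your proposal is correct and follows essentially the same approach as the paper. The paper does not give a standalone proof of Corollary \ref{finalcor:OY}; it simply remarks that it follows ``in the same fashion as for the Log Gamma polymer'' by combining the Borodin--Corwin identity \eqref{eq:LaplaceOY} with part 4 of Theorem \ref{thm:Fredholm} for part 1, Theorem \ref{thm: deformed} for part 2, and the explicit formulas \eqref{BiOE-distinct}, \eqref{BiOE-confluent} from Theorem \ref{thm:biorth} for part 3, which is precisely what you do.
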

{
\begin{remark}
Using the integral representations of $\{\psi_i\}_{i = 1}^N$ and $\{\phi_i\}_{i = 1}^N$ with $W_N$ given by \eqref{def:FOY}, it is easy to prove that
$$\psi_i({\rm e}^{\bar x}) = \overline{\psi_i({\rm e}^{x})}, \quad \phi_i({\rm e}^{\bar x}) = \overline{\phi_i({\rm e}^{x})},  \quad i = 1,\ldots, N.$$
Consequently, $d\mu_N^{{\rm OY}}$ is a signed real-valued measure.
\end{remark}}

\paragraph{GUE with external source.}%
{The $N\times N$ GUE with external source
can be defined as
\[ \frac{1}{Z_N}\e^{-{\rm Tr}\,\left(\frac{M^2}{2\tau} - AM \right)}\d M,\quad \d M=\prod_{j=1}^N\d M_{jj}\ \prod_{1\leq j<k\leq N}\d\Re M_{jk}\,\d \Im M_{jk}, \] 
where $A$ is a deterministic Hermitian $N\times N$ matrix, and $\tau>0$.
For $A=0$ and $\tau=1$, this is the classical GUE. In general, the random matrix  $M$ is the sum of a GUE matrix with the deterministic matrix $A$, see \cite{BrezinHikami, BrezinHikami2, Johansson, BleherKuijlaars}. The eigenvalue distribution is given by
\be\label{eq:jpdfGUE+}\frac{1}{Z_N}\Delta(\vec x)\ \det\left(\e^{-\frac{x_j^2}{2 \tau}+a_k x_j}\right)_{j,k=1}^N
\prod_{j=1}^N\d x_j,\qquad x_1,\ldots, x_N\in\mathbb R,\ee
if $a_1,\ldots, a_N$ are distinct, and is explicit in confluent cases as well.
For general $a_1,\ldots, a_N$, the eigenvalue distribution is a biorthogonal ensemble with kernel $L_N$ of the form \eqref{def:L}, with $W_N$ given by
\be
\label{def:FGUE+}W_N^{\rm GUE+}(z;\va,{\tau})=\prod_{m=1}^N(z-a_m)\ \e^{{\tau}z^2/2}.
\ee
see e.g. \cite{Forrester}. We will denote the associated kernel, defined by \eqref{def:L}, as $L_N^{\rm GUE+}$.
Biorthogonal systems associated to this measure can be constructed in terms of multiple Hermite polynomials, see  \cite{BleherKuijlaars}.}

\paragraph{$T \to 0$ limit.}
Consider the O'Connell-Yor polymer partition function with rescaled parameters tau parameter $\tau/T^2$ and with re-scaled $a_m$-parameters $T a_1,\ldots, Ta_N$.
Then, the kernel of the associated biorthogonal measure $L_N$ is given by
\[		L_N^{\rm OY}\left(x,x';T\vec a,\tau/T^2\right) = \frac{1}{(2 \pi \i)^2} \int_{\Sigma_N} \d u \int_{\ell_N} \d v \frac{W_N^{\rm OY}(v;T\vec a,\tau/T^2)}{W_N^{\rm OY}(u;T\va,\tau/T^2)} \frac{{\rm e}^{-vx + ux'}}{v - u},\]
or equivalently by
\[		\frac{1}T L_N^{\rm OY}(x/T,x'/ T; T\vec a,\tau/T^2) = \frac{1}{(2 \pi \i)^2} \int_{\widetilde{\Sigma}_N} \d s \int_{\widetilde\ell_N} \d t \frac{\widetilde W_N^{\rm OY}(t;\va,\tau)}{ \widetilde W_N^{\rm OY}(s;\va,\tau)} \frac{{\rm e}^{-tx + sx'}}{t-s},\]
with
\[\widetilde W_N^{\rm OY}(z;\vec a,\tau)=\frac{\e^{\tau z^2/2}}{\prod_{k=1}^N\Gamma(T(z-a_k))}=\frac{T^N \prod_{k=1}^N(z-a_k)}{\prod_{k=1}^N\Gamma((1+T(z-a_k)))}\e^{\tau z^2/2},\] 
and where
\[ \tilde \ell_N := T^{-1}\ell_N, \quad \widetilde \Sigma_N := T^{-1}\Sigma_N .\]
It is easy to see that 
\[\lim_{\tau\to\infty}\frac{\widetilde W_N^{\rm OY}(v;\va,\tau)}{\widetilde W_N^{\rm OY}(u;\va,\tau)}=\frac{W_N^{\rm GUE+}(v;\va,\tau)}{W_N^{\rm GUE+}(u;\va,\tau)},\]
where $W_N^{\rm GUE+}$ is defined in equation \eqref{def:FGUE+} above. We can now prove the following results similarly as for the  Log Gamma polymer, with the simplification that the integrands can be dominated more easily thanks to the Gaussian factors.
\begin{proposition}\label{prop:OY}
{We have the point-wise limit
	\be\label{eq:zerotemplimitOY}\lim_{T \to 0} T^{-1} L_{N}^{{\rm OY}}\left(x/T,x'/T; T\va,\tau/T^2\right)=L_{N}^{\rm GUE+}\left(x,x';\vec a, \tau \right),\qquad x,x'\in\mathbb R,\ee
	and there exists $C_{N}(\vec a)$ such that for $\epsilon>0$ sufficiently small and $T >0$ sufficiently small,
	\be\label{eq:dominatedcvgenceOY}\left| T^{-1} L_{N}^{{\rm OY}}\left(x/T,x'/T;T\va,\tau/T^2\right)\right|\leq C_{N}(\vec a) h(x)\tilde h(x'), \ee
	with $h,\tilde h$ as in \eqref{def:h2}.
	}
\end{proposition}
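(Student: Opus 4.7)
The plan is to mirror the proof of Proposition \ref{prop:LogGamma}, starting from the rescaled double contour integral for $T^{-1}L_N^{\rm OY}(x/T,x'/T;T\va,\tau/T^2)$ exhibited immediately before the statement, and the key simplification compared to the Log Gamma case is that the Gaussian factor $e^{\tau z^2/2}$ present in $\widetilde W_N^{\rm OY}$ supplies strong vertical decay, making all dominated-convergence estimates essentially automatic. First I would fix the contours independently of $T$: since the zeros of $\widetilde W_N^{\rm OY}$ other than $a_1,\ldots,a_N$ sit at $a_k - j/T$ ($j\ge 1$) and recede to $-\infty$ as $T\to 0$, for all sufficiently small $T$ I may deform $\widetilde\Sigma_N$ to a fixed positively-oriented circle $\Sigma$ surrounding $a_1,\ldots,a_N$ inside the strip $a_{\min}-\epsilon \le \Re z \le a_{\max}+\epsilon$, and $\widetilde\ell_N$ to the fixed vertical line $\ell=\{a_{\max}+2\epsilon+\i y:y\in\R\}$. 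By Cauchy's theorem the integral is unchanged.

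The next step is the algebraic reduction that transforms the rescaled O'Connell--Yor integrand into the GUE+ one. Using $\Gamma(1+w)=w\Gamma(w)$, I would factor
\[
\frac{\widetilde W_N^{\rm OY}(t;\va,\tau)}{\widetilde W_N^{\rm OY}(s;\va,\tau)} \;=\; \frac{W_N^{\rm GUE+}(t;\va,\tau)}{W_N^{\rm GUE+}(s;\va,\tau)}\,R_T(s,t),\qquad R_T(s,t):=\prod_{k=1}^N\frac{\Gamma(1+T(s-a_k))}{\Gamma(1+T(t-a_k))},
\]
and observe that $R_T(s,t)\to 1$ uniformly in $s\in\Sigma$ (a compact set) and pointwise in $t\in\ell$. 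The pointwise limit \eqref{eq:zerotemplimitOY} will then follow from Lebesgue's dominated convergence theorem once a uniform-in-$T$ majorant of the integrand is produced. For the bound \eqref{eq:dominatedcvgenceOY} the exponential envelopes come out for free: $|e^{-tx}|=e^{-(a_{\max}+2\epsilon)x}=h(x)$ for $t\in\ell$, and $|e^{sx'}|\le \tilde h(x')$ for $s\in\Sigma$. So both statements reduce to showing that
\[
\int_\Sigma|\d s|\int_\ell|\d t|\,\Bigl|\tfrac{W_N^{\rm GUE+}(t;\va,\tau)}{W_N^{\rm GUE+}(s;\va,\tau)}\Bigr|\,|R_T(s,t)|\,\frac{1}{|t-s|}
\]
is finite uniformly in $T$ small.

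The only real obstacle is controlling the factor $|R_T(s,t)|$ as $|\Im t|\to\infty$ along $\ell$. By Stirling's formula \eqref{eq:Stirling}, $|\Gamma(1+T(t-a_k))|^{-1}$ grows at worst like $e^{\pi T|\Im t|/2}$, so $|R_T(s,t)|\le C_N(\va)\,e^{C T|\Im t|}$ for some constant $C$; for $T$ small this sub-Gaussian growth is annihilated by the factor $|e^{\tau t^2/2}|=e^{\tau(c^2-(\Im t)^2)/2}$ coming from the numerator $W_N^{\rm GUE+}(t;\va,\tau)$. This yields integrability on $\Sigma\times\ell$ uniformly in $T$, and simultaneously justifies the dominated convergence giving \eqref{eq:zerotemplimitOY}. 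The constant $C_N(\va)$ appearing in \eqref{eq:dominatedcvgenceOY} can then be taken as the supremum over small $T$ of the resulting double integral.
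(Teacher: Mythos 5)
Your argument is correct and follows essentially the same route the paper intends: the paper gives no separate proof for this proposition, but instructs the reader to repeat the Log Gamma argument (factor the rescaled ratio into the $W_N^{\rm GUE+}$ ratio times Gamma-correction factors $\Gamma(1+T(\cdot-a_k))$ that tend to $1$, place $\ell$ at $\Re z=a_{\max}+2\epsilon$ and $\Sigma$ in the strip $a_{\min}-\epsilon\leq\Re z\leq a_{\max}+\epsilon$ to extract $h(x)\tilde h(x')$, and apply dominated convergence), with the Gaussian factor absorbing the sub-exponential Stirling growth exactly as you exploit. Your explicit treatment of the contour deformations and of the uniform-in-$T$ majorant matches the intended simplification, so nothing is missing.
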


\begin{corollary}\label{cor:OY}	As $T \to 0$, we have the convergence
	\[\mathbb E \left[ {\rm e}^{-\e^{t/T} Z_{N}^{{\rm OY}}(T\va,\tau/T^2)}\right]\longrightarrow \mathbb P_{\mathrm{GUE+}}\left(\max\{x_1,\ldots, x_N\}\leq -t; \vec a, \tau \right),\]
	where the expectation on the left is with respect to the distribution of the O'Connell-Yor polymer partition function, and the probability on the right is with respect to the GUE with external source, equation \eqref{eq:jpdfGUE+}.
	\end{corollary}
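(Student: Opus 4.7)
The plan is to mirror exactly the argument given for the Log Gamma case in Corollary \ref{cor:LogGamma}, the key inputs being the already-established Proposition \ref{prop:OY} and the Laplace-transform identity of Corollary \ref{finalcor:OY}. First, I would apply \eqref{eq:LaplaceOY1} after the rescalings $t\mapsto t/T$, $\va\mapsto T\va$, $\tau\mapsto\tau/T^2$, to rewrite the left-hand side as the multiplicative statistic
\[
\mathbb E\bigl[\e^{-\e^{t/T}Z_N^{\rm OY}(T\va,\tau/T^2)}\bigr]=\mu_N^{(T)}[\sigma_{t/T}],\qquad \sigma_{t/T}(x)=\frac{1}{1+\e^{-x-t/T}},
\]
where $\mu_N^{(T)}$ is the biorthogonal measure \eqref{def:BiOE-F} built from the kernel $L_N^{\rm OY}(\cdot,\cdot;T\va,\tau/T^2)$. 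On the other side, the max-eigenvalue probability is the multiplicative statistic $\mu_N^{\rm GUE+}[1_{(-t,+\infty)}]$ of the GUE+ biorthogonal measure \eqref{eq:jpdfGUE+} with kernel $L_N^{\rm GUE+}$. The task is thus to show $\mu_N^{(T)}[\sigma_{t/T}]\to \mu_N^{\rm GUE+}[1_{(-t,+\infty)}]$ as $T\to 0$.

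Next, I would expand both sides using the (finite, as it truncates at $k=N$) Fredholm series given by Proposition \ref{prop:Fredholm1}, and perform the change of variables $x_j\mapsto x_j/T$ in each of the $k$-fold integrals making up $\mu_N^{(T)}[\sigma_{t/T}]$. This replaces the integrand with a determinant whose entries are $T^{-1}\sigma_{t/T}(x_j/T)L_N^{\rm OY}(x_j/T,x_m/T;T\va,\tau/T^2)$. As $T\to 0$ one has $\sigma_{t/T}(x/T)\to 1_{(-t,+\infty)}(x)$ pointwise on $\mathbb R\setminus\{-t\}$, while by Proposition \ref{prop:OY} the rescaled kernel converges pointwise to $L_N^{\rm GUE+}(x,x';\va,\tau)$. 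The integrand therefore converges pointwise to the corresponding integrand in the Fredholm expansion of $\mu_N^{\rm GUE+}[1_{(-t,+\infty)}]$.

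To justify swapping the limit and the integrals, I would combine the bound \eqref{eq:dominatedcvgenceOY} with Hadamard's inequality, which yields
\[
\Bigl|\det\bigl(T^{-1}\sigma_{t/T}(x_j/T)L_N^{\rm OY}(x_j/T,x_m/T;T\va,\tau/T^2)\bigr)_{j,m=1}^{k}\Bigr|\leq C_N(\va)^k k^{k/2}\prod_{j=1}^{k}\bigl(h(x_j)\tilde h(x_j)\,\sigma_{t/T}(x_j/T)\bigr).
\]
As in the Log Gamma case, for $T$ in any bounded interval $(0,T_0]$ one can dominate $\sigma_{t/T}(x/T)$ uniformly by $\min\!\bigl(1,\e^{c(x+t)}\bigr)$ for $c=1/T_0$. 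Choosing $T_0$ small enough (equivalently, $c$ large enough) relative to $a_{\max}-a_{\min}$, the product $h(x)\tilde h(x)\,\sigma_{t/T}(x/T)$ is dominated uniformly in $T$ by a function that decays exponentially as $x\to\pm\infty$, hence is integrable on $\mathbb R$, so the dominating function is integrable on $\mathbb R^k$.

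Dominated convergence then gives termwise convergence of each of the $N+1$ integrals in the truncated Fredholm expansion, yielding the claimed limit. I do not anticipate any serious obstacle: the argument is essentially identical to the Log Gamma one, and in fact the Gaussian factor $\e^{\tau z^2/2}$ in $W_N^{\rm OY}$ makes the domination \eqref{eq:dominatedcvgenceOY} of Proposition \ref{prop:OY} and all subsequent estimates somewhat easier than in Proposition \ref{prop:LogGamma}, since the decay of the integrands on the vertical contour $\ell_N$ is now Gaussian rather than merely algebraic.
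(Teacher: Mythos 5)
Your proposal is correct and follows essentially the same route as the paper: the paper proves Corollary \ref{cor:OY} only implicitly, by declaring it ``similar to the Log Gamma case'', and the proof of Corollary \ref{cor:LogGamma} is precisely your argument --- rewrite both sides as multiplicative statistics via \eqref{eq:LaplaceOY1}, rescale variables, use the pointwise convergence and domination of Proposition \ref{prop:OY} together with the factorized Hadamard-type bound $C^k k^{k/2}\prod_j h(x_j)\tilde h(x_j)\sigma_{t/T}(x_j/T)$, and apply dominated convergence to the (truncated) Fredholm series. Your explicit uniform bound $\sigma_{t/T}(x/T)\le\min(1,\e^{(x+t)/T_0})$ with $1/T_0$ large compared to $a_{\max}-a_{\min}$ is exactly the point the paper leaves implicit, so no gap remains.
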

\begin{remark}Like for the Log Gamma polymer, this result should not come as a surprise. The zero temperature version of the O'Connell-Yor polymer is a model of Brownian queues which can be characterized by the GUE \cite{Baryshnikov, GravnerTracyWidom}. The first identity in \eqref{eq:LaplaceOY1} is the finite temperature analogue of these characterizations, in which the role of the GUE (with external source) is played by the biorthogonal measure $\mu_N$.
\end{remark}

\section{Mixed polymer partition function}
\label{section:mixed}
\paragraph{The model.}
The mixed polymer, or O'Connell-Yor polymer with boundary sources, is a mixture of the Log Gamma and the O'Connell-Yor polymer. It was introduced by Borodin, Corwin, Ferrari, and Vet\H{o} in \cite{BCFV}.
Given parameters $n,N\in\mathbb N$, $\tau>0$, $\alpha_1,\ldots, \alpha_n$, and $a_1,\ldots, a_N$ with $\alpha_j-a_k>0$ for every $j=1,\ldots, n$ and $k=1,\ldots, N$, the mixed polymer partition function is defined as
\be\label{def:mixedpolymerpf}Z_{n,N}^{\rm Mixed}(\vec\alpha,\vec a,\tau)=\sum_{k=1}^NZ_{k,N}^{\rm Log\Gamma}(\vec\alpha,a_1,\ldots, a_k)Z_{N-k}^{\rm OY}(a_{k+1},\ldots, a_N,\tau).\ee
It can be interpreted as the partition function for up-right paths consisting of a discrete (Log Gamma polymer type) path from $(1,1)$ to a point $(N,k)$, concatenated with a semi-discrete (O'Connell-Yor polymer type) path from $(N,k)$ to $(N+\tau,n)$, where $N,n,\tau$ are given, and $k$ depends on the chosen path, see Figure \ref{figure: mixedpolymer}.

\paragraph{Fredholm determinant identity.}
As far as we know, a Laplace transform formula of the form \eqref{eq:LaplaceLogGamma} or \eqref{eq:LaplaceOY} has not appeared in the literature for the general mixed polymer partition function, although there are Fredholm determinant identities for the Laplace transform that look alike \cite{BCFV}, but with the kernel defined through integration over an unbounded contour instead of ${\Sigma_N}$ (recall Remark \ref{remark:large}). It is not clear whether such unbounded contour integral Fredholm determinant expressions can be related directly to our class of biorthogonal measures.
There is however another general Fredholm determinant identity obtained by Imamura and Sasamoto in \cite[Proposition 4.5]{ImamuraSasamoto2}:
	\begin{equation}\label{eq:Laplacemixed}
		\mathbb E \left[ {\rm e}^{-\e^t Z_{N,N}^{\rm Mixed}(\valpha,\va,\tau)}\right] = \det(1 -\sigma_t M)_{L^2(\mathbb R)},
	\end{equation}
where $\sigma_t(x)=\frac{1}{1+\e^{-x-t}}$, $M$ is given by
\be\label{def:Lmixed}
M(x,x')=\sum_{k=1}^N \Phi_k(x)\Psi_k(x'),
\ee
and
\begin{align}
&\Phi_k(x)=\frac{1}{2\pi\i}\int_{{\Sigma_N}}\d u\frac{\e^{ux-\tau u^2/2}}{u-a_k}\prod_{j=1}^{k-1}\frac{u-\alpha_j}{u-a_j}\prod_{j=1}^N\frac{\Gamma(1+u-a_j)}{\Gamma(1+\alpha_j-u)}, \label{eq:defPhi}\\
&\Psi_k(x)=\frac{\alpha_k-a_k}{2\pi}\int_{\mathbb R}\d w\frac{\e^{-\i w x-\tau w^2/2}}{\alpha_k-\i w}\prod_{j=1}^{k-1}\frac{\i w-a_j}{\i w-\alpha_j}\prod_{j=1}^N\frac{\Gamma(1+\alpha_j-\i w)}{\Gamma(1+\i w-a_j)}. \label{eq:defPsi}
\end{align}
For the above identity to hold, ${\Sigma_N}$ has to enclose all the points $a_1,\ldots, a_N$ but none of the poles of $\prod_{m=1}^N\Gamma(1+u-a_m)$. {Note that this is only possible when there are no $a_k$-values which differ by an integer. Below, we assume that this is the case. We mention however that our final results Corollary \ref{finalcor:mixed}, Proposition \ref{prop:mixed}, and Corollary \ref{cor:mixed} hold without this assumption. This can be seen by a continuity argument.}

\begin{figure}[t]
\begin{center}
    \setlength{\unitlength}{1truemm}
    \begin{picture}(100,50)(0,-10)
    \put(-5,2){$(1,1)$} \put(85,52){$(\tau+N,N)$}
  \put(0,50){\thicklines\circle*{.8}}    
      \put(0,40){\thicklines\circle*{.8}}
        \put(0,30){\thicklines\circle*{.8}}
          \put(0,20){\thicklines\circle*{.8}}
            \put(0,10){\thicklines\circle*{.8}}
              \put(0,0){\thicklines\circle*{.8}}
    
    \put(10,0){\thicklines\circle*{.8}}
\put(20,0){\thicklines\circle*{.8}}
\put(30,0){\thicklines\circle*{.8}}
\put(40,0){\thicklines\circle*{.8}}
\put(50,0){\thicklines\circle*{.8}}
    
  \put(10,10){\thicklines\circle*{.8}}
\put(20,10){\thicklines\circle*{.8}}
\put(30,10){\thicklines\circle*{.8}}
\put(40,10){\thicklines\circle*{.8}}
\put(50,10){\thicklines\circle*{.8}}

    \put(10,20){\thicklines\circle*{.8}}
\put(20,20){\thicklines\circle*{.8}}
\put(30,20){\thicklines\circle*{.8}}
\put(40,20){\thicklines\circle*{.8}}
\put(50,20){\thicklines\circle*{.8}}
    
    \put(10,30){\thicklines\circle*{.8}}
\put(20,30){\thicklines\circle*{.8}}
\put(30,30){\thicklines\circle*{.8}}
\put(40,30){\thicklines\circle*{.8}}
\put(50,30){\thicklines\circle*{.8}}
    
\put(10,40){\thicklines\circle*{.8}}
\put(20,40){\thicklines\circle*{.8}}
\put(30,40){\thicklines\circle*{.8}}
\put(40,40){\thicklines\circle*{.8}}
\put(50,40){\thicklines\circle*{.8}}
    
    \put(10,50){\thicklines\circle*{.8}}
\put(20,50){\thicklines\circle*{.8}}
\put(30,50){\thicklines\circle*{.8}}
\put(40,50){\thicklines\circle*{.8}}
\put(50,50){\thicklines\circle*{.8}}

    \put(90,50){\thicklines\circle*{.8}}  
    
    \put(0,0){\line(1,0){10}}\put(6.7,0){\thicklines\vector(1,0){.0001}}    
    \put(10,0){\line(1,0){10}}\put(16.7,0){\thicklines\vector(1,0){.0001}}
    \put(20,10){\line(1,0){10}}\put(26.7,10){\thicklines\vector(1,0){.0001}}
    \put(30,10){\line(1,0){10}}\put(36.7,10){\thicklines\vector(1,0){.0001}}
    \put(40,20){\line(1,0){10}}\put(46.7,20){\thicklines\vector(1,0){.0001}}
    \put(50,20){\line(1,0){6}}\put(54.7,20){\thicklines\vector(1,0){.0001}}
    \put(56,30){\line(1,0){14}}\put(64.7,30){\thicklines\vector(1,0){.0001}}
    \put(70,40){\line(1,0){8}}  \put(75.7,40){\thicklines\vector(1,0){.0001}}  
    \put(78,50){\line(1,0){12}} \put(85.7,50){\thicklines\vector(1,0){.0001}}
    
    \put(20,0){\line(0,1){10}}\put(20,6.7){\thicklines\vector(0,1){.0001}}
    \put(40,10){\line(0,1){10}}\put(40,16.7){\thicklines\vector(0,1){.0001}}
    \put(56,20){\line(0,1){10}}\put(56,26.7){\thicklines\vector(0,1){.0001}}
    \put(70,30){\line(0,1){10}}\put(70,36.7){\thicklines\vector(0,1){.0001}}
    \put(78,40){\line(0,1){10}}\put(78,46.7){\thicklines\vector(0,1){.0001}} 
    
\put(51,17){$(s_3,3)$}    \put(51,31){$(s_3,4)$}
\put(65,27){$(s_4,4)$}    \put(65,41){$(s_4,5)$}
\put(73,37){$(s_5,5)$}    \put(73,51){$(s_5,6)$}    
        \put(-2,-7){$\alpha_1$}
    \put(8,-7){$\alpha_2$}
    \put(18,-7){$\alpha_3$}
    \put(28,-7){$\alpha_4$}
    \put(38,-7){$\alpha_5$}
    \put(48,-7){$\alpha_6$}
    
    \put(-9,-1){$a_1$}
    \put(-9,9){$a_2$}
    \put(-9,19){$a_3$}
    \put(-9,29){$a_4$}
    \put(-9,39){$a_5$}
    \put(-9,49){$a_6$}

%

    
    \end{picture}
    \caption{An up-right path in the mixed polymer model, corresponding to $n=N=6$, $k=3$. The values $\alpha_j, a_k$ determine the weight of the point $(j,k)$ in the discrete part of the lattice, while the values $a_k$ determine the weight of jump points $(s_k,k)$ in the lattice.}
    \label{figure: mixedpolymer}
\end{center}
\end{figure}
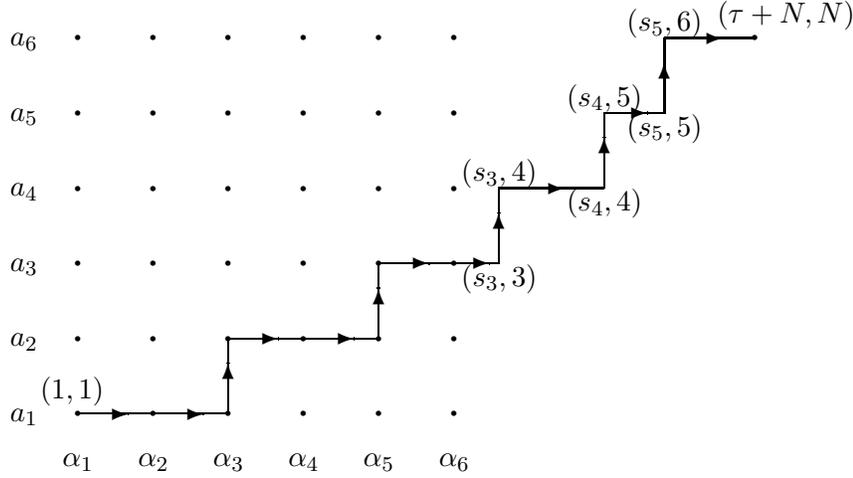

\paragraph{Associated biorthogonal measures.}
{We will show, starting from equation \eqref{eq:Laplacemixed}, that the mixed polymer partition function is also related to an explicit biorthogonal measure of the form \eqref{BiOE-distinct}. We start with the following 
\begin{lemma}
	Let $N \in \mathbb N, \{ a_1,\ldots,a_N \} $ and $ \{\alpha_1,\ldots,\alpha_N \}$ two sets of real parameters such that $\alpha_j - a_k > 0$ for all $j,k = 1, \cdots, N$. Then the two families of functions $\{ \Phi_k, \; k = 1, \ldots, N\}$ and $\{ \Psi_k, \; k = 1, \ldots N \}$, defined in \eqref{eq:defPhi}--\eqref{eq:defPsi}, are biorthogonal:
\[\int_{\mathbb R}\Phi_k(x)\Psi_m(x)\d x=\delta_{km},\qquad k,m=1,\ldots, N.\]
\end{lemma}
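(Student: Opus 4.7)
The plan is to establish the kernel-level identity $M(x,x') := \sum_{k=1}^N \Phi_k(x)\Psi_k(x') = L_N(x',x)$, where $L_N$ is the biorthogonal kernel from \eqref{def:L} associated with the choice $W_N = W_{N}^{{\rm Mixed}}$, the same contour $\Sigma_N$, and a vertical line $\ell_N$ situated to the right of $\Sigma_N$. Once this identification is in hand, the reproducing property \eqref{eq:reproducing0} transfers to $M$, giving $\int M(x,t)M(t,x')\d t = M(x,x')$. Expanding both sides as $\sum_{k,m}\Phi_k(x)\Psi_m(x')\int\Psi_k(y)\Phi_m(y)\d y$ and invoking the linear independence of the two families forces the Gram matrix $\bigl(\int_{\mathbb R}\Phi_k(x)\Psi_m(x)\d x\bigr)_{k,m}$ to coincide with the identity. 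Linear independence of $\{\Phi_k\}$ is transparent from its residue expansion $\Phi_k(x) = \sum_{j=1}^k c_{j,k}\,\e^{a_j x - \tau a_j^2/2}$, in which the matrix $(c_{j,k})$ is lower-triangular with nonzero diagonal entries; linear independence of $\{\Psi_k\}$ then follows from the fact that the operator with kernel $L_N$ has rank exactly $N$, as reflected by $\int L_N(x,x)\d x = N$ in Proposition \ref{prop:projection}.

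To prove $M(x,x') = L_N(x',x)$, I would first shift the $w$-contour in the definition of $\Psi_k$ from $\mathbb R$ to $\mathbb R - \i \delta_2$, which moves the underlying variable $v = \i w$ from $\i\mathbb R$ to $\ell_N = \delta_2 + \i\mathbb R$ for an appropriate $\delta_2>0$; under mild conditions on the location of the $\alpha_j$'s, no poles of the integrand are crossed in the strip $0 < \Re v < \delta_2$. Substituting both double-contour representations and factoring out the $k$-independent Gamma ratio $G(u)/G(v)$ with $G(z) := \prod_{j=1}^N \Gamma(1+z-a_j)/\Gamma(1+\alpha_j-z)$, the sum becomes
\[
M(x,x') = \frac{1}{(2\pi\i)^2}\int_{\Sigma_N}\d u\int_{\ell_N}\d v\,\frac{G(u)}{G(v)}\,\e^{\tau(v^2 - u^2)/2}\,\e^{ux-vx'}\,\sum_{k=1}^N S_k(u,v),
\]
where $S_k(u,v) := \frac{\alpha_k - a_k}{(u-a_k)(\alpha_k - v)}\prod_{j=1}^{k-1}\frac{(u-\alpha_j)(v-a_j)}{(u-a_j)(v-\alpha_j)}$. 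The crucial algebraic input is the telescoping identity
\[
\sum_{k=1}^N S_k(u,v) = \frac{1 - T_N(u,v)}{u-v},\qquad T_N(u,v) := \prod_{k=1}^N \frac{(u-\alpha_k)(v-a_k)}{(u-a_k)(v-\alpha_k)},
\]
which follows from the elementary cancellation $(u-\alpha_k)(v-a_k) - (u-a_k)(v-\alpha_k) = (\alpha_k - a_k)(u-v)$ and the resulting recursion $T_k - T_{k-1} = -S_k\,(u-v)$.

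Splitting $M$ according to $1 - T_N$, the contribution of the constant $1$ vanishes because its $u$-integrand $G(u)\,\e^{-\tau u^2/2 + ux}/(u-v)$ is holomorphic inside $\Sigma_N$: the poles of $G$ are located at $u = a_j - 1 - \ell$ for $\ell\geq 0$, exterior to $\Sigma_N$, while the pole at $u = v$ lies on $\ell_N$, to the right of $\Sigma_N$. For the remaining contribution coming from $T_N$, a short calculation using $\Gamma(z+1) = z\,\Gamma(z)$ yields the identities $G(u)\prod_j(u-\alpha_j)/(u-a_j) = (-1)^N\,\e^{\tau u^2/2}/W_N^{{\rm Mixed}}(u)$ and $G(v)^{-1}\prod_j(v-a_j)/(v-\alpha_j) = (-1)^N\,W_N^{{\rm Mixed}}(v)\,\e^{-\tau v^2/2}$, so that all Gaussian factors cancel and the integrand collapses to $\frac{W_N^{{\rm Mixed}}(v)}{W_N^{{\rm Mixed}}(u)}\,\frac{\e^{ux-vx'}}{u-v}$, which is precisely $-L_N(x',x)$. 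The main technical obstacle is the contour shift of $\Psi_k$ from $\mathbb R$ to $\mathbb R-\i\delta_2$, which requires careful tracking of the positions of $\alpha_j, a_j$ relative to $\delta_1,\delta_2$; once this is justified, the rest reduces to a clean residue calculation combined with the telescoping identity.
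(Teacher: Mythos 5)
Your proposal is correct, but it takes a genuinely different route from the paper. The paper's proof is a direct half-page computation: it writes $\Psi_m$ as a Fourier transform of an explicit function $g_m$, interchanges the $x$-integration with the $u$-integration defining $\Phi_k$, and uses Fourier inversion to collapse $\int_{\mathbb R}\Phi_k\Psi_m\,\d x$ into a single contour integral over $\Sigma_N$ of a \emph{rational} function (all Gaussian and Gamma factors cancel in the ratio $g_m(-\i u)/g_k(-\i u)$); the cases $m=k$, $m>k$, $m<k$ are then handled by a residue at $a_k,\alpha_k$, analyticity inside $\Sigma_N$, and deformation to a large circle, respectively. You instead prove the stronger kernel identity $\sum_k\Phi_k(x)\Psi_k(x')=L_N(x',x)$ for $W_N=W_N^{\rm Mixed}$ in \eqref{def:Fmixedpolymer}, via the telescoping partial-fraction identity, and then deduce biorthogonality from the reproducing property \eqref{eq:reproducing0} plus linear independence. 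Your algebra checks out: the telescoping identity, the vanishing of the ``$1$'' contribution (the $u$-integrand is analytic inside $\Sigma_N$ since the poles of $\Gamma(1+u-a_j)$ and the point $u=v$ lie outside), and the $\Gamma(z+1)=z\Gamma(z)$ rewriting that produces $W_N^{\rm Mixed}$ are all correct, and the contour repositioning you flag is exactly the same point the paper glosses over (one needs the vertical line to lie to the right of $\Sigma_N$ and to the left of $\min_j\alpha_j$, which the hypothesis $\alpha_j-a_k>0$ permits). What your route buys is that the kernel identity is essentially the content of the paper's \emph{next} proposition (equality of $\widetilde\mu_N$ with $\mu_N$), which the paper establishes only for distinct $a_k$ by comparing linear spans and then extends by continuity; your computation gives it directly and uniformly. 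The price is that, for the lemma alone, you invoke heavier machinery than the paper's elementary Fourier-plus-residues argument.

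Two small repairs are needed. First, your justification of the linear independence of $\{\Psi_k\}$ is incomplete as stated: $\int_{\mathbb R}L_N(x,x)\,\d x=N$ alone does not force rank $N$ (a rank-one kernel can have trace $N$). You must combine the trace with idempotency — Proposition \ref{prop:projection} gives both $L_N\circ L_N=L_N$ and trace $N$, and a finite-rank idempotent has trace equal to its rank — or else argue independence of the $\Psi_k$ directly by a triangular structure, as you do for the $\Phi_k$. Second, your residue expansion of $\Phi_k$ (lower-triangular with nonzero diagonal) presupposes distinct $a_j$; in confluent cases the rank argument just mentioned (or a continuity argument, as the paper uses elsewhere) covers independence. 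With these one-line patches the proof is complete.
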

\begin{proof}
We first observe that $\Psi_m(x)$ is a Fourier transform:
\[\Psi_m(x)=\frac{\alpha_m-a_m}{2\pi}\mathcal F[g_m]\left(\frac{x}{2\pi}\right),\quad g_m(w)=\frac{\e^{-\tau w^2/2}}{\alpha_m-\i w}\prod_{j=1}^{m-1}\frac{\i w-a_j}{\i w-\alpha_j}\prod_{j=1}^N\frac{\Gamma(1+\alpha_j-\i w)}{\Gamma(1+\i w-a_j)},\]
where
\[\mathcal F[g](t):=\int_{\mathbb R}g(w)\e^{-2\pi \i w t}\d t,\quad \mathcal F^{-1}[G](w)=\int_{\mathbb R}G(t)\e^{2\pi \i w t}\d t.\]
It follows that
\begin{align*}\int_{\mathbb R}\Phi_k(x)\Psi_m(x)\d x&=-\frac{\alpha_m-a_m}{4\pi^2\i}\int_{{\Sigma_N}}\d u \frac{1}{(u-a_k)(u-\alpha_k)g_k(-\i u)}\int_{\mathbb R}\d x \e^{ux} \mathcal F[g_m]\left(\frac{x}{2\pi}\right)\d x\\
&=-\frac{\alpha_m-a_m}{2\pi\i}\int_{{\Sigma_N}}\d u \frac{1}{(u-a_k)(u-\alpha_k)g_k(-\i u)}\int_{\mathbb R}\d s \e^{2\pi s u} \mathcal F[g_m]\left(s\right)\d s\\
&=-\frac{\alpha_m-a_m}{2\pi\i}\int_{{\Sigma_N}}\d u \frac{1}{(u-a_k)(u-\alpha_k)g_k(-\i u)}\left(\mathcal F^{-1}\circ \mathcal F\right){[g_m]}\left(-\i u\right)\\&=-\frac{\alpha_m-a_m}{2\pi\i}\int_{{\Sigma_N}}\d u \frac{1}{(u-a_k)(u-\alpha_k)}\frac{g_m\left(-\i u\right)}{g_k(-\i u)}.\end{align*}
If $m=k$, this is equal to $-\frac{\alpha_m-a_m}{2\pi\i}\int_{{\Sigma_N}}\d u \frac{1}{(u-a_k)(u-\alpha_k)}=1$. If $m>k$, the integrand is analytic such that the integrand is $0$. If $m<k$, the integrand is a rational function without zeros in the exterior of ${\Sigma_N}$ and decaying like $O(1/u^2)$ as $u\to\infty$, such that we can deform ${\Sigma_N}$ to a large circle to see that the integral is $0$.
\end{proof}
Using the techniques from Section \ref{subsec:finitedet}, we can then conclude that the kernel $M$ defined in \eqref{def:Lmixed} is a kernel
satisfying the reproducing property, which defines
 the $N$-point biorthogonal measure
 \be\label{def:BiOE-M}
\d\widetilde\mu_N(x_1,\ldots, x_N):=\frac{1}{N!}
\det\left(M(x_m,x_k)\right)_{m,k=1}^N
\prod_{k=1}^N\d  x_k,
\ee
 which can be equivalently rewritten as 
 \be\label{eq:tildemu}
\d\widetilde\mu_N(\vec x)=\frac{1}{N!}
\det\left(\Phi_j(x_m)\right)_{m,j=1}^N
\det\left(\Psi_j(x_k)\right)_{j,k=1}^N
\prod_{k=1}^N\d  x_k.
\ee

We will now show that the above biorthogonal measure is of the form \eqref{def:BiOE-F}.

\begin{proposition}
Let $N \in \mathbb N, \{ a_1,\ldots,a_N \} $ and $ \{\alpha_1,\ldots,\alpha_N \}$ two sets of real parameters such that $\alpha_j - a_k > 0$ for all $j,k = 1, \cdots, N$.
The measure \eqref{eq:tildemu}
is equal to \eqref{def:BiOE-F} and \eqref{eq:BiOL}, with ${\Sigma_N}$ in the left half plane and enclosing $a_1,\ldots, a_N$ but none of the poles of $\prod_{j=1}^N\Gamma(1+v-a_j)$, $\ell_N$ a vertical line at the right of $\Sigma_N$, and $W_N$ given by
\be\label{def:Fmixedpolymer}
W_N(z)\equiv W_{N}^{\rm Mixed}(z;\vec\alpha,\vec a,\tau) :=\e^{\tau z^2/2}\frac{\prod_{j=1}^N\Gamma(\alpha_j-z)}{\prod_{k=1}^N\Gamma(z-a_k)}.
\ee
\end{proposition}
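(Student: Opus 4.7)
The plan is to prove the functional identity $M(x,x') = L_N(x',x)$, where $L_N$ is the double contour integral kernel of \eqref{def:L} with $W_N = W_N^{\rm Mixed}$ as in \eqref{def:Fmixedpolymer}. This identity is enough to conclude: the matrix $(M(x_j,x_k))_{j,k=1}^N$ is then the transpose of $(L_N(x_j,x_k))_{j,k=1}^N$, and determinants are transposition-invariant, so $\d\widetilde\mu_N = \d\mu_N$.

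The first step is to rewrite $\Psi_k$ as a contour integral over $\i\mathbb R$ via the substitution $v = \i w$; since $\Sigma_N$ lies in the left half plane, $\i\mathbb R$ is a legitimate choice of $\ell_N$. Then, by the functional equation $\Gamma(z+1) = z\Gamma(z)$, the ratios $\prod_{j=1}^N\frac{\Gamma(1+u-a_j)}{\Gamma(1+\alpha_j-u)}$ and $\prod_{j=1}^N\frac{\Gamma(1+\alpha_j-v)}{\Gamma(1+v-a_j)}$ appearing in $\Phi_k$ and $\Psi_k$ can be absorbed into $1/W_N(u)$ and $W_N(v)$ respectively, at the cost of extra linear factors. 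After these manipulations,
\[
\Phi_k(x)\Psi_k(x') = \frac{1}{(2\pi\i)^2}\int_{\Sigma_N}\d u\int_{\i\mathbb R}\d v\,\frac{W_N(v)}{W_N(u)}\,\e^{ux-vx'}\,A_k(u,v),
\]
with
\[
A_k(u,v) := (\alpha_k - a_k)\,\frac{\prod_{j=k+1}^N (u-a_j)(\alpha_j-v)}{\prod_{j=k}^N (\alpha_j-u)(v-a_j)}.
\]

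The key step is a telescoping identity for $A_k$. Setting $P_k := \prod_{j=k+1}^N \frac{(u-a_j)(\alpha_j-v)}{(\alpha_j-u)(v-a_j)}$ with $P_N = 1$, the elementary identity
\[
(u-a_k)(\alpha_k-v) - (\alpha_k-u)(v-a_k) = (u-v)(\alpha_k-a_k)
\]
yields $A_k = (P_{k-1}-P_k)/(u-v)$. Summing over $k=1,\ldots,N$ telescopes to $\sum_{k=1}^N A_k = (1-P_0)/(v-u)$, where $P_0 = \prod_{j=1}^N \frac{(u-a_j)(\alpha_j-v)}{(\alpha_j-u)(v-a_j)}$.

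Consequently, $\sum_k\Phi_k(x)\Psi_k(x')$ splits into two double integrals. The contribution from the constant $1$ in $1-P_0$ is precisely $L_N(x',x)$ with the choice $\ell_N = \i\mathbb R$. For the contribution from $-P_0$, applying $\Gamma(z+1) = z\Gamma(z)$ in reverse writes the $u$-dependence of the integrand as $\e^{-\tau u^2/2}\prod_k\Gamma(1+u-a_k)/\prod_j\Gamma(1+\alpha_j-u)$ times entire factors. The hypothesis that $\Sigma_N$ encloses $a_1,\ldots,a_N$ but none of the shifted poles $a_k-1,a_k-2,\ldots$ of $\prod_k\Gamma(1+u-a_k)$, together with the fact that $v\in\i\mathbb R$ lies outside $\Sigma_N$ (so $1/(v-u)$ has no pole inside), guarantees that this $u$-integrand is analytic throughout the interior of $\Sigma_N$ and thus vanishes by Cauchy's theorem. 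The main obstacle is precisely this analyticity verification, which depends crucially on the placement of $\Sigma_N$ relative to the shifted Gamma poles and explains the specific contour hypothesis in the statement.
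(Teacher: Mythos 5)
Your proof is correct, but it follows a genuinely different route from the paper. You establish the pointwise kernel identity $M(x,x')=L_N(x',x)$ by hand: after converting $\Psi_k$ to a $v$-integral over $\i\mathbb R$ and absorbing the Gamma ratios into $W_N^{\rm Mixed}$ (the two factors $(-1)^{k-1}$ indeed cancel), the elementary identity $(u-a_k)(\alpha_k-v)-(\alpha_k-u)(v-a_k)=(u-v)(\alpha_k-a_k)$ telescopes the sum over $k$ to $\frac{1-P_0}{v-u}$, and the $P_0$-term dies by Cauchy's theorem because $\frac{P_0(u,v)}{W_N(u)}=\e^{-\tau u^2/2}\prod_j\frac{\Gamma(1+u-a_j)}{\Gamma(1+\alpha_j-u)}$ has its only $u$-poles at $a_j-1,a_j-2,\ldots$, which are excluded from the interior of $\Sigma_N$ by hypothesis, while $u=v$ lies outside $\Sigma_N$ since $v\in\i\mathbb R$ and $\Sigma_N$ is in the left half plane; transposition invariance of the determinant then gives $\d\widetilde\mu_N=\d\mu_N$. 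The paper instead argues at the level of linear spans: it evaluates $\Phi_k$ by residues to see that $\Phi_1,\ldots,\Phi_N$ span the same space as $\e^{a_1x},\ldots,\e^{a_Nx}$, rewrites $\Psi_k$ as $\int_{\ell_N}\e^{-ux}\frac{P_{N-1}(u)}{\prod_j(u-a_j)}W_N(u)\,\d u$ to see that $\Psi_1,\ldots,\Psi_N$ span the same space as $\psi_1(\e^x),\ldots,\psi_N(\e^x)$, and then uses the biorthogonality of both families (the preceding Lemma together with Theorem \ref{thm:biorth}) to identify the two measures, first for distinct $a_j$ and then by continuity. Your computation buys a stronger statement (equality of kernels, not just of the determinantal measures), works uniformly in the parameters without the distinctness-plus-continuity step, and does not invoke the biorthogonality Lemma at all; the paper's argument is softer, avoids the explicit telescoping/partial-fraction manipulation, and makes transparent that only the spans and the biorthogonal normalization matter. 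The only points you treat as implicitly as the paper does are the routine Fubini/absolute-convergence justifications (harmless here, since $\Sigma_N$ is compact, $\operatorname{dist}(\Sigma_N,\i\mathbb R)>0$, and the Gaussian factor in $W_N(v)$ dominates on $\i\mathbb R$) and the identification of $\i\mathbb R$ with an admissible choice of $\ell_N$, which requires the same implicit pole-avoidance conditions on the $\alpha_j$ that the paper also leaves unstated.
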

\begin{proof}
To show that $\widetilde\mu_N$ is equal to $\mu_N$ with $W_N$ given by \eqref{def:Fmixedpolymer}, it is sufficient to prove that these measures are equal in case $a_1,\ldots, a_N$ are distinct, by continuity.
In that case, we can easily evaluate the functions $\Phi_1, \ldots, \Phi_N$ using the residue theorem, which implies that they have the same linear span as $\e^{a_1 x},\ldots, \e^{a_N x}$. Similarly, we show that $\Psi_1, \ldots, \Psi_N$ have the same linear span as $\psi_1(\e^x),\ldots, \psi_N(\e^x)$ defined in \eqref{def:L1}: we have
{\begin{align*}\Psi_k(x)&=\frac{\alpha_k-a_k}{2\pi\i}\int_{\mathbb R}\d w \e^{-\i w x-\tau w^2/2}\frac{\prod_{j=1}^{k-1}(\i w-a_j)}{\prod_{j=1}^k{(\i w-\alpha_j)}}\prod_{j=1}^N\frac{\Gamma(1+\alpha_j-\i w)}{\Gamma(1+\i w-a_j)}\\
&=\pm \frac{\alpha_k-a_k}{2\pi\i}\int_{\mathbb R}\d w \e^{-\i w x-\tau w^2/2}\frac{\prod_{j=1}^{k-1}(\i w-a_j)\ \prod_{j=k+1}^N(\i w-\alpha_j)}{\prod_{j=1}^N{(\i w-\alpha_j)}\ }\prod_{j=1}^N\frac{\Gamma(\alpha_j-\i w)}{\Gamma(\i w-a_j)}\\
&=\pm \frac{\alpha_k-a_k}{2\pi}\int_{\ell_N}\d u \e^{-u x+\tau u^2/2}\frac{\prod_{j=1}^{k-1}(u-a_j)\prod_{j=k+1}^N{(u-\alpha_j)}}{\prod_{j=1}^N{(u-a_j)}}\prod_{j=1}^N\frac{\Gamma(\alpha_j-u)}{\Gamma(u-a_j)}\\
&=\pm \frac{\alpha_k-a_k}{2\pi}\int_{\ell_N}\d u \e^{-u x}\frac{\prod_{j=1}^{k-1}(u-a_j)\prod_{j=k+1}^N{(u-\alpha_j)}}{\prod_{j=1}^N{(u-a_j)}}W_N(u).\end{align*}}
On the other hand, we already know that 
$\psi_1(\e^x),\ldots, \psi_N(\e^x)$ span every function of the form
\[\int_{\ell_N}\d w \e^{-u x}\frac{P_{N-1}(u)}{\prod_{j=1}^N{(u-a_j)}}W_N(u),\]
with $P_{N-1}$ a polynomial of degree $\leq N-1$. We conclude that {$\Psi_1(x), \ldots, \Psi_N(x)$} and $\psi_1(\e^x),\ldots, \psi_N(\e^x)$ have the same linear span, such that
\eqref{eq:tildemu} can be rewritten as
\[\d\widetilde\mu_N(\vec x):=\frac{1}{N!}
\det\left(\e^{a_j x_m}\right)_{m,j=1}^N
\det\left(\psi_j(\e^{x_k})\right)_{j,k=1}^N
\prod_{k=1}^N\d  x_k,\]
which is equal to $\d\mu_N(\vec x)$ by \eqref{BiOE-distinct}. The result if proved.
\end{proof}
Combining the above results with Theorem \ref{thm: deformed}, we obtain the following characterization of the mixed polymer partition function in terms of the biorthogonal measures $\mu_N$ and $\nu_{N,t}$.

\begin{corollary}\label{finalcor:mixed}
	Let $N \in \mathbb N$, and let $\{ a_1,\ldots,a_N \} $, $ \{\alpha_1,\ldots,\alpha_N \}$ two sets of real parameters such that $\alpha_j - a_k > 0$ for all $j,k = 1, \cdots, N$.
Then:
\begin{enumerate}
	\item 
			\begin{equation}\label{eq:Laplacemixed2}
				\mathbb E \left[ {\rm e}^{-\e^t Z_{N,N}^{{\rm Mixed}}(\valpha,\va,\tau)}\right] =\mu_N[\sigma_t],\qquad \sigma_t(x)=\frac{1}{1+\e^{-x-t}},
			\end{equation}
			where the measure $\mu_N$ is given by equations \eqref{def:BiOE-F}, \eqref{def:L} with $W_N$ defined by \eqref{def:Fmixedpolymer}, with $\Sigma_N$ a loop in the left half plane enclosing $a_1,\ldots, a_N$ but none of the poles of $\prod_{j=1}^N\Gamma(1+v-a_j)$, and $\ell_N$ a vertical line at the right of $\Sigma_N$.
	\item	
			\begin{equation}\label{eq:logdermixed}
				\frac{\d}{\d t}\log\mathbb E \left[ {\rm e}^{-\e^t Z_{N,N}^{{\rm Mixed}}(\valpha,\va,\tau)}\right] =\int_{\mathbb R}\sigma_t(x)\kappa_{N,t}(x)\d x,
			\end{equation}
			where $\kappa_{N,t}$ is the one-point function \eqref{def:onepoint} of the deformed biorthogonal measure \eqref{def:BiOE-F-deformed}.
	\item In the case where the parameters $\vec a$ are all distinct,
the biorthogonal measure $\mu_N$ is given explicitly as
\[	\d\mu_{N}^{{\rm Mixed}}(\vec x;\vec\alpha,\vec a,\tau)=\frac{1}{Z_N}\det\left(\e^{a_mx_k}\right)_{k,m=1}^N
			\det\left(\frac{1}{2 \pi \i}\int_{\frac{1}2 + \i \mathbb R} \frac{{\rm e}^{\tau \frac{t^2}2} \prod_{j = 1}^n \Gamma(\alpha_j - t)}{(t - a_m) \prod_{k = 1}^N \Gamma(t - a_k)} {\e}^{-tx_k}\d t\right)_{m,k=1}^N
			\prod_{k=1}^N\d  x_k.\]
		In the confluent case $a_1=\cdots=a_N=a$, 
it is given by
\[\d\mu_{N}^{{\rm Mixed}}(\vec x;\vec\alpha,\vec a,\tau)=\frac{1}{Z_N}\det\left(x_k^{m-1}\right)_{k,m=1}^N
			\det\left(\frac{1}{2 \pi i}\int_{\frac{1}2 + \i \mathbb R} \frac{{\rm e}^{\tau \frac{t^2}2}{ \prod_{j = 1}^n\Gamma(\alpha_j - t)} {\e}^{-(t - a)x_k}}{(t - a)^{N - m +1} \Gamma(t - a)^N} \mathrm{d}t\right)_{m,k=1}^N
			\prod_{k=1}^N\d  x_k.\]			
\end{enumerate}
	
\end{corollary}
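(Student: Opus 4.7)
The plan is to assemble Corollary \ref{finalcor:mixed} from three ingredients already available: (i) the Imamura--Sasamoto Fredholm determinant identity \eqref{eq:Laplacemixed}, (ii) the identification $\widetilde\mu_N=\mu_N$ proved in the preceding proposition, which in particular gives an equality of kernels $M=L_N$ (after the change-of-basis argument, since $\sum_k F_k(x)G_k(x')$ only depends on the pair of spans and the biorthogonal pairing, not on the chosen basis), and (iii) the general results Theorems \ref{thm:biorth}, \ref{thm:Fredholm}, and \ref{thm: deformed} specialized to $W_N=W_N^{\rm Mixed}$ from \eqref{def:Fmixedpolymer}. As a preliminary step, I would verify Assumption \ref{assumptions} for $W_N^{\rm Mixed}$: analyticity in a vertical strip that contains $a_1,\ldots,a_N$ but avoids the poles $\alpha_j,\alpha_j+1,\ldots$ of $\Gamma(\alpha_j-z)$, and decay in $|{\rm Im}\, z|$, which is immediate from the Gaussian factor $e^{\tau z^2/2}$ dominating the polynomial-times-$e^{-\pi N|{\rm Im}\, z|/2}$ size of the Gamma ratio obtained from Stirling \eqref{eq:Stirling}. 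The stronger decay hypothesis required in parts 2--4 of Theorem \ref{thm:Fredholm} is a fortiori satisfied.

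For part 1, the Imamura--Sasamoto identity \eqref{eq:Laplacemixed} gives
\[
\mathbb E\bigl[e^{-e^t Z_{N,N}^{\rm Mixed}(\valpha,\va,\tau)}\bigr]=\det(1-\sigma_t M)_{L^2(\mathbb R)},
\]
the right-hand side being the truncated Fredholm series associated to $M$. Since $M=L_N$ by the preceding proposition, the series coincides with $\det(1-\sigma_t L_N)_{L^2(\mathbb R)}$, which by Theorem \ref{thm:Fredholm}(2) equals $\mu_N[\sigma_t]$. Alternatively, Proposition \ref{prop:Fredholm1} applied directly to the measure $\widetilde\mu_N=\mu_N$ gives $\det(1-\sigma_t M)_{L^2(\mathbb R)}=\widetilde\mu_N[\sigma_t]=\mu_N[\sigma_t]$.

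Part 2 is then a direct consequence: taking $\frac{d}{dt}\log$ of the identity of part 1 and applying Theorem \ref{thm: deformed}, equation \eqref{eq:logder2}, to the function $\sigma(x)=1/(1+e^{-x})$ produces precisely \eqref{eq:logdermixed}. Part 3 is an explicit computation: substitute $W_N=W_N^{\rm Mixed}$ into the formula \eqref{def:L1} of Theorem \ref{thm:biorth}(2) for $\psi_m$, move the constant $1/W_N'(a_m)$ into the overall normalization $Z_N$, and rewrite in exponential variables $y=e^{x_k}$, obtaining the stated $\Gamma$-integral for the distinct case; in the confluent case $a_1=\cdots=a_N=a$, substitute the same $W_N$ into \eqref{def:phiF} from Theorem \ref{thm:biorth}(3).

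The only subtlety is that the Imamura--Sasamoto formula \eqref{eq:Laplacemixed} requires a contour $\Sigma_N$ enclosing $a_1,\ldots,a_N$ while avoiding all poles of $\prod_j\Gamma(1+v-a_j)$, which is impossible when some differences $a_k-a_\ell$ are integers. As signaled in the excerpt, this is handled by a continuity argument: prove the three identities first under the genericity hypothesis on $\vec a$, then pass to the limit using the joint continuity of both sides in $\vec a$ (the left-hand side via the explicit formula \eqref{def:mixedpolymerpf} for the partition function, and the right-hand side via the integrand in \eqref{def:L}, which depends continuously on $\vec a$ as long as the zeros remain inside $\Sigma_N$). The main obstacle is not any single step but rather the kernel identification $M=L_N$ that underlies part 1; once that is granted from the preceding proposition, the remainder of the corollary is mechanical.
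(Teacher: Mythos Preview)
Your proposal is correct and follows essentially the same route as the paper: the paper derives the corollary by combining the Imamura--Sasamoto identity \eqref{eq:Laplacemixed}, the preceding proposition identifying $\widetilde\mu_N=\mu_N$ for $W_N=W_N^{\rm Mixed}$, and Theorems \ref{thm:biorth}, \ref{thm:Fredholm}, \ref{thm: deformed}, with the same continuity remark to cover the case where some $a_k-a_\ell$ are integers. Your additional observation that the equality of biorthogonal measures implies $M=L_N$ as kernels (since a biorthogonal sum $\sum_k F_k(x)G_k(x')$ depends only on the pair of spans) is a valid and slightly cleaner way to pass from $\det(1-\sigma_t M)$ to $\mu_N[\sigma_t]$, though the paper implicitly uses the equivalent route via $\widetilde\mu_N[\sigma_t]=\mu_N[\sigma_t]$.
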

{
\begin{remark}
Using the integral representations of $\{\psi_i\}_{i = 1}^N$ and $\{\phi_i\}_{i = 1}^N$ with $W_N$ given by \eqref{def:Fmixedpolymer}, it is easy to prove that
$$\psi_i({\rm e}^{\bar x}) = \overline{\psi_i({\rm e}^{x})}, \quad \phi_i({\rm e}^{\bar x}) = \overline{\phi_i({\rm e}^{x})},  \quad i = 1,\ldots, N.$$
Consequently, $d\mu_N^{{\rm Mixed}}$ is a signed real-valued measure.
\end{remark}}

\begin{remark}
Corollary \ref{finalcor:mixed} contains both Corollary \ref{finalcor:LogGamma} with $n=N$  and Corollary \ref{finalcor:OY} as degenerate cases: the Log Gamma polymer corresponds to $\tau=0$, while the O'Connell-Yor polymer corresponds to $\alpha_1=\ldots=\alpha_N\to\infty$.
As already mentioned, the above identities are finite temperature generalizations of classical characterizations of last passage percolation and Brownian queues models in terms of the GUE and LUE with external sources. The biorthogonal measures $\mu_N$ replace the GUE and LUE with external source in the corresponding finite temperature models. The identity \eqref{eq:Laplacemixed2} can also be seen as a finite $N$ version of the characterization from \cite{ACQ} of the narrow wedge solution of the KPZ equation (or continuum directed polymer) in terms of the Airy point process. In this sense, the biorthogonal measures $\mu_N$ play the same role in (semi-)discrete directed polymer models as the Airy point process in the continuum directed polymer model.
Other instances of these relations between stochastic models and point processes are available in the literature. In \cite{Borodin6v}, the stochastic higher spin six vertex model is related to the MacDonald measure, which specializes to the Schur measure in the case of the homogenenous six vertex model. According to the author, the determinantal structure of the O'Connel-Yor polymer \cite{ImamuraSasamoto1} could be a degeneration of the one for the stochastic higher spin six vertex model. Other types of degenerations lead to characterize the Asymmetric Exclusion Process (ASEP) in terms of the discrete Laguerre ensemble, and of the discrete Hermite ensemble in a large time regime \cite{BorodinOlshanskiASEP}. Back to the KPZ equation, the characterization of the narrow wedge solution of the KPZ equation (or continuum directed polymer) in terms of the GUE Airy point process extends to half space case initial data, upon replacing the GUE Airy point process with the GOE Airy point process, which is not determinantal but Pfaffian \cite{BBCW,Par17}. In the discrete case, thanks to combinatorial identities between $q$-Whittaker measures and periodic and free boundary Schur measures \cite{ImamuraMucciconiSasamoto1}, Pfaffian Fredholm determinants characterize the $q$-PushTASEP and the Log Gamma polymer in half space \cite{ImamuraMucciconiSasamoto2}.
\end{remark}

\paragraph{Sum of LUE with external source with GUE.}
Consider an $N\times N$ matrix $M$ from the LUE with external source, with distribution \eqref{eq:jpdfLUE+} and $\nu = 0$, and consider an $N\times N$ GUE matrix $H$, with distribution \eqref{eq:jpdfGUE+} with $A=0$. Set $Q=M+\sqrt{\tau N}H$.
The eigenvalue distribution of $Q$ can be computed using \cite[Theorem 2.3 (b)]{ClaeysKuijlaarsWang}, or more directly using the re-scaled variant of this identity given in \cite[Equation (1.6)]{ClaeysDoeraene}. This result gives an integral expression for the correlation kernel of the eigenvalues of $Q$ in terms of the eigenvalue correlation of $M$, valid for any random matrix $M$ whose eigenvalue distribution is a polynomial ensemble, and thus in particular for our LUE with external source random matrix $M$.
It states that an eigenvalue correlation kernel $K_N(x,x')$ of $Q$ is given by
\be\label{eq:transformationKN}
K_N(x,x')=\frac{1}{2\pi\i\tau}\int_{\i\mathbb R}\d s \int_{\mathbb R}\d t \; k_N(s,t)\e^{\frac{1}{2\tau}\left((x'-s)^2-(x-t)^2\right)}, 
\ee
where $k_N(s,t)$ is a correlation kernel of the eigenvalue distribution of $M$ which is polynomial in $s$: more precisely, it is of the form
\be\label{knPEform}k_N(s,t)=\sum_{k=1}^Np_k(s)q_k(t),\ee
where $p_1,\ldots, p_N$ are polynomials of degree $N-1$, and we have the biorthogonality relations
\[\int_{\mathbb R}p_j(s)q_k(s)\d s=\delta_{jk},\qquad j,k=1,\ldots, N.\]
In our situation, we can take 
{\be
k_{N}(s,t) = \frac{1}{(2 \pi \i)^2} \int_{\Sigma} \d u \int_{\ell} \d v \left(\frac{v}{u}\right)^{N}\prod_{j=1}^N\frac{u-1+b_j}{v-1+b_j} \frac{\e^{us-vt}}{v-u},\qquad s,t\in\mathbb R
\ee
where $\Sigma$ encircles $0$, and $\ell=c+\i\mathbb R$ is a vertical line at the right of $\Sigma$ and at the left of $1-b_1,\ldots, 1-b_N$. The $v$-integral is defined as a Cauchy principal value integral, $\int_\ell=\lim_{R\to\infty}\int_{c-\i R}^{c+\i R}\d v$.
{Indeed,} for $t<0$, we can compute this principal value integral by closing the integration contour $[c-\i R, c+\i R]$ with a large semi-circle to the left, and it follows then from a residue computation that $k_N(s,t)=0$. Similarly, for $t>0$, we can close the integration contour with a large semi-circle on the right, and then we obtain that 
\[k_N(s,t)=\e^{s-t}L_N(s,t),\]
with $L_N=L_N^{\rm LUE+}$ given by
\eqref{eq:LLUE+0} with $\nu=0$.
It is also straightforward to verify using the residue theorem that $k_N(s,t)$ is polynomial in $s$, and more precisely of the required form \eqref{knPEform}. {As for the biorthogonality relations, we refer to \cite{BleherKuijlaars}.}
We then apply \eqref{eq:transformationKN} and find
that $K_N(x,x')$ is given by
\begin{align*}
K_N(x,x')&=\frac{1}{2\pi\i\tau}\int_{\i\mathbb R}\d s \int_{\mathbb R}\d t \; k_N(s,t)\e^{\frac{1}{2\tau}\left((x'-s)^2-(x-t)^2\right)}
\\&= \frac{1}{(2\pi\i)^3\tau}\int_{\i\mathbb R}\d s \int_{\mathbb R}\d t \int_{\Sigma}\d u\int_{\ell}\d v\left(\frac{v}{u}\right)^{N}\prod_{j=1}^N\frac{u-1+b_j}{v-1+b_j}\frac{\e^{us-vt}}{v-u}\e^{\frac{1}{2\tau}\left((x'-s)^2-(x-t)^2\right)}.
\end{align*}
Changing the order of integration and writing $\tilde W_N(z)=\frac{z^N}{\prod_{j=1}^N(z-1+b_j)}$, we obtain after a straightforward computation that
\begin{align*}
K_N(x,x')&= \frac{1}{(2\pi\i)^3\tau}\int_{\Sigma}\d u\int_{\ell}\d v\frac{\tilde W_N(v)}{\tilde W_N(u)}\frac{\e^{\frac{\tau}{2} v^2-\frac{\tau}{2} u^2+ux'-vx}}{v-u}\\
&\qquad\qquad\qquad \times\
\left(\int_{\i\mathbb R}\d s \; 
\e^{\frac{1}{2\tau}\left(x'-s-\tau u\right)^2}\right)
\left(\int_{\mathbb R}\d t \;
\e^{-\frac{1}{2\tau}\left(x-t-\tau v\right)^2}\right)\\
&= \frac{1}{(2\pi\i)^2}\int_{\Sigma}\d u\int_{\ell}\d v\frac{\tilde W_N(v)\e^{\frac{\tau}{2}v^2}}{\tilde W_N(u)\e^{\frac{\tau}{2}u^2}}\frac{\e^{x'u-xv}}{v-u}.
\end{align*}
We conclude that $K_N(x,x')$ is again a kernel of the form \eqref{def:L}, but now with transformed function $W$ given by
\begin{equation}\label{eq:defGLUE+}
	W_N^{\rm GLUE+}(z;\vec b,\tau) := \frac{z^N{\rm e}^{\tau z^2/2}}{\prod_{k = 1}^N (z-1+b_k)}.
\end{equation}
}

We will now point out that this biorthogonal measure corresponds to  the {small temperature} limit of the biorthogonal measure associated to the mixed polymer.

\paragraph{Small temperature limit for the mixed polymer.}
{Consider the mixed polymer partition function with re-scaled $\tau=\tau/T^2$, $a_1=\cdots=a_N=0$, and with re-scaled $\alpha$-parameters $\alpha_1=T-T b_1,\ldots, \alpha_N=1 -Tb_N$, in the limit where $T\to 0$.
Then, the kernel of the associated biorthogonal measure $L_N$ is given by
\[		{L_N^{\rm Mixed}\left(x,x';\left[ T\right]_N - T \vec b,[0]_N,\tau/T^2\right)}  = \frac{1}{(2 \pi \i)^2} \int_{\Sigma_N} \d u \int_{\ell_N} \d v \frac{W_N^{\rm Mixed}(v;[T]_N-T\vec b,[0]_N,\tau/T^2)}{W_N^{\rm Mixed}(u;[T]_N-T\vec b,[0]_N,\tau/T^2)} \frac{{\rm e}^{-vx + ux'}}{v - u},\]

or equivalently by
\[		\frac{1}{T} L_N^{\rm Mixed}\left(\frac{x}{T},\frac{x'}{T};[T]_N-T\vec b,[0]_N,\tau/T^2\right) = \frac{1}{(2 \pi \i)^2} \int_{\widetilde{\Sigma}_N} \d s \int_{\widetilde\ell_N} \d t \frac{\widetilde W_N^{\rm Mixed}(t;[1]_N-\vec b,[0]_N,\tau,T)}{ \widetilde W_N^{\rm Mixed}(s;[1]_N-\vec b,[0]_N,\tau,T)} \frac{{\rm e}^{-tx + sx'}}{t-s},\]
with
\[\widetilde W_N^{\rm Mixed}(z;[1]_N-\vec b,[0]_N,\tau,T)=\frac{\e^{\tau z^2/2}\prod_{k=1}^N\Gamma((1 -b_k- z)T)}{\Gamma(zT)^N}=\frac{z^N\e^{\tau z^2/2}}{\prod_{k=1}^N(1-b_k-z)} \frac{\prod_{k=1}^N\Gamma(1 + (1 -b_k- z) T)}{\Gamma(1+zT)^N},\] 
and where
\[ \tilde \ell_N := \frac{1}{T}\ell_N, \quad \widetilde \Sigma_N := \frac{1}{T}\Sigma_N .\]
It is easy to see that 
\[\lim_{T\to 0}\frac{\widetilde W_N^{\rm Mixed}(v;[T]_N-T\vec b,[0]_N,\tau,T)}{\widetilde W_N^{\rm Mixed}(u;[T]_N-T\vec b,[0]_N,\tau,T)}=\frac{W_N^{\rm GLUE+}(v;\vec b,\tau)}{W_N^{\rm GLUE+}(u;\vec b,\tau)}.\]
In analogy to the zero temperature results for the Log Gamma polymer and the O'Connell-Yor polymer, we obtain the following.
\begin{proposition}\label{prop:mixed}
{We have the point-wise limit
	\be\label{eq:zerotemplimitmixed}\lim_{T\to 0}\frac{1}{T}L_{N}^{{\rm Mixed}}\left(\frac{x}{T},\frac{x'}{T};[T]_N-T\vec b,[0]_N,\tau/T^2\right)=L_{N}^{\rm GLUE+}\left(x,x';\vec b,\tau\right),\qquad x,x'\in\mathbb R,\ee
	and there exists $C_{N}(\vec b,\tau)$ such that for $\epsilon>0$ sufficiently small and $T>0$ sufficiently small,
	\be\label{eq:dominatedcvgencemixed}\left|\frac{1}{T} L_{N}^{{\rm Mixed}}\left(\frac{x}{ T},\frac{x'}{T}; [T]_N-T\vec b,[0]_N, \tau/T^2\right)\right|\leq C_{N}(\vec b,\tau) h(x)\tilde h(x'), \ee
with $h,\tilde h$ as in \eqref{def:h2}.	}
\end{proposition}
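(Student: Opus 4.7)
The plan is to follow the strategy used in the proofs of Propositions \ref{prop:LogGamma} and \ref{prop:OY}, combining the contour manipulations already sketched in the paragraphs preceding the statement with Lebesgue's dominated convergence theorem. The crucial algebraic step, which exploits the identity $\Gamma(Tw) = \Gamma(1+Tw)/(Tw)$, is to write the ratio
\[
\frac{\widetilde W_N^{\rm Mixed}(t;[1]_N,\vec a,\tau)}{\widetilde W_N^{\rm Mixed}(s;[1]_N,\vec a,\tau)}
 = \frac{W_N^{\rm GLUE+}(t;\vec a,\tau)}{W_N^{\rm GLUE+}(s;\vec a,\tau)}\,\Theta_T(s,t),
\]
with the explicit remainder
\[
\Theta_T(s,t) := \frac{\Gamma(1+T(1-t))^N}{\Gamma(1+T(1-s))^N}\prod_{k=1}^N\frac{\Gamma(1+T(s-a_k))}{\Gamma(1+T(t-a_k))},
\]
which converges pointwise to $1$ as $T\to 0^+$.

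For the pointwise limit \eqref{eq:zerotemplimitmixed}, I would fix $\widetilde\Sigma_N$ as a small positively-oriented loop surrounding $\{a_1,\ldots,a_N\}$ inside the strip $a_{\min}-\epsilon \leq \Re s \leq a_{\max}+\epsilon$, and $\widetilde\ell_N$ as the vertical line $\Re t = a_{\max}+2\epsilon$ with $\epsilon$ small enough that $a_{\max}+2\epsilon < 1$, so that the pole of $W_N^{\rm GLUE+}$ at $1$ is avoided. The integrand of the double integral then converges pointwise to that of $L_N^{\rm GLUE+}$, and dominated convergence applies once a $T$-uniform integrable upper bound is exhibited. The dominating function is integrable on $\widetilde\ell_N$ thanks to the superexponential decay of $|\e^{\tau t^2/2}|$ as $|\Im t|\to\infty$ built into $W_N^{\rm GLUE+}$, which makes the situation strictly better behaved than the Log Gamma case.

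For the domination bound \eqref{eq:dominatedcvgencemixed} I would keep the same contours; the elementary estimate $|\e^{-tx+sx'}|\leq h(x)\tilde h(x')$ then follows directly from \eqref{def:h2} by construction, and the $W_N^{\rm GLUE+}$-ratio together with the $s$-dependent $\Gamma$-factors of $\Theta_T$ are bounded on the compact contour $\widetilde\Sigma_N$, uniformly in small $T$. The remaining delicate point is to control $\Theta_T(s,t)$ for $t = (a_{\max}+2\epsilon) + \i y$ with $|y|$ large, uniformly in small $T$. Applying Stirling's approximation \eqref{eq:Stirling} to each $\Gamma$-factor, the $\e^{-\pi T|y|/2}$ contributions from numerator and denominator cancel exactly (there are the same number of factors on each side), leaving only a mild factor that grows like $|y|^{O(T)}$, which is easily absorbed by the Gaussian decay of $|W_N^{\rm GLUE+}(t)|$. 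Multiplying these bounds produces the required constant $C_N(\vec a,\tau)$.

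I expect the main obstacle to be precisely this uniform-in-$T$ Stirling estimate for $\Theta_T$, since the individual factors $\Gamma(1+Tw)$ stay close to $1$ only for bounded $w$: once $|\Im t|$ is comparable with $1/T$ or larger, one has to exploit the perfect matching between numerator and denominator of $\Theta_T$, rather than bounding the two sides separately. Fortunately, the Gaussian surplus in $W_N^{\rm GLUE+}$ leaves a wide margin, so a routine but careful application of \eqref{eq:Stirling} should suffice to close the argument, and Corollary \ref{cor:mixed} then follows along the same lines as Corollary \ref{cor:LogGamma}.
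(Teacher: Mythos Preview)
Your proposal is correct and follows essentially the same approach as the paper, which does not spell out a separate proof for this proposition but refers to the analogous arguments for Propositions \ref{prop:LogGamma} and \ref{prop:OY}. In particular, your factorization $\Theta_T(s,t)$ coincides with the bracketed factor in the proof of Proposition \ref{prop:LogGamma}, your contour choices match those used there for the domination bound, and your observation that the Gaussian factor in $W_N^{\rm GLUE+}$ provides ample decay (so no delicate choice of $\widetilde\ell_N$ near $1$ is needed) is exactly the simplification the paper invokes when passing from the Log Gamma to the O'Connell--Yor case.
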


\begin{corollary}\label{cor:mixed}	As $T\to\infty$, we have the convergence
	\[\mathbb E \left[ {\rm e}^{-\e^{t/T} Z_{N,N}^{{\rm Mixed}}\left( [T]_N-T\vec b,[0]_N,\tau/T^2\right)}\right]\longrightarrow \mathbb P_{\mathrm{GLUE+}}\left(\max\{x_1,\ldots, x_N\}\leq -t; \vec b, \tau \right),\]
	where the expectation on the left is with respect to the distribution of the mixed polymer partition function, and the probability on the right is with respect to the distribution of the eigenvalues $x_1,\ldots, x_N$ of the random matrix $Q$ constructed above as a sum of a GUE with  an LUE with external source matrix.
	\end{corollary}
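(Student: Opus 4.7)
The plan is to mirror the argument used for Corollaries \ref{cor:LogGamma} and \ref{cor:OY}, now taking Proposition \ref{prop:mixed} as input. Starting from \eqref{eq:Laplacemixed2} in Corollary \ref{finalcor:mixed}, we have, with the rescaled parameters $\vec\alpha = [T]_N$, $\vec a \mapsto T\vec a$, $\tau \mapsto \tau/T^2$, and $t \mapsto t/T$,
\[
\mathbb E \left[ {\rm e}^{-\e^{t/T} Z_{N,N}^{{\rm Mixed}}([T]_N,T\va,\tau/T^2)}\right] = \sum_{k=0}^N \frac{(-1)^k}{k!}\int_{\mathbb R^k}\det\Big(\sigma_{t/T}(x_j) L_N^{{\rm Mixed}}(x_j,x_m;[T]_N,T\va,\tau/T^2)\Big)_{j,m=1}^k \prod_{j=1}^k \d x_j,
\]
where $\sigma_{t/T}(x) = 1/(1+\e^{-x-t/T})$. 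After the change of variables $x_j \mapsto x_j/T$ in each $k$-fold integral, each entry of the determinant is rescaled to $\frac{\sigma_{t/T}(x_j/T)}{T} L_N^{{\rm Mixed}}(x_j/T,x_m/T;[T]_N,T\va,\tau/T^2)$, with no leftover Jacobian since each $T^{-1}$ is absorbed into one row.

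Next, I would apply Proposition \ref{prop:mixed}. Since $\sigma_{t/T}(x/T) \to 1_{(-t,\infty)}(x)$ pointwise and $\frac{1}{T}L_N^{{\rm Mixed}}(x/T,x'/T;[T]_N,T\va,\tau/T^2) \to L_N^{{\rm GLUE+}}(x,x';\vec a,\tau)$ pointwise by \eqref{eq:zerotemplimitmixed}, we obtain pointwise convergence of the determinants,
\[
\det\left(\frac{\sigma_{t/T}(x_j/T)}{T}L_N^{{\rm Mixed}}\Big(\tfrac{x_j}{T},\tfrac{x_m}{T};[T]_N,T\va,\tau/T^2\Big)\right)_{j,m=1}^k \longrightarrow \det\Big(1_{(-t,\infty)}(x_j)L_N^{{\rm GLUE+}}(x_j,x_m;\vec a,\tau)\Big)_{j,m=1}^k.
\]
The bound \eqref{eq:dominatedcvgencemixed} provides $|\frac{1}{T}L_N^{{\rm Mixed}}(x/T,x'/T;\cdots)| \leq C_N(\vec a,\tau) h(x)\tilde h(x')$ uniformly for $T$ small, so that each determinant is bounded by $C_N(\vec a,\tau)^k |\det O(1)| \prod_{j=1}^k h(x_j)\tilde h(x_j)\sigma_{t/T}(x_j/T)$, which is an integrable majorant on $\mathbb R^k$ (the product of the exponential weights $h\tilde h$ from \eqref{def:h2} decays fast enough at $\pm\infty$, and $\sigma_{t/T}(x/T) \leq 1$).

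By dominated convergence applied term by term in the finite sum over $k = 0,1,\ldots,N$, we conclude
\[
\mathbb E\Big[\e^{-\e^{t/T}Z_{N,N}^{{\rm Mixed}}([T]_N,T\va,\tau/T^2)}\Big] \longrightarrow \sum_{k=0}^N \frac{(-1)^k}{k!}\int_{(-t,\infty)^k}\det\Big(L_N^{{\rm GLUE+}}(x_j,x_m;\vec a,\tau)\Big)_{j,m=1}^k \prod_{j=1}^k \d x_j.
\]
The right-hand side is the Fredholm series for $\det(1 - 1_{(-t,\infty)}L_N^{{\rm GLUE+}})_{L^2(\mathbb R)}$, which equals $\mu_N^{{\rm GLUE+}}[1_{(-t,\infty)}]$ by \eqref{eq:Fredholm1}, i.e.\ $\mathbb P_{{\rm GLUE+}}(\max\{x_1,\ldots,x_N\} \leq -t;\vec a,\tau)$ once the biorthogonal measure associated with $W_N^{{\rm GLUE+}}$ is identified, via the computation preceding \eqref{eq:defGLUE+}, with the eigenvalue distribution of the sum $Q = M + \sqrt{\tau N}H$ of an LUE with external source and a GUE. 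The main obstacle is the verification that the bound \eqref{eq:dominatedcvgencemixed} is integrable against the factor $\sigma_{t/T}(x/T)$ uniformly in $T$ so that dominated convergence applies to each summand; once the $h\tilde h$ weights are chosen as in \eqref{def:h2}, this reduces to a direct estimate exactly as in the proof of Corollary \ref{cor:LogGamma}.
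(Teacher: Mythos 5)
Your argument is correct and follows essentially the same route the paper takes (its proof of Corollary \ref{cor:LogGamma}, which the mixed case is explicitly said to imitate): pointwise convergence of the kernels from Proposition \ref{prop:mixed}, the uniform bound \eqref{eq:dominatedcvgencemixed} combined with the decay of $\sigma_{t/T}(\cdot/T)$ to dominate each term of the truncated Fredholm series, and identification of the limit with $\mu_N^{\rm GLUE+}[1_{(-t,\infty)}]=\mathbb P_{\rm GLUE+}(\max\leq -t)$ via \eqref{eq:Fredholm1} and the eigenvalue kernel of $Q$. (Note the corollary's ``$T\to\infty$'' is a typo for $T\to 0$, which is what you, correctly, prove.)
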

}	
\begin{remark}To the best of our knowledge, the appearance of the sum of an LUE with external source matrix with a GUE matrix is completely new. 
{One can also think of the sum $Q=M+\sqrt{\tau N}H$ as the matrix obtained by applying independent Brownian motions to the independent entries of $M$. The resulting process for the eigenvalues of this process is called Dyson's Brownian motion, see e.g.\ \cite{Johansson}.}
In the zero temperature limit, $\tau$ thus plays the role of time in Dyson's Brownian motion applied to the eigenvalue configuration of a random matrix from the LUE with external source. If the measures $\d\mu_{N}^{\rm Mixed}(\vec x;\vec\alpha,\va,\tau)$ are positive polynomial ensembles, such an interpretation in terms of Dyson's Brownian motion remains valid for finite temperature: let $\tau'>\tau$, and suppose that $\vec x$ is a random configuration of points at time $\tau$; we then obtain a random configuration of points at time $\tau'$ by applying Dyson's Brownian motion to $\vec x$ for a time period $\tau'-\tau$. \end{remark}

\begin{remark}
There are other random matrix models whose eigenvalue distributions admit kernels of the form \eqref{def:Lhat}, but then in exponential variables, and with the special choice $a_m=m-1$. In this situation, the biorthogonal ensembles become polynomial ensembles, recall Remark \ref{remark:exp}.
This is the case for Wishart-type products of Ginibre matrices \cite{AkemannIpsenKieburg, AkemannKieburgWei, Forrester2, ForresterLiu, KuijlaarsStivigny, KuijlaarsZhang, ZyczkowskiSommers}, for Wishart-type products of truncated unitary matrices \cite{KieburgKuijlaarsStivigny}, and for Laguerre Muttalib-Borodin ensembles \cite{BorBiOE, ForresterWang, KuijlaarsZhang}. The different double contour integral representations for the kernels in these models can be found for instance in \cite[Appendix A]{ClaeysGirottiStivigny} or in \cite{ClaeysKuijlaarsWang}. We summarize the relevant choices of $W_N$ in these models in Table \ref{table}. \end{remark}

\begin{table}[h!]
\centering
\begin{tabular}{||c| c| c| c| c||} 
 \hline
 {\bf Model} &  ${\bf W_N(z)}$ &  ${\bf f_m(x)}$ & ${\bf g_m(x)}$ \\ [0.5ex] 
 \hline\hline
{\bf Log $\Gamma$ polymer} & $\frac{\prod_{j=1}^n\Gamma(\alpha_j-z)}{\prod_{k=1}^N\Gamma(z-a_k)}$ &  $\e^{a_m x}$& $\MeijerG*{n}{0}{0}{n + N}{-}{\valpha ; \vec 1 + \va-\vec e_m}{\e^{-x}}$  \\[1ex] 
 \hline
{\bf Homogeneous\, Log $\Gamma$ polymer} & $\frac{\Gamma(\alpha-z)^n}{\Gamma(z)^N}$ & $x^{m-1}$& $\MeijerG*{n}{0}{0}{n + N}{-}{\valpha ;(0)_{N-m+1};(1)_{m-1}}{\e^{- x}}$  \\[1ex] 
 \hline
{\bf OY polymer} & $\e^{\tau z^2/2}\frac{1}{\prod_{k=1}^N\Gamma(z-a_k)}$ & $\e^{a_m x}$& $\displaystyle\frac{1}{2 \pi \i} \int_{\frac{1}2 + \i \mathbb R} \frac{{\rm e}^{\tau \frac{t^2}2}{\e}^{-t x}}{(t - a_m) \prod_{k = 1}^N \Gamma(t - a_k)} \mathrm{d}t$  \\ [1ex]
 \hline
{\bf Homogeneous\, OY polymer} & $\e^{\tau z^2/2}\frac{1}{\Gamma(z-a)^N}$ &  $x^{m-1}$& $\displaystyle\frac{1}{2 \pi \i} \int_{\frac{1}2 + \i \mathbb R} \frac{{\rm e}^{\tau \frac{t^2}2}{\e}^{-(t - a)x}}{(t - a)^{N - m +1} \Gamma(t - a)^N} \mathrm{d}t$   \\ [1ex]
 \hline
{\bf Mixed polymer} & $\e^{\tau z^2/2}\frac{\prod_{j=1}^n\Gamma(\alpha_j-z)}{\prod_{k=1}^N\Gamma(z-a_k)}$ & $\e^{a_m x}$& $\displaystyle\frac{1}{2 \pi \i} \int_{\i \mathbb R} \frac{{\rm e}^{\tau \frac{t^2}2} \prod_{j = 1}^n \Gamma(\alpha_j - t)}{(t - a_m) \prod_{k = 1}^N \Gamma(t - a_k)} {\e}^{-tx} \mathrm{d}t $   \\ [1ex]\hline
{\bf Homogeneous\, Mixed polymer} & $\e^{\tau z^2/2}\frac{\Gamma(\alpha-z)^n}{\Gamma(z-a)^N}$ &  $x^{m-1}$ & $\displaystyle\frac{1}{2 \pi \i} \int_{\i \mathbb R} \frac{{\rm e}^{\tau \frac{t^2}2} \Gamma(\alpha - t)^n {\e}^{-(t - a)x}}{(t - a)^{N - m +1} \Gamma(t - a)^N} \mathrm{d}t$   \\ [1ex]
 \hline\hline
{\bf  GUE} & $z^N\e^{\tau z^2/2}$ & $x^{m-1}$& $x^{m-1}\e^{-\frac{x^2}{2\tau}}$  \\ [1ex]
 \hline
{\bf  GUE + external source} & $\prod_{m=1}^N(z-a_m)\ \e^{\tau z^2/2}$ &  ${x^{m-1}}$& ${{\e}^{-(\frac{x^2}{2\tau} - a_m x)}}$ \\ [1ex]
 \hline
{\bf LUE}& $\frac{z^N}{(z-1)^{N+\nu}}$ &  $x^{m-1}$& $x^{m-1+\nu}\e^{-x}$  \\[1ex]
 \hline
 {\bf LUE + external source}& $\frac{\prod_{m=1}^N(z-b_m)}{(z-1)^{N+\nu}}$ & $x^{m-1}$& {$x^\nu{\e}^{-(1 - b_m)x}$} \\[1ex]
 \hline
{\bf (LUE + external\, source) + GUE}& $ \frac{{\rm e}^{\tau z^2/2}z^N}{\prod_{k = 1}^N (z-1+b_k)}$ & $x^{m-1}$& $\displaystyle\frac{1}{2 \pi \i} \int_{\i \mathbb R} \frac{{\rm e}^{\tau \frac{t^2}2}t^{m-1} }{\prod_{j = 1}^N (t-1+b_j) } {\e}^{-tx} \mathrm{d}t $ \\[1ex]
 \hline
{\bf Ginibre products} & $\frac{\prod_{k=0}^n\Gamma(1+\nu_k-z)}{\Gamma(1-N-z)}$  & $\e^{(m-1)x}$& $\MeijerG*{n}{0}{0}{n}{-}{\vec\nu}{\e^{-x}}$  \\[1ex] 
 \hline
{\bf Muttalib-Borodin LUE} & $\frac{\Gamma(1-z)\Gamma(\nu+1-\theta z)}{\Gamma(1-N-z)}$ & $\e^{(m-1)x}$& $\e^{\theta(m-1)x}\e^{-\e^{x}}$  \\ [1ex]
 \hline
{\bf Truncated\, unitary products} & $\frac{\prod_{k=0}^n\Gamma(1+\nu_k-z)}{\prod_{k=0}^n\Gamma(1+\ell_k-N-z)}$  & $\e^{(m-1)x}$& $\MeijerG*{n}{0}{n}{n}{\vec\mu+\vec\nu}{\vec\nu}{\e^{-x}}$  \\ [1ex]
 \hline
 \end{tabular}
\caption{List of polymer and random matrix models associated to biorthogonal measures with symbol $F(z)$, and a possible choice, not necessarily biorthogonal, for the corresponding functions $f_m(x)$ and $g_m(x)$ in \eqref{def:biO}, for $m=1,\ldots, N$. The choices for $g_m$ in the GUE with external source and LUE with external source are only valid if $a_1,\ldots, a_N$ respectively $b_1,\ldots, b_N$ are distinct.
The last three models correspond to the case $a_m=m-1$, and the eigenvalues or squared singular values in these models correspond to the exponential variable  $y=\e^x$.}\label{table}
\end{table}

\subsection*{Acknowledgements} We thank Ivan Corwin and Milind Hegde for valuable conversations on Fredholm determinant representations of polymer partition functions, which constituted the starting point of this article. We are also grateful to Matteo Mucciconi and Jiyuan Zhang for useful discussions. M.C.
acknowledge support by the IRP ``PIICQ'',
funded by the CNRS, and the Centre Henri Lebesgue, program ANR-11-LABX-0020-0. T.C. acknowledges support by FNRS Research
Project T.0028.23, and by FNRS
under EOS project O013018F. 

\newpage


\begin{thebibliography}{1000}
\bibitem{AkemannIpsenKieburg}
 		G. Akemann, J. R. Ipsen, and M. Kieburg. Products of rectangular random matrices: Singular values and progressive scattering. \emph{Phys. Rev. E} 88 (2013), no. 5, 52--118.
\bibitem{AkemannKieburgWei}
		G. Akemann, M. Kieburg, and L. Wei. Singular value correlation functions for products of Wishart random matrices. \emph{J. Phys. A} 46 (2013), 275205.
\bibitem{ACQ}
		G. Amir, I. Corwin, and J. Quastel. Probability distribution of the free energy of the continuum directed random polymer in 1+ 1 dimensions. \emph{Comm. Pure Appl. Math.} 64 (2011), no. 4, 466--537.
\bibitem{BBP}
		J. Baik, G. Ben Arous, and S. P\'ech\'e. Phase transition of the largest eigenvalue for nonnull complex sample covariance matrices. \emph{Ann. Probab.} 33 (2005), (5):1643--1697.
\bibitem{BaikRains}J. Baik and E.M. Rains.
Limiting distributions for a polynuclear growth model with external sources. \emph{J. Stat. Phys.} 100 (2000), no. 3--4, 523--541.
\bibitem{BBCW}
		G. Barraquand, A. Borodin, I. Corwin and M. Wheeler. Stochastic six-vertex model in a half-quadrant and half-line open asymmetric simple exclusion process. \emph{Duke Math. J.} 167 (13) 2457 - 2529, 2018.
\bibitem{BCD} 
		G. Barraquand, I. Corwin, and E. Dimitrov. Fluctuations of the log-gamma polymer free energy with general parameters and slopes. \emph{Probab. Theory Related Fields} 181 (2021), (1--3): 113--195.
\bibitem{Baryshnikov} Y. Baryshnikov. GUEs and queues. \emph{Probab. Theory Related Fields} 119 (2001), no. 2, 256--274.
\bibitem{BeteaBouttier}
		D. Betea and J. Bouttier. The periodic Schur process and free fermions at finite temperature. \emph{Math. Phys. Anal. Geom.} 22 (2019) (1):3.
\bibitem{BleherKuijlaars} 
		P. Bleher and  A. Kuijlaars. Integral representations for multiple Hermite and multiple Laguerre polynomials. \emph{Ann. Inst. Fourier} 55 (2005), no. 6, 2001--2014.
\bibitem{Borodin6v}
		A. Borodin. Stochastic higher spin six vertex model and Macdonald measures. \emph{Journal of Mathematical Physics}, (2018), vol. 59, no 2, p. 023301.
\bibitem{BorodinCylindrical}
		A. Borodin. Periodic Schur process and cylindric partitions. \emph{Duke Math. J.} 140 (2007), no. 3, 391--468.
\bibitem{BorBiOE} 
		A. Borodin. Biorthogonal ensembles. \emph{Nucl. Phys. B} 536 (1998), no. 3, 704--732.
\bibitem{BCF} 
		A. Borodin, I. Corwin, and P. Ferrari. Free energy fluctuations for directed polymers in random media in $1+1$ dimension. \emph{Comm. Pure Appl. Math.} 67 (2014), no. 7, 1129--1214.
\bibitem{BCFV} 
		A. Borodin, I. Corwin, P. Ferrari, and B. Vet\H{o}. Height fluctuations for the stationary KPZ equation. \emph{Math. Phys. Anal. Geom.} 18 (2015), 1--95.	
\bibitem{BC}
		 A. Borodin and I. Corwin. Macdonald processes. \emph{Probab. Theory Related Fields} 158 (2014), no. 1, 225--400.
\bibitem{BCR} 
		A. Borodin, I. Corwin, and D. Remenik. Log-gamma polymer free energy fluctuations via a Fredholm determinant identity. \emph{Commun. Math. Phys.} 324 (2013), no. 1, 215--232.
\bibitem{BorodinGorin} 
		A. Borodin and V. Gorin. Lectures on integrable probability. arXiv:1212.3351 (2012).
\bibitem{BorodinOlshanskiASEP}
		A. Borodin and G. Olshanski. The ASEP and determinantal point processes. \emph{Communications in Mathematical Physics} 353 (2017): 853-903.
\bibitem{BorodinPeche} 
		A. Borodin and S. P\'ech\'e. Airy kernel with two sets of parameters in directed percolation and random matrix theory. \emph{J. Stat. Phys.} 132 (2008), 275--290.
\bibitem{BCT} 
		T. Bothner, M. Cafasso, and S. Tarricone. Momenta spacing distributions in anharmonic oscillators and the higher order finite temperature Airy kernel. \emph{Ann. Inst. H. Poincar\'e Prob. Stat.} 58 (2022), no. 3, 1505--1546.
\bibitem{BrezinHikami} 
		E. Br\'ezin and S. Hikami. Correlations of nearby levels induced by a random potential. \emph{Nucl. Phys. B} 479 (1996), no. 3, 697--706.
\bibitem{BrezinHikami2}  
		E. Br\'ezin and S. Hikami. Spectral form factor in a random matrix theory. \emph{Phys. Rev. E} 55 (1997), 4067--4083.
\bibitem{ClaeysGirottiStivigny} 
	T. Claeys, M. Girotti, and D. Stivigny. Large gap asymptotics at the hard edge for product random matrices and Muttalib-Borodin ensembles. \emph{Int. Math. Res. Notices} 2017 (2017), doi: 10.1093/imrn/rnx202, 48 pages.
	\bibitem{ClaeysDoeraene}
	T. Claeys and A. Doeraene. Gaussian perturbations of hard edge random matrix ensembles. \emph{Nonlinearity} 29 (2016), no. 11, 33--85.
 \bibitem{ClaeysGlesner} 
 	T. Claeys and G. Glesner. Determinantal point processes conditioned on randomly incomplete configurations. \emph{ Ann. Inst. H. Poincare. Prob. Stat.}59 (2023), no. 4, 2189--2219.
\bibitem{ClaeysKuijlaarsWang} 
	T. Claeys, A. Kuijlaars, and D. Wang. Correlation kernels for sums and products of random matrices. \emph{Random Matrices: Theory  Appl.} 4 (2015), no. 4, 1550017.
\bibitem{Deift} 
	P. Deift. Orthogonal polynomials and random matrices: a Riemann-Hilbert approach.  In Courant Lecture Notes in Mathematics, Volume 3. Amer. Math. Soc.,
Povidence R.I., 1999.
\bibitem{DesrosiersForrester}
	P. Desrosiers and P.J. Forrester. Asymptotic correlations for Gaussian and Wishart matrices with external source. \emph{Int. Math. Res. Notices} 2006 (200§), 27395.
\bibitem{DiekerWarren}
A.B. Dieker and J. Warren. On the largest-eigenvalue process for generalized Wishart random matrices.
\emph{ALEA Lat. Am. J. Probab. Math. Stat.} 6 (2009), 369--376.
\bibitem{DLMF} 
	NIST Digital Library of Mathematical Functions. https://dlmf.nist.gov/.	
\bibitem{ElKaroui} 
	N. El Karoui. Tracy-Widom limit for the largest eigenvalue of a large class of complex sample covariance matrices, \emph{Ann. Prob.} 35 (2007), no. 2, 663--714.
\bibitem{Forrester} 
	P.J. Forrester. Log-Gases and Random Matrices. {London Mathematical Society Monographs Series} (34). Princeton University Press, 2010.
\bibitem{Forrester2} 
	P.J. Forrester. Eigenvalue statistics for product complex Wishart matrices. \emph{J. Phys. A} 47 (2014), (34):345202, 22.
\bibitem{ForresterLiu}
	P.J. Forrester and D.Z. Liu. Singular values for products of complex Ginibre matrices with a source: hard edge limit and phase transition. \emph{Commun. Math. Phys.} 344 (2016), 333--368.
\bibitem{ForresterWang}
	P.J. Forrester and D. Wang. Muttalib-Borodin ensembles in random matrix theory - realisations and correlation functions. \emph{Electron. J. Probab.} 22 (2017), 1--43.
\bibitem{ForsterKieburgKosters} 
	Y.P. F\"orster, M. Kieburg, and H. K\"osters. Polynomial ensembles and P\'olya frequency functions. \emph{J. Theor. Prob.}, 34 (2021), no. 4, 1917--1950.
	\bibitem{GravnerTracyWidom}
	J. Gravner, C.A. Tracy, and H. Widom.
Limit theorems for height fluctuations in a class of discrete space and time growth models.\emph{J. Stat. Phys.} 102 (2001), no.5--6, 1085--1132.
\bibitem{ImamuraMucciconiSasamoto1}
	T. Imamura, M. Mucciconi and T. Sasamoto.  Skew RSK dynamics: Greene invariants, affine crystals and applications to $q$-Whittaker polynomials. \emph{Forum of Mathematics, Pi} Vol. 11, p. e27. (2023) Cambridge University Press.
\bibitem{ImamuraMucciconiSasamoto2}
	T. Imamura, M. Mucciconi and T. Sasamoto. Solvable models in the KPZ class: approach through periodic and free boundary Schur measures. \emph{arXiv:2204.08420}, 2022.
\bibitem{ImamuraSasamoto1} 
	T. Imamura and T. Sasamoto. Determinantal structures in the O'Connell-Yor directed random polymer model. \emph{J. Stat. Phys.} 163 (2016), 675--713.
\bibitem{ImamuraSasamoto2}  
	T. Imamura and T. Sasamoto. Free energy distribution of the stationary O'Connell-Yor directed random polymer model. \emph{J. Phys. A} 50 (2017), no. 28, 285203, 35 pp.
\bibitem{KieburgZhang}
	M. Kieburg and J. Zhang, Derivative principles for invariant ensembles, \emph{Adv. Math.} 413 (2023), 108833.
\bibitem{Johansson2} 
	K. Johansson,  Shape fluctuations and random matrices. \emph{Comm. Math. Phys.} 209 (2000), 437--476.
\bibitem{Johansson} 
	K. Johansson. Universality of the local spacing distribution in certain ensembles of Hermitian Wigner matrices. \emph{Comm. Math. Phys.} 215 (2001), no. 3, 683--705.
\bibitem{KieburgKuijlaarsStivigny} 
	M. Kieburg, A. Kuijlaars, and D. Stivigny. Singular value statistics of matrix products with truncated unitary matrices. \emph{Int. Math. Res. Notices}, 2016 (2016), no. 11, 3392--3424.
\bibitem{Konig} 
	W. K\"onig. Orthogonal polynomial ensembles in probability theory. \emph{Probab. Surv.} 2 (2005), 385--447.
	\bibitem{KrajenbrinkLeDoussal} A. Krajenbrink and P. Le Doussal. Linear statistics and pushed Coulomb gas at the edge of $\beta$-random matrices: Four paths to large deviations.
\emph{Europhysics Letters} 125 (2019), no. 2.
\bibitem{KuijlaarsPol} 
	A. Kuijlaars. Transformations of polynomial ensembles. In Modern Trends in Constructive Function Theory (D.P. Hardin, D.S. Lubinsky, and B. Simanek, eds.) \emph{Contemporary Mathematics} 661 (2016),253--268
\bibitem{KuijlaarsStivigny} 
	A. Kuijlaars and D. Stivigny. Singular values of products of random matrices and polynomial ensembles. \emph{Random Matrices Theory Appl.} 3 (2014), 1450011.
\bibitem{KuijlaarsZhang} 
	A. Kuijlaars and L. Zhang. Singular values of products of Ginibre random matrices, multiple orthogonal polynomials and hard edge scaling limits. \emph{Commun. Math. Phys.} 332 (2014), no. 2, 759--781.	
\bibitem{LeDoussalMajumdarSchehr}
	P. Le Doussal, S. Majumdar, and G. Schehr. Multicritical edge statistics for the momenta of fermions in nonharmonic traps. \emph{Phys. Rev. Lett.} 121 (2018), no. 3, 030603.
\bibitem{Mehta}
	M.L. Mehta. Random matrices. Elsevier, 2004.
\bibitem{OConnelYor} 
	N. O'Connell and M. Yor. Brownian analogues of Burke's theorem. \emph{Stoch. Process. Appl.} 96 (2001), 285--304.
\bibitem{Par17}
	S. Parekh, The KPZ Limit of ASEP with Boundary. \emph{Commun. Math. Phys.} 365, 569-649 (2019).
\bibitem{Seppalainen}
	T. Sepp\"al\"ainen. Scaling for a one-dimensional directed polymer with boundary conditions. \emph{Ann. Probab}. 40 (2012), no. 1, 19--73.
\bibitem{ZyczkowskiSommers}
	K. Zyczkowski and H.-J. Sommers. Truncations of random unitary matrices. \emph{J. Phys. A} 33 (2000), (10):2045--2057.
\end{thebibliography}
\end{document}